\documentclass[12pt, reqno]{amsart}
\usepackage{latexsym, enumerate, imakeidx,mathrsfs, color,amscd,amsmath,amsfonts,amssymb,geometry,color,amsthm, float}
\usepackage[english]{minitoc}
\usepackage[bookmarksnumbered, colorlinks, plainpages]{hyperref}
\hypersetup{colorlinks=true,linkcolor=black, anchorcolor=green, citecolor=cyan, urlcolor=red, filecolor=magenta, pdftoolbar=true}

\usepackage[T1]{fontenc}

\newtheorem{theorem}{Theorem}[section]
\newtheorem{lemma}[theorem]{Lemma}
\newtheorem{proposition}[theorem]{Proposition}
\newtheorem{corollary}[theorem]{Corollary}
\theoremstyle{definition}
\newtheorem{definition}[theorem]{Definition}

\theoremstyle{remark}
\newtheorem{remark}[theorem]{Remark}
\numberwithin{equation}{section}

\allowdisplaybreaks 

\DeclareMathOperator{\rang}{ran\,\!}

\usepackage{physics}

\makeatletter
\def\blfootnote{\xdef\@thefnmark{}\@footnotetext}

\makeatother

\begin{document}
\title[Nussbaum-Szko{\l}a Distributions in Semifinite von Neumann Algebras]{Quantum \lowercase{\texorpdfstring{\ensuremath{f}}{}}-divergences via  Nussbaum-Szko{\l}a Distributions in Semifinite von Neumann Algebras}

\author[T. Anastasiadis]{Theodoros Anastasiadis}\blfootnote{The article is part Anastasiadis' PhD dissertation, prepared at the University of South Carolina under the supervision of Androulakis.}
\address{University of South Carolina\\
Department of Mathematics\\
1523 Greene St., Columbia  SC 29208 USA}
\email{anastast@email.sc.edu}

\author[G. Androulakis]{George Androulakis}
\address{University of South Carolina\\
Department of Mathematics\\
1523 Greene St., Columbia \\
 SC 29208 USA}
\email{giorgis@math.sc.edu}

\subjclass{Primary 81P17, Secondary 46L10, 46N50}

\keywords{ $f$-divergence, semifinite von~Neumann algebras, relative modular operator, Nussbaum-Szko{\l}a distributions}
 


\begin{abstract}
In this article, we prove that the quantum \texorpdfstring{\ensuremath{f}}{}-divergence between two normal states on a semifinite von~Neumann algebra is equal to the classical \texorpdfstring{\ensuremath{f}}{}-divergence between two corresponding classical states, which are called Nussbaum-Szko{\l}a distributions. This result has been proved by the second named author and T.C.~John for normal states on the von~Neumann algebra $\mathbb{B}(\mathscr{H})$ of all bounded operators on a Hilbert space $\mathscr{H}$. We extend their result for normal states on any semifinite von~Neumann algebra, not only $\mathbb{B}(\mathscr{H})$. 
\end{abstract}

\maketitle
 \section{Introduction}

The main result of the present article is to extend the concept of Nussbaum-Szko{\l}a distributions to  semifinite von~Neumann algebras and to prove that the \texorpdfstring{\ensuremath{f}}{}-divergence between any two normal states on such algebras is equal to the  classical \texorpdfstring{\ensuremath{f}}{}-divergence between the 
corresponding Nussbaum-Szko{\l}a distributions. These
distributions were originally introduced in~\cite{Nussbaum-Szkola-2009}, in order to study error exponents in quantum hypothesis testing problems. In~\cite {Nussbaum-Szkola-2009}, the result was initially proved for the von~Neumann algebra $\mathbb{B}(\mathscr{H})$ of bounded operators on a finite-dimensional Hilbert space $\mathscr{H}$ and for the Umegaki relative entropy. Since then, these distributions have been used many times in the literature. A few instances in which they have been useful are the following: multiple state discrimination \cite{Nussbaum2011-uo,audenaert-mosonyi-2014,Li2016-tn}, Gaussian state discrimination \cite{Mosonyi-2009}, study of \texorpdfstring{\ensuremath{f}}{}-divergences between states on finite-dimensional $C^{*}$-algebras \cite{Hiai2011-qb,Hiai2017-sg}, second order asymptotics for quantum hypothesis testing \cite{Li2014-xg,Tomamichel-Hayashi-2013,Datta2013-yg}, investigation of the large derivation regime \cite{Dalai2013-yl}, coherence distillation \cite{Hayashi2021-kr}, derivation of quantum inequalities from their classical analogues \cite{Dupuis2019-vu,Androulakis2023Jul} and examination of finiteness of the Petz-R\'{e}nyi $\alpha$-relative entropy ~\cite{androulakis-john-2024, androulakis-john-2024-lmp}.
The fact that Nussbaum-Szko{\l}a distributions can be used to compute divergences was extended in \cite{Androulakis2023Jul} for any Hilbert space (finite or infinite-dimensional) and for the general \texorpdfstring{\ensuremath{f}}{}-divergence. In this article, we extend the result for normal states on any semifinite von~Neumann algebra, not only $\mathbb{B}(\mathscr{H})$. Some examples of semifinite von~Neumann algebras are $\mathbb{B}(\mathscr{H})$ for any Hilbert space $\mathscr{H}$, the space $L_{\infty} (X,\mu)$ of essentially bounded measurable functions on a measure space $(X,\Sigma,\mu)$ where $\mu$ is a $\sigma$-finite measure, and any type $II$ factor.

In order to state our main result  we introduce some non-technical concepts in this section. We begin with the definition of classical states on a measurable space.

\begin{definition}\label{defn:Classical states} Let $(X,\Sigma)$ be a measurable space, $\nu$ be a measure on $(X,\Sigma)$ and $p:X\to[0,+\infty)$ be a measurable function such that   
$$   
\int_{X} p d\nu=1. 
$$
The probability measure $P$ that has density $p$ with respect to the measure $\nu$, i.e. 
$$
P(Y)=\int_{Y}pd\nu \quad \text{for all }Y\in \Sigma,
$$
is called a {\bf{classical state}} on $(X,\Sigma)$. If there is a countable set $Y \in \Sigma$ such that 
$P\left(X\backslash Y\right)=0 $, then $P$ is a \textbf{discrete} classical state.  
\end{definition} 
Next, we introduce the definition of quantum states.

\begin{definition}\label{defn:quantum states} If $\mathscr{A}$ is a unital $C^*$-algebra, then a linear functional $\phi:\mathscr{A}\to \mathbb{C} $ is called a {\bf{quantum state}}, or simply a {\bf{state}}, on $\mathscr{A}$ if and only if $\phi$ is {\bf{positive}} (i.e. $\phi (a)\geq 0$ for all $a\geq 0$ in $\mathscr{A}$) and $\phi(1)=1$.  
\end{definition}

In our work, we will focus on normal states on a semifinite von~Neumann algebra (see Definition~\ref{defn:normal states}). Note that every von~Neumann algebra is a unital  $C^*$-algebra and that  normal states are special cases of quantum states.

Divergences are tools which are used to discriminate two classical or normal quantum states. Examples of divergences include
the Kullback-Leibler divergence \cite{kullback-leibler-1951}, the Umegaki relative entropy \cite{umegaki-1962}, 
the R\'{e}nyi \cite{renyi1961} and  Petz-R\'{e}nyi $\alpha$-relative entropy \cite{petz1986quasientropy}, the Hellinger divergence, 
the $\chi^2$ divergence and the total variation (see \cite{Liese-Vajda-2006, Csiszar-Sheilds-2004} for the last three divergences). 
The notion of \texorpdfstring{\ensuremath{f}}{}-divergence for a convex  function $f$ generalizes all the above special cases of divergences. 
A rich exposition of  \texorpdfstring{\ensuremath{f}}{}-divergences between classical states can be viewed in \cite{Liese-Vajda-2006, Csiszar-Sheilds-2004}.  
The \texorpdfstring{\ensuremath{f}}{}-divergence between two normal states on a von~Neumann algebra  was introduced by Petz in~\cite{petz1986quasientropy, Petz1985}, by extending Araki's definition for the relative entropy (see \cite{araki1976relative,araki1977relative}). 

The main result of our article is the following theorem.

\begin{theorem}[Main Theorem] \label{maintheorem}
Let $\phi$, $\omega$ be two normal states on a semifinite von~Neumann algebra $\mathscr{M}$ in standard form. Then, there exist a $\sigma$-finite measure space $(X,\Sigma,\nu)$ and functions $f_\phi$, $f_\omega:X\to [0,+\infty)$ (where both depend on both states $\phi$ and $\omega$) which are densities (with respect to $\nu$) of some classical states $P_\phi$ and $P_\omega$, respectively, satisfying \[S_{f}\left(\phi\|\omega\right)=D_{f}\left(P_\phi\|P_\omega\right)\] for every convex function $f:(0,+\infty)\to \mathbb{R} $ where
$S_{f}(\phi\|\omega)$, $D_{f}\left(P_\phi\|P_\omega\right)$ respectively denote the quantum  \texorpdfstring{\ensuremath{f}}{}-divergence between $\phi$ and $\omega$ and the classical  \texorpdfstring{\ensuremath{f}}{}-divergence between $P_\phi$ and $P_\omega$. We call the classical states $P_\phi$ and $P_\omega$  {\bf{Nussbaum-Szko{\l}a distributions}} associated with the states $\phi$ and $\omega$.
\end{theorem}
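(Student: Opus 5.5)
We will work in the standard form $(\mathscr{M},L^2(\mathscr{M},\tau),J,L^2_+)$ built from a fixed faithful normal semifinite trace $\tau$ on $\mathscr{M}$. By the noncommutative Radon--Nikodym theorem, the normal states $\phi,\omega$ are given by positive $\tau$-measurable density operators $h_\phi,h_\omega\in L^1(\mathscr{M},\tau)$ with $\phi(x)=\tau(h_\phi x)$ and $\omega(x)=\tau(h_\omega x)$. Their vector representatives are $\xi_\phi=h_\phi^{1/2}$ and $\xi_\omega=h_\omega^{1/2}$ in $L^2(\mathscr{M},\tau)$. Petz's quasi-entropy is
\[
S_f(\phi\|\omega)=\langle \xi_\omega, f(\Delta_{\phi,\omega})\xi_\omega\rangle ,
\]
with the usual conventions on the support of $\omega$ and the contribution $f'(\infty)\phi(1-s(\omega))$ coming from the singular part.

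The first key step is to identify the relative modular operator concretely. We expect
\[
\Delta_{\phi,\omega}=L_{h_\phi}R_{h_\omega^{-1}},
\]
where $L_{h_\phi}\eta=h_\phi\eta$ and $R_{h_\omega^{-1}}\eta=\eta h_\omega^{-1}$ are strongly commuting positive self-adjoint operators on $L^2(\mathscr{M},\tau)$, the inverse being taken on the support. To verify this, we check it on the core $\{x\xi_\omega: x\in\mathscr{M}\}$:
\[
\Delta^{1/2}x h_\omega^{1/2}=h_\phi^{1/2}x,
\]
which matches $\|S x\xi_\omega\|^2=\|x^*\xi_\phi\|^2=\tau(h_\phi^{1/2}xx^*h_\phi^{1/2})$. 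We then have to argue that the closure agrees, using the closedness of the polar part of $S_{\phi,\omega}$.

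Next, since $L$ and $R$ commute, we take the spectral measures $E^L$ of $L_{h_\phi}$ and $E^R$ of $R_{h_\omega}$, and form the product spectral measure $E$ on $X=(0,\infty)^2$, or $[0,\infty)^2$ to track supports. Define $\nu(A)=\langle\cdot\rangle$-type traces as follows: the map $A\times B\mapsto \tau(e^{\phi}_A e^{\omega}_B)$, where $e^\phi_A=\chi_A(h_\phi)$ and $e^\omega_B=\chi_B(h_\omega)$, should extend to a measure $\nu$ on $X$. This is because
\[
\langle \eta, E^L(A)E^R(B)\eta\rangle=\tau(e^\phi_A\,\eta\, e^\omega_B\,\eta^*)
\]
is positive and $\sigma$-additive in each variable. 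The measure $\nu$ is $\sigma$-finite because $\tau(e^\phi_{[1/n,\infty)})\le n$ by $h_\phi\in L^1$. We then set $f_\phi(s,t)=s$ and $f_\omega(s,t)=t$, mirroring Nussbaum--Szko{\l}a's $p_i q_j$ replaced by eigenvalues. Computing $\langle\xi_\omega,g(L,R)\xi_\omega\rangle$ with the joint spectral measure gives
\[
\int g(s,t)\,d\mu_{\xi_\omega}(s,t),\qquad d\mu_{\xi_\omega}=t\,d\nu,
\]
since $\mu_{\xi_\omega}(A\times B)=\tau(e^\phi_A h_\omega^{1/2}e^\omega_B h_\omega^{1/2})=\tau(e^\phi_A e^\omega_B h_\omega)$. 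Similarly $\mu_{\xi_\phi}=s\,d\nu$. This shows $\int f_\phi\,d\nu=\tau(h_\phi)=1$ and likewise for $\omega$.

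Finally, we plug in $g(s,t)=f(s/t)$ on $\{t>0\}$. This yields
\[
\langle\xi_\omega,f(\Delta_{\phi,\omega})\xi_\omega\rangle=\int_{t>0} t\,f(s/t)\,d\nu,
\]
which is exactly the classical $D_f(P_\phi\|P_\omega)$ restricted to where $f_\omega>0$. The region $\{t=0,s>0\}$ has $P_\phi$-mass $\tau(h_\phi(1-s(h_\omega)))=\phi(1-s(\omega))$, which matches the $f'(\infty)$ term on both sides, and $\{s=0\}$ matches $f(0^+)$ similarly. Handling the integrability of $f^-$ and the $\pm\infty$ cases via the convexity bound $f(x)\ge a+bx$ is routine.

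The main obstacle will be the rigorous identification $\Delta_{\phi,\omega}=L_{h_\phi}R_{h_\omega^{-1}}$ for unbounded, non-faithful densities, together with the construction of the bimeasure $\nu$ as a genuine measure. For the latter we plan to use Carathéodory on rectangles, with positivity from $\tau(e_A\,e_B)\ge0$ and countable additivity from normality of $\tau$.
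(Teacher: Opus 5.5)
Your strategy works, and it builds $\nu$ differently from the paper. Three points need repair first.

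(i) The $\sigma$-finiteness claim is false on $[0,+\infty)^2$. The bounds $\tau(e^\phi_{[1/n,\infty)})\le n$ and $\tau(e^\omega_{[1/n,\infty)})\le n$ only cover $[0,+\infty)^2\setminus\{(0,0)\}$. At the origin, $\nu(\{(0,0)\})=\tau(e^\phi_{\{0\}}e^\omega_{\{0\}}e^\phi_{\{0\}})$ equals $+\infty$, for example for $\mathscr{M}=\mathbb{B}(\mathscr{H})$ with $\dim\mathscr{H}=\infty$ and $\phi=\omega$ a vector state. Since both densities vanish there, set $\nu(\{(0,0)\}):=0$, exactly as the paper sets $\nu=0$ on $\{f_\phi=f_\omega=0\}$. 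Working on $(0,\infty)^2$ instead would lose the $f(0^+)$ and $f'(+\infty)$ regions. Also write $\tau(e^\phi_Ae^\omega_Be^\phi_A)$, since $e^\phi_Ae^\omega_B$ need not lie in $L_1(\mathscr{M},\tau)$.

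(ii) Normality of $\tau$ gives $\sigma$-additivity in each variable separately, i.e.\ a positive bimeasure. That alone does not give countable additivity on the semiring of rectangles, which Carath\'eodory needs. The simplest fix uses the joint spectral measure $E$ of $\pi(h_\phi)$ and $\pi'(h_\omega)$. For Borel $C\subseteq[1/n,\infty)\times[0,\infty)$ set $\nu(C)=\langle e^\phi_{[1/n,\infty)},E(C)\,e^\phi_{[1/n,\infty)}\rangle$; this is legitimate because $e^\phi_{[1/n,\infty)}\in L_2(\mathscr{M},\tau)$. Do the same on $\{0\}\times[1/n,\infty)$ with $e^\omega_{[1/n,\infty)}$. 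These definitions are consistent in $n$ and reproduce your rectangle formula.

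(iii) Your core check drops a support projection. Since $\overline{S_{\phi,\omega}}(x\xi_\omega)=s_{\mathscr{M}}(\omega)x^*\xi_\phi$, you get $\Delta_{\phi,\omega}^{1/2}(x\xi_\omega)=\xi_\phi x\,s_{\mathscr{M}}(\omega)$, not $\xi_\phi x$. Moreover, norm agreement on $\mathscr{M}\xi_\omega+(1-s_{\mathscr{M}'}(\omega))L_2(\mathscr{M},\tau)$ identifies $\Delta_{\phi,\omega}$ only once this set is known to be a core for $\overline{\pi(\xi_\phi)\pi'(w(\xi_\omega))}$. That is exactly the content of Theorem~\ref{thm:Relative modular operator}; the paper cites it, and you should too. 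Relatedly, the first term of $S_f$ is the integral over $\{s>0,t>0\}$, where $s\,w(t)>0$, not over $\{t>0\}$. The strip $\{s=0<t\}$ is what produces $f(0^+)\,\omega(1-s_{\mathscr{M}}(\phi))$.

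With these repairs the proof goes through. Both arguments rest on the same inputs. Theorem~\ref{thm:Relative modular operator} makes $\Delta_{\phi,\omega}$ the function $s\,w(t)$ of the strongly commuting pair $\pi(h_\phi)$, $\pi'(h_\omega)$. The identity $s_{\mathscr{M}}(\omega)=e^\omega_{(0,\infty)}$ (Lemma~\ref{lem: s M of omega}) is used tacitly in your $f'(+\infty)$ term.

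The paper passes to the multiplication form of the spectral theorem, via a unitary $U$ onto $L_2(X,\mu)$. It defines $\nu$ through densities: $|U\xi_\omega|^2/f_\omega\,d\mu$ on $\{f_\omega>0\}$ and $|U\xi_\phi|^2/f_\phi\,d\mu$ on $\{f_\phi>0=f_\omega\}$. That $\nu$ is a measure is then automatic. The costs are that normalizing $P_\phi$ needs Claims 2--3 of Proposition~\ref{prop: f phi,omega are distributions}, the first term needs a truncation argument, and the result depends on $U$ (Remark~\ref{rmk: no uniqueness of NS distributions}).

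Off the origin, your $\nu$ is the image of the paper's under $(\zeta_1,\zeta_2,n)\mapsto(\zeta_1^2,\zeta_2^2)$, but it is defined intrinsically. The resulting distributions $P_\phi(A\times B)=\tau(h_\phi e^\phi_Ae^\omega_B)$ and $P_\omega(A\times B)=\tau(e^\phi_Ae^\omega_Bh_\omega)$ are canonical. They are the literal analogues of $p_i\,\mathrm{Tr}(P_iQ_j)$ and $q_j\,\mathrm{Tr}(P_iQ_j)$. Because they are the spectral measures of $\xi_\phi$ and $\xi_\omega$, normalization is immediate, and the first term is a change of variables under $(s,t)\mapsto s\,w(t)$. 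The price is the measure construction in (ii).
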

The proof of Theorem~\ref{maintheorem} can be found  towards the end of Section 4 (after the proof of Proposition~\ref{prop: f phi,omega are distributions}).

This result is a  far reaching generalization of \cite[Theorem~3.8]{Androulakis2023Jul} where only states on the von~Neumann algebra $\mathbb{B}(\mathscr{H})$ for a Hilbert space $\mathscr{H}$ were 
considered. States on $\mathbb{B}(\mathscr{H})$ correspond to trace-class operators on $\mathscr{H}$ which are compact and have necessarily countable spectrum. 
This simplifies the study of the relative modular operator. In addition, this causes the Nussbaum-Szko{\l}a distributions on $\mathbb{B}(\mathscr{H})$ to be discrete. The contribution of our article is that we present  Nussbaum-Szko{\l}a distributions for the first time in  semifinite von~Neumann algebras. 

An essential tool for defining $f$-divergences between normal states is the relative modular operator, with respect to these states, introduced by Araki in~\cite{araki1976relative} and ~\cite{araki1977relative}. In the case of a general von~Neumann algebra, there is no practical formula for the relative modular operator and this is an important obstacle when dealing with the quantum \texorpdfstring{\ensuremath{f}}{}-divergence. However, {\L}uczak, Pods{\k{e}}dkowska and Wieczorek in \cite{Luczak2022Sep} discovered such a formula for the relative modular operator in the case of  semifinite von~Neumann algebras. This is the reason why in the present article we prove our results for normal states on a semifinite von~Neumann algebra and not a general von~Neumann algebra.

The structure of our article is as follows. In Section 2 we provide a review of general prerequisites on von~Neumann algebras and spectral theory which include definitions about traces, measurable operators, non-commutative $L_p$-spaces, the standard form of von~Neumann algebras, the Spectral Theorem and the Borel functional calculus. In Section 3 we initially  review some recent results about the relative modular operator and then use the multiplicative form of the Spectral Theorem for strongly commuting normal operators, to create a bridge between the non-commutative and the commutative $L_2$-spaces. We use that bridge in Theorem~\ref{prop: application of the multiplication form of spectral theorem} where we prove a key formula  for the relative modular operator with respect to normal states on a semifinite von~Neumann algebra.  In Section 4 we provide  the definition of the classical and quantum $f$-divergences and prove a series of results that lead to the proof of our main theorem towards the end of the section. In Section 5 we provide a few applications of our result. In Section 6 we discuss some closing remarks.

\section{Prerequisites on von~Neumann algebras and spectral theory}
In this section, we review some basic definitions and theorems that will be used in the remainder of the article.
We begin  with the definition of a trace on a von~Neumann algebra, which is essential in the definition of  semifinite von~Neumann algebras. 

\begin{definition}\label{defn:Traces} Let $\mathscr{M}^{+}$ be the set of positive operators in the   von~Neumann algebra $\mathscr{M}$. A {\bf{trace}} on $\mathscr{M}$ is a function $\tau: \mathscr{M}^{+}\to [0,+\infty]$ such that \begin{enumerate}
    \item $\tau(ta)=t\tau(a)$ for all $a\in \mathscr{M}^{+}$ and $t\geq0$ where $0(+\infty):=0$,
    \item $\tau(a+b)=\tau(a)+\tau(b)$ for all $ a$, $b\in \mathscr{M}^{+}$,
    \item $\tau(a^{*}a)=\tau(aa^{*})$ for all $a\in \mathscr{M}$.
\end{enumerate}
The trace $\tau$ is called
\begin{enumerate}
\item {\bf{semifinite}} if the set $\mathfrak{N}_{\tau}:=\{a\in \mathscr{M}: \tau(a^{*}a)<+\infty\}$ is $\sigma$-weakly dense in $\mathscr{M}$, 
\item {\bf{faithful}} if $\tau(a^{*}a)=0\Rightarrow a=0$ for all $ a\in \mathscr{M}$, 
\item {\bf{normal}} if $\tau(\sup_{i} \{a_i\})=\sup_{i}\{\tau(a_i)\}$ for every bounded increasing net $(a_i)_i\subseteq \mathscr{M}^{+}$.
\end{enumerate}
\end{definition} 

\begin{definition}\label{defn:Semifinite von Neumann algebra} A von~Neumann algebra $\mathscr{M}$ is called {\bf{semifinite}} if  it possesses a faithful semifinite normal trace. 
\end{definition}

The non-commutative $L_p$-spaces play an important role in our analysis. We review them in Definition~\ref{defn:noncommutative Lp}. Before that, we need two more definitions. 

\begin{definition}\label{defn:affiliated operators} Let $\mathscr{K}$ be a Hilbert space, $\mathscr{M}\subseteq \mathbb{B}(\mathscr{K})$ be a von~Neumann algebra and $D(T)$ be the domain of a linear operator $T:D(T)\subseteq \mathscr{K}\to \mathscr{K}$. We say that $T$ is {\bf{affiliated}} with $\mathscr{M}$ if $UTU^{-1}=T$ for every unitary $U\in \mathscr{M}'$, where  $\mathscr{M}':=\{x\in \mathbb{B}(\mathscr{K}):xm=mx,\text{ for all } m\in \mathscr{M}\}$ is the commutant of $\mathscr{M}$. The set of all operators affiliated with $\mathscr{M}$ is denoted by $\mathscr{M}_\eta$.
\end{definition}
 Affiliated operators are essential for introducing  measurable operators in the following definition. 

\begin{definition}\label{defn:Measurable operators} Let $\mathscr{K}$ be a Hilbert space and $\mathscr{M}\subseteq \mathbb{B}(\mathscr{K})$  be a semifinite von~Neumann algebra with a faithful semifinite normal trace $\tau$. By $\mathscr{M}_{proj}$, we denote the set of orthogonal projections that belong to $\mathscr{M}$, i.e. $\mathscr{M}_{proj}:= \{p\in \mathscr{M}: p=p^2=p^{*}\}$. For $\epsilon$, $\delta>0$ we define the set 
\[N(\epsilon,\delta):= \{T\in \mathscr{M}_{\eta}: \exists \text{ } p\in \mathscr{M}_{proj} \text{ such that } \rang  (p) \subseteq D(T), \text{ }\|Tp\|\leq \epsilon, \text{ }\tau(1-p)\leq \delta \}. \]
A closed densely-defined operator $T$ affiliated with $\mathscr{M}$ is called {\boldmath{$\tau$}}-{\bf{measurable}},  if for every $\delta>0$ there exists $\epsilon>0$ such that $T\in N(\epsilon,\delta)$. We denote the set of $\tau$-measurable operators by $\widetilde{\mathscr{M}}$ (see \cite{Nelson1974,Hiai2021Apr}).
\end{definition}
According to \cite[Proposition 4.10]{Hiai2021Apr}, $\widetilde{\mathscr{M}}$ is a $*$-algebra with respect to the adjoint $^*$, the strong sum $\overline{T_1+T_2}$ and the strong product $\overline{T_1T_2}$. For every $T_1$, $T_2\in \widetilde{\mathscr{M}}$, we will use the common convention that $T_1+T_2$ and $T_1T_2$ denote the strong sum $\overline{T_1+T_2}$ and the strong product $\overline{T_1T_2}$ respectively.

\begin{definition}\label{defn:noncommutative Lp} Let $\mathscr{M}$ be a semifinite von~Neumann algebra with a faithful semifinite normal  trace $\tau$ and $p\in (0,+\infty]$. The {\bf{non-commutative}} {\boldmath{$L_p$}}{\bf{-space}} associated with $(\mathscr{M},\tau)$ is denoted by $L_p(\mathscr{M},\tau)$ and is defined as 
\begin{align}\label{eq: definition of Lp} L_{p}(\mathscr{M},\tau):&=\{a\in \widetilde{\mathscr{M}}: \tau(|a|^p)<+\infty\} \quad \text{ if}\quad  0<p<+\infty \text{ and } \\
 L_{\infty}(\mathscr{M},\tau):&=\mathscr{M}.
\end{align}
\end{definition}
For $p\in [1,+\infty]$, $L_{p}(\mathscr{M},\tau)$ is a Banach space with respect to the norm \[\|a\|_p:=\begin{cases}
\tau(|a|^p)^{\frac{1}{p}} & \text{if } 1\leq p<+\infty\\
\quad \|a\| & \text{if }  p=+\infty
\end{cases}
.\] If $p=2$, $L_{2}(\mathscr{M},\tau)$ is a Hilbert space with inner product defined by the formula 
\begin{align}\label{eq: inner product of L2}
\braket{a}{b}_{L_2(\mathscr{M},\tau)} =\tau(a^{*}b)   \text{   for all } a,\text{ } b\in L_{2}(\mathscr{M},\tau).\end{align}
To simplify the notation, we occasionally denote the inner product of $L_2(\mathscr{M},\tau)$ by $\braket{\cdot}{\cdot}$ instead of $\braket{\cdot}{\cdot}_{L_2(\mathscr{M},\tau)}$.
The notation $\braket{\cdot}{\cdot}_{L_2(\mathscr{M},\tau)}$ is kept when different inner products appear nearby.

 \begin{remark}
[Generalized H{\"o}lder's inequality  for the non-commutative $L_p$-spaces]\cite[Proposition 4.43]{Hiai2021Apr} Let $\mathscr{M}$ be a semifinite von~Neumann algebra with a faithful semifinite normal  trace $\tau$ and let $p$, $q$, $r\in (0,+\infty]$   with $\frac{1}{p}+\frac{1}{q}=\frac{1}{r}$.  Let $a\in L_p(\mathscr{M},\tau)$ and $b\in L_q(\mathscr{M},\tau)$. Then, $ab\in L_r(\mathscr{M},\tau)$ and \[\|ab\|_{r}\leq \|a\|_{p}\|b\|_{q}.\]
  For example, if $a$, $b\in L_2(\mathscr{M},\tau)$ then $ab\in L_1(\mathscr{M},\tau)$, and if $c\in L_p(\mathscr{M},\tau)$ and $m\in \mathscr{M}$, then   $mc\in L_p(\mathscr{M},\tau)$ for any $p\in (0,+\infty]$.
\end{remark}
\begin{remark} Observe that Equation~\eqref{eq: inner product of L2}   requires an extension of the trace $\tau$ to elements of $L_1(\mathscr{M},\tau)$. This extension is possible and moreover, 
the map $\tau: L_1(\mathscr{M}, \tau) \to \mathbb{C}$ is a positive linear functional  
(see \cite[Proposition 4.37]{Hiai2021Apr}) i.e.
\begin{equation*}
    \begin{gathered}
\tau (\lambda a + b) = \lambda \tau (a) + \tau (b) \text{ for all } 
\lambda \in \mathbb{C} \text{ and } a,\ b \in L_1(\mathscr{M}, \tau), \text{ and}\\
\tau(c) \geq 0 \text{ for all positive self-adjoint operators } c \text{ in } L_1(\mathscr{M}, \tau),
\end{gathered}
\end{equation*}
and satisfies the cyclic property (see \cite[Proposition 4.45]{Hiai2021Apr}) i.e.,
\begin{equation} \label{Eq:cyclicitity}
\tau (ab) = \tau (ba)
\end{equation}
 for all $a \in L_p(\mathscr{M}, \tau)$  and $b \in L_q(\mathscr{M},\tau)$  where $ p,\ q \in [1, \infty]$  with $\frac1p +\frac1q =1$.
For the details on how this extension can be done as well as a comprehensive introduction to measurable operators and non-commutative $L_p$-spaces, we refer the reader to \cite{Nelson1974, Hiai2021Apr,Yeadon1975Jan,TakesakiII,Terp1981}. 
\end{remark}

Next, we review normal states and the predual of a von~Neumann algebra. The reason why we will restrict ourselves to normal states is that the quantum \texorpdfstring{\ensuremath{f}}{}-divergence is defined  between two normal positive functionals.  

\begin{definition}\label{defn:normal states} Let $\mathscr{M}$ be a von~Neumann algebra. A positive linear functional $\phi$ on $\mathscr{M}$   is {\bf{normal}} if $\phi(\sup_{i} \{a_i\})=\sup_{i}\{\phi(a_i)\}$ for every bounded increasing net $(a_i)_i\subseteq \mathscr{M}$. In particular, $\phi$ is called a {\bf{normal state}} on $\mathscr{M}$ if it is a normal positive linear functional with $\phi(1)=1$. 
\end{definition}
For any von~Neumann algebra $\mathscr{M}$ the {\bf{predual}} $\mathscr{M}_*$ is the set of all linear functionals on $\mathscr{M}$ that are $\sigma$-weakly continuous. A positive linear functional $\phi$ on $\mathscr{M}$ belongs to $\mathscr{M}_{*}$ if and only if $\phi$ is normal (see \cite[Theorem 3.6.4]{Pedersen2018}). In other words, $\phi$ is a normal state if and only if $\phi\in \mathscr{M}_{*}^+$ and $\phi(1)=1$ where $\mathscr{M}_{*}^{+}:=\{x\in \mathscr{M}_* : x \text{ is }  \text{a } \text{positive } \text{functional}\}$ (the {\bf{positive cone}} of the predual $\mathscr{M}_*$).

We review the standard form of a von~Neumann algebra. Recall that a cone in a Hilbert space $\mathscr{H}$ is a convex set that is invariant under scalar multiplication by non-negative scalars. Recall that a map $J:\mathscr{H}\to \mathscr{H}$ is called:
\begin{itemize}
    \item {\bf{conjugate-linear}} if it is additive and $J(\lambda \xi)=\overline{\lambda}J(\xi)$ for  all $\xi\in \mathscr{H} $  and $ \lambda \in \mathbb{C}$, 
    \item {\bf{isometric}} if $\|J(\xi)\|=\|\xi\|$ for all $\xi\in \mathscr{H}$,
    \item {\bf{involution}} if $J^2=1$.
\end{itemize}
 According to Haagerup \cite[Theorem~1.6]{Haagerup1975}, every von~Neumann algebra is isomorphic to a von~Neumann algebra $\mathscr{M}$ on a Hilbert space $\mathscr{H}$, such that there exists a map $J:\mathscr{H}\to \mathscr{H}$ which is a conjugate-linear isometric involution  and there exists a closed  cone $P$ in $\mathscr{H}$ which is  self-dual (i.e. $P=\{\xi\in \mathscr{H}:\braket{\xi}{\eta}\geq 0 \text{ for all } \eta \in P\}$) such that the following four properties are satisfied:
\begin{enumerate}
    \item $J\mathscr{M}J=\mathscr{M}'$,
    \item $JxJ=x^*$ for all $x\in \mathscr{M}\cap \mathscr{M}'$,
    \item $J(\xi)=\xi$ for all $\xi\in P$ and
    \item $xJxJ(P)\subseteq P$ for all $x\in\mathscr{M}$.
\end{enumerate}
The quadruple $(\mathscr{M},\mathscr{H},J,P)$ is called a standard form of the von~Neumann algebra and plays an important role in defining the relative modular operator, which in turn is needed to define the quantum \texorpdfstring{\ensuremath{f}}{}-divergence. 

In \cite[Theorem 2.3]{Haagerup1975}, Haagerup proves that the standard form of a von~Neumann algebra is unique in the following sense. If $(\mathscr{M}, \mathscr{H}, J, P)$ and $(\tilde{\mathscr{M}}, \tilde{\mathscr{H}}, \tilde{J}, \tilde{P})$ are two standard forms and $\Phi: \mathscr{M}\to \tilde{\mathscr{M}}$ is a $^*$-isomorphism, then there exists a unique unitary $u:\mathscr{H}\to \tilde{\mathscr{H}}$ such that 
\begin{enumerate}
    \item $\Phi(x)=uxu^{-1}$ for all $x\in \mathscr{M}$,
    \item $\tilde{J}=uJu^{-1}$ and
    \item $\tilde{P}=u(P)$.
\end{enumerate}

\begin{remark}\label{standard form in semifinite and left multiplication}
  In \cite[Remark 4.52]{Hiai2021Apr}, \cite[Theorem 36]{Terp1981},  a standard form is presented for any semifinite von~Neumann algebra $\mathscr{M}$ with a faithful semifinite normal trace $\tau$. This standard form is $\left(\pi(\mathscr{M}),L_2(\mathscr{M},\tau),J,L_2(\mathscr{M},\tau)_{+}\right)$, where the von~Neumann algebra $\mathscr{M}$  acts on the Hilbert space $L_2(\mathscr{M},\tau)$ by the left multiplication {$\pi:\mathscr{M}\to \mathbb{B}\left(L_2(\mathscr{M},\tau)\right)$} which is defined as
    \begin{align}\label{Eq: Left Multiplication} \pi(x)(\xi):=x\xi \text{ for all } x\in \mathscr{M} \text{ and } \xi\in L_2(\mathscr{M},\tau).
    \end{align} 
    The conjugate-linear  involution $J: L_2(\mathscr{M},\tau)\to L_2(\mathscr{M},\tau)$ is defined as 
    \begin{align*}
        J(\xi):&=\xi^*  \text{ for all } \xi\in L_2(\mathscr{M},\tau) 
        \end{align*}
         and the closed self-dual cone is defined as
        \begin{align*}
        L_2(\mathscr{M},\tau)_+:&=\{\xi\in L_2(\mathscr{M},\tau): \xi \text{ is a positive self-adjoint operator}\}.
    \end{align*}
  \end{remark}

   \begin{remark}
   Throughout the article, we consider every semifinite von~Neumann algebra $\mathscr{M}$ in the standard form $\left(\pi(\mathscr{M}),L_2(\mathscr{M},\tau),J,L_2(\mathscr{M},\tau)_{+}\right)$. The von~Neumann algebra $\mathscr{M}$ is isometric to $\pi(\mathscr{M})$ (see \cite[Section 6.11]{StratilaZsido2019}) and for the remainder of the article we identify $\mathscr{M}$ with $\pi(\mathscr{M})$, which is a common practice in the literature. 
\end{remark}

\begin{remark}\label{existence of hphi,homega} According to \cite[Theorem 3.12]{Hiai2021Apr} and \cite[Lemma 2.10]{Haagerup1975}, if $(\mathscr{M},\mathscr{H},J,P)$ is a standard form of a von~Neumann algebra, then for every $\omega \in \mathscr{M}_{*}^{+}$ there exists a unique $\xi_\omega\in P$ such that $\omega(x)=\braket{\xi_\omega}{x (\xi_\omega)}_{\mathscr{H}}$,   for all $x\in \mathscr{M}$. The vector  $\xi_\omega$ is called the {\bf{vector representative}} of $\omega$. In particular, if $\mathscr{M}$ is a semifinite von~Neumann algebra with a faithful semifinite normal trace $\tau$, then we have $\xi_\omega \in L_2(\mathscr{M},\tau)_+$. Using Equations~\eqref{eq: inner product of L2}, \eqref{Eq:cyclicitity} and the fact that $\xi_\omega $ is self-adjoint, we obtain
\begin{align}\label{eq: equation for vector representative} \omega(x)=\braket{\xi_\omega}{x\xi_\omega}_{L_2(\mathscr{M},\tau)}=\tau(\xi_\omega x \xi_\omega)=\tau(h_\omega x)=\tau(xh_\omega) \text{ for all } x\in \mathscr{M},
\end{align}
where  $h_\omega:=\xi_\omega \xi_\omega^{*}=\xi_\omega^2 $ is a positive self-adjoint operator that belongs to $L_1(\mathscr{M},\tau)$. In particular, $\tau(h_\omega)=\omega(1)$ and so if $\omega$ is a state we know that $\tau(h_\omega)=1$.
\end{remark}

We continue with a review on the Spectral Theorem  for self-adjoint unbounded operators. 

\begin{theorem}\cite[Theorem 5.7]{Schmudgen} \label{thm: spectral decomposition}
  Let $\mathscr{K}$ be a Hilbert space and let $T:D(T)\subseteq \mathscr{K}\to \mathscr{K}$ be a self-adjoint operator. Then, there exists a unique spectral measure $E_T(\cdot)$ on the Borel $\sigma$-algebra $Borel(\mathbb{R})$ such that $T=\int_{(-\infty,+\infty)} \lambda dE_{T}(\lambda).$ This formula is called the {\bf{spectral decomposition}} of $T$. 
\end{theorem}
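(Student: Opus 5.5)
The plan is to reduce the unbounded case to the spectral theorem for a bounded normal (in fact unitary) operator via the Cayley transform, and then to transport the resulting spectral measure back to $\mathbb{R}$. Since $T$ is self-adjoint, $T\pm i$ are bijections of $D(T)$ onto $\mathscr{K}$ with $\|(T\pm i)\xi\|^2=\|T\xi\|^2+\|\xi\|^2$. Hence the Cayley transform
\[
U:=(T-i)(T+i)^{-1}
\]
is a unitary operator on $\mathscr{K}$. Moreover, $1-U=2i(T+i)^{-1}$ is injective, so $1$ is not an eigenvalue of $U$, and one recovers
\[
T=i(1+U)(1-U)^{-1}
\]
with $D(T)=\rang(1-U)$.

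The next step is to construct a spectral measure $F$ for $U$ on the Borel sets of the unit circle $\mathbb{T}$. For each $\xi\in\mathscr{K}$, the map $p\mapsto\braket{\xi}{p(U,U^*)\xi}$ on trigonometric polynomials is a positive functional; positivity follows from $p=|q|^2$ on $\mathbb{T}$ for nonnegative $p$, by the Fej\'er--Riesz theorem, or from the continuous functional calculus. By the Riesz representation theorem it is given by a finite positive Borel measure $\mu_\xi$. Polarization gives complex measures $\mu_{\xi,\eta}$, and for each Borel set $\Delta$ the sesquilinear form $(\xi,\eta)\mapsto\mu_{\xi,\eta}(\Delta)$ is bounded, so it defines an operator $F(\Delta)$. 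One then checks that $F$ is projection valued and countably additive in the strong operator topology, and that $U=\int_{\mathbb{T}} z\,dF(z)$. These checks extend multiplicativity from polynomials to bounded Borel functions by a monotone class argument. Since $\ker(1-U)=\{0\}$, we get $F(\{1\})=0$.

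Finally, let $\kappa:\mathbb{T}\setminus\{1\}\to\mathbb{R}$ be the homeomorphism $\kappa(z)=i(1+z)(1-z)^{-1}$, and define $E_T(\Delta):=F(\kappa^{-1}(\Delta))$ for $\Delta\in Borel(\mathbb{R})$. This is a spectral measure on $\mathbb{R}$ because $F(\{1\})=0$. The main obstacle is verifying that the unbounded integral $\int\lambda\,dE_T(\lambda)$, defined on $\{\xi:\int\lambda^2\,d\braket{\xi}{E_T(\lambda)\xi}<\infty\}$, coincides with $T$ together with its domain. For this I would show, by the functional calculus for $F$, that $\int \kappa\,dF$ satisfies $\left(\int\kappa\,dF\right)(1-U)=i(1+U)$ on all of $\mathscr{K}$. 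This gives $T\subseteq\int\kappa\,dF$. Both operators are self-adjoint, and a self-adjoint operator has no proper self-adjoint extension, so equality follows. Uniqueness follows by running the same argument backwards: any spectral measure $E$ with $T=\int\lambda\,dE$ yields, through $\kappa^{-1}$, a spectral measure for $U$. Spectral measures of a unitary are determined by their moments $\int z^n\,dF$, because trigonometric polynomials are dense in $C(\mathbb{T})$, and hence $E=E_T$.
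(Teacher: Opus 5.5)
Your proof is correct. Note that the paper gives no proof of this statement: it quotes it as a black box from Schm\"udgen and uses it only to set up the Borel functional calculus.

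Your argument is von Neumann's classical Cayley-transform proof. The key step works as you describe. The identity $\bigl(\int\kappa\,dF\bigr)(1-U)=i(1+U)$ on all of $\mathscr{K}$ follows from the product rule $\Phi(f)\Phi(g)\subseteq\Phi(fg)$, where $\Phi(f)=\int f\,dF$ and $D(\Phi(f)\Phi(g))=D(\Phi(g))\cap D(\Phi(fg))$, because both $1-z$ and $\kappa(z)(1-z)=i(1+z)$ are bounded. Maximality of self-adjoint operators then gives equality, domains included.

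As far as I recall, the cited source argues differently. It reduces to the spectral theorem for bounded \emph{self-adjoint} operators via the bounded transform $Z_T=T(I+T^2)^{-1/2}$. Its spectral measure lives on $[-1,1]$ with no atoms at $\pm1$, since $I-Z_T^2=(I+T^2)^{-1}$ is injective, and it is transported to $\mathbb{R}$ by $t\mapsto t(1-t^2)^{-1/2}$.

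The two routes trade off as follows:
\begin{itemize}
\item Your route gives an explicit algebraic inversion $T=i(1+U)(1-U)^{-1}$, needs no operator square roots, and makes the atom at $1$ vanish for free.
\item The price is that you need the spectral theorem for unitaries, which you build by hand from trigonometric polynomials via Fej\'er--Riesz and the Riesz representation theorem.
\item The bounded-transform route needs only polynomials on an interval.
\item More relevant for this paper, the bounded transform carries over essentially unchanged to normal operators and to finite families of strongly commuting normal operators. That is exactly the multiplication-form theorem the paper invokes in Section~3. The Cayley transform, by contrast, is tied to self-adjointness.
\end{itemize}

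In your uniqueness step, state explicitly that any spectral measure $E$ with $T=\int\lambda\,dE(\lambda)$ satisfies $\int\frac{\lambda-i}{\lambda+i}\,dE(\lambda)=(T-i)(T+i)^{-1}$. This follows from $\int(\lambda+i)^{-1}\,dE(\lambda)=(T+i)^{-1}$ and the same product rule. Your moment argument then completes the proof.
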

  Next, we recall the Borel functional calculus. 
\begin{theorem}\cite[Section 5.3]{Schmudgen} \label{thm: Borel functional calculus}
  Let $\mathscr{K}$ be a Hilbert space, $T:D(T)\subseteq \mathscr{K}\to \mathscr{K}$ a self-adjoint operator and  $T=\int_{(-\infty,+\infty)} \lambda dE_{T}(\lambda)$  the spectral decomposition of $T$. If $f:\mathbb{R}\to \mathbb{R}$ is a Borel measurable function, then
  $f(T):= \int_{(-\infty,+\infty)} f(\lambda) dE_{T}(\lambda)$ is a normal operator with a dense domain
  $D\left(f(T)\right)= \Bigl\{ k\in \mathscr{K}: \int_{(-\infty,+\infty)} |f(\lambda)|^2d\braket{k}{E_{T}(\lambda)k}_{\mathscr{K}}<+\infty \Bigr\}.$
\end{theorem}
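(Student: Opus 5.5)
The plan is to build $f(T)$ directly from the spectral measure $E_T$ of Theorem~\ref{thm: spectral decomposition}, first for bounded $f$ and then for general Borel $f$ by truncation, and to read off the domain, density and normality from this construction.

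\textbf{Bounded $f$.} For a simple Borel function $s=\sum_j c_j\chi_{A_j}$ I set $\int s\,dE_T:=\sum_j c_jE_T(A_j)$. For every $k\in\mathscr{K}$ this satisfies
\[
\Bigl\|\int s\,dE_T\,k\Bigr\|^2=\int|s|^2\,d\braket{k}{E_T(\lambda)k},
\]
and hence $\|\int s\,dE_T\|\le\|s\|_\infty$. A bounded Borel $f$ is a uniform limit of simple functions, so the operator integral extends by norm continuity. The result is a $*$-homomorphism from bounded Borel functions to $\mathbb{B}(\mathscr{K})$, with $(\int f\,dE_T)^*=\int\bar f\,dE_T$ and the same isometry identity.

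\textbf{General $f$.} Put $A_n:=\{\lambda:|f(\lambda)|\le n\}$, $f_n:=f\chi_{A_n}$ and $P_n:=E_T(A_n)$. The sets $A_n$ increase to $\mathbb{R}$, so by countable additivity of $E_T$ the projections $P_n$ increase strongly to $1$. Let $\mathcal{D}$ be the set of $k$ with $\int|f|^2\,d\braket{k}{E_T(\lambda)k}<+\infty$, and define $f(T)k:=\lim_n f_n(T)k$ for $k\in\mathcal{D}$. The sequence is Cauchy because
\[
\|f_n(T)k-f_m(T)k\|^2=\int_{A_n\setminus A_m}|f|^2\,d\braket{k}{E_T(\lambda)k}.
\]
Conversely, if $\lim_n f_n(T)k$ exists, monotone convergence forces $k\in\mathcal{D}$. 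This identifies $D(f(T))$ with the set given in the statement.

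\textbf{Density of the domain.} Each range $P_n\mathscr{K}$ is contained in $\mathcal{D}$, since the measure $\braket{P_nk}{E_T(\cdot)P_nk}$ is supported on $A_n$ where $|f|\le n$. Their union is dense because $P_n\to1$ strongly.

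\textbf{Normality.} First, $f(T)$ is closed, since the isometry identity makes its graph norm equal to the $L^2(\braket{k}{E_T(\cdot)k})$-norm of $1+|f|$. Next, $\|f(T)k\|=\|\bar f(T)k\|$ on the common domain $\mathcal{D}$, because $|f|=|\bar f|$. It then suffices to prove $f(T)^*=\bar f(T)$; for real $f$, as in the statement, this says $f(T)$ is self-adjoint and in particular normal. The inclusion $\bar f(T)\subseteq f(T)^*$ follows by passing to the limit $n\to\infty$ in $\braket{f_n(T)k}{h}=\braket{k}{\bar f_n(T)h}$ for $k,h\in\mathcal{D}$.

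\textbf{Main obstacle.} The reverse inclusion $f(T)^*\subseteq\bar f(T)$ is the step I expect to be hardest. Take $h\in D(f(T)^*)$ and set $g:=f(T)^*h$. Since $P_n$ commutes with $f(T)$ and $f(T)P_n=f_n(T)$, we get $P_ng=\bar f_n(T)h$. The norms $\|\bar f_n(T)h\|=\|P_ng\|\le\|g\|$ are then uniformly bounded. By monotone convergence $\int|f|^2\,d\braket{h}{E_T(\lambda)h}\le\|g\|^2$, so $h\in\mathcal{D}$, and letting $n\to\infty$ gives $\bar f(T)h=g$.
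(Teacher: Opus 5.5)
Your proposal is correct and is essentially the standard construction in the reference the paper cites for this result (Schm\"udgen's spectral integrals $\int f\,dE_T$; the paper itself gives no proof): truncation by $E_T(\{|f|\le n\})$, identification of the domain by monotone convergence, and the adjoint identity $f(T)^*=\bar f(T)$ obtained exactly through your bound $\|\bar f_n(T)h\|=\|P_nf(T)^*h\|\le\|f(T)^*h\|$. The only loose point is the closedness step: the squared graph norm is $\int(1+|f|^2)\,d\braket{k}{E_T(\lambda)k}$, with a $k$-dependent measure, and that formula alone does not give closedness; this is harmless, though, because once $f(T)^*=\bar f(T)$ is established for all Borel $f$, the operator $f(T)=\bar f(T)^*$ is the adjoint of a densely defined operator and hence closed.
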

Note that if the operator $T$ is also positive self-adjoint, then $(-\infty,+\infty)$ in Theorems~\ref{thm: spectral decomposition} and \ref{thm: Borel functional calculus} can be replaced by $[0,+\infty)$. 
For a further review on the spectral theory, we refer the reader to \cite{Schmudgen}. 

We close this section of prerequisites with two lemmas which  will be used in a few proofs in Sections 3 and 4.

\begin{lemma}\label{lem: applying the Borel functional calculus}
    Let $\mathscr{K}_1$, $\mathscr{K}_2$ be Hilbert spaces,  $T:D(T)\subseteq \mathscr{K}_1\to \mathscr{K}_1$ be a self-adjoint operator and $U:\mathscr{K}_1\to \mathscr{K}_2$ be a unitary operator. If $f:\mathbb{R}\to \mathbb{R}$ is a Borel measurable function, then $Uf(T)U^{-1}=f(UTU^{-1})$ with the domains related by   $
D\!\left(f(UTU^{-1})\right)
=
U\!\left(D\!\left(f(T)\right)\right).
$
\end{lemma}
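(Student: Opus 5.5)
The plan is to transport the spectral measure of $T$ to $\mathscr{K}_2$ by conjugation with $U$ and use the uniqueness in Theorem~\ref{thm: spectral decomposition}. Let $E_T(\cdot)$ be the spectral measure of $T$. For each Borel set $\Delta\subseteq\mathbb{R}$ define
\[
F(\Delta):=UE_T(\Delta)U^{-1}.
\]
Since conjugation by a unitary preserves orthogonal projections, products, sums and strong limits, $F$ is a spectral measure on $\mathscr{K}_2$. In particular $F(\mathbb{R})=1$, $F(\Delta_1\cap\Delta_2)=F(\Delta_1)F(\Delta_2)$, and countable additivity holds strongly. The key identity is
\[
\braket{k}{F(\Delta)k}_{\mathscr{K}_2}=\braket{U^{-1}k}{E_T(\Delta)U^{-1}k}_{\mathscr{K}_1}
\]
for all $k\in\mathscr{K}_2$. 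In other words, the scalar measures of $F$ at $k$ coincide with those of $E_T$ at $U^{-1}k$.

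The main step is to show that for every Borel function $g:\mathbb{R}\to\mathbb{R}$ we have
\[
\int g\,dF=U\Bigl(\int g\,dE_T\Bigr)U^{-1},
\]
including equality of domains. For domains, the key identity gives $k\in D\bigl(\int g\,dF\bigr)$ if and only if $\int|g|^2\,d\braket{U^{-1}k}{E_T(\cdot)U^{-1}k}<+\infty$, which by Theorem~\ref{thm: Borel functional calculus} is equivalent to $U^{-1}k\in D(g(T))$, i.e. $k\in U(D(g(T)))$. For the action on vectors I would use the standard construction of the spectral integral: first take simple functions, where the identity is immediate from the definition of $F$; then take bounded Borel functions as uniform limits of simple functions; finally take general $g$ via the truncations $g\chi_{\{|g|\le n\}}$, whose spectral integrals converge strongly on the common domain to $\int g\,dF$. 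Each of these limits commutes with conjugation by $U$, because $U$ is isometric.

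Apply this with $g(\lambda)=\lambda$. This gives $\int\lambda\,dF(\lambda)=UTU^{-1}$, with domain $U(D(T))$. The operator $UTU^{-1}$ is self-adjoint, since $(UTU^{-1})^*=UT^*U^{-1}$. So by the uniqueness part of Theorem~\ref{thm: spectral decomposition}, $F=E_{UTU^{-1}}$. Applying the main step again with $g=f$ then yields
\[
f(UTU^{-1})=\int f\,dE_{UTU^{-1}}=\int f\,dF=Uf(T)U^{-1},
\]
with $D\bigl(f(UTU^{-1})\bigr)=U\bigl(D(f(T))\bigr)$.

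The only delicate point is the passage to unbounded $g$ in the main step, where domains must be tracked. Handling it through the integrability characterization of the domain from Theorem~\ref{thm: Borel functional calculus} together with the key identity removes the difficulty, so I expect no real obstacle.
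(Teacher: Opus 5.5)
Your proposal is correct and follows essentially the same route as the paper. Both arguments conjugate the spectral measure of $T$ by $U$, identify the result with the spectral measure of $UTU^{-1}$ via uniqueness, read off the domain from the scalar measures $\braket{U^{-1}k}{E_T(\cdot)U^{-1}k}$, and then transport the spectral integral of $f$. The only difference is that you justify $\int g\,d\bigl(UE_T(\cdot)U^{-1}\bigr)=U\bigl(\int g\,dE_T\bigr)U^{-1}$ by approximating with simple functions, then bounded functions, then truncations, whereas the paper simply asserts this identity.
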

\begin{proof}
Let $T=\int_{(-\infty,+\infty)} \lambda dE_{T}(\lambda)$ be the spectral decomposition of the operator $T$ where $E_T(\cdot)$ is the spectral measure of $T$. The operator $S:=UTU^{-1}: U\left(D(T)\right)\subseteq \mathscr{K}_2\to \mathscr{K}_2$ is 
self-adjoint and therefore possesses a spectral decomposition with a unique spectral measure $E_{S}(\cdot)$. We observe that the measure $UE_{T}(\cdot)U^{-1}$ is an orthogonal projection-valued measure  and in addition
\[
\int_{(-\infty,+\infty)} \lambda d\left( UE_{T}(\lambda) U^{-1}\right)=U\left( \int_{(-\infty,+\infty)} \lambda d E_{T}(\lambda) \right) U^{-1}=UTU^{-1}=S.
\]
Therefore, by the uniqueness of the spectral measure, $E_S(\cdot)=UE_T(\cdot)U^{-1}$. Next, we prove the equality of the domains.
\begin{align*}
    D\left(f(UTU^{-1})\right)&= \Bigl\{ k_2\in \mathscr{K}_2: \int_{(-\infty,+\infty)} |f(\lambda)|^2d\braket{k_2}{UE_{T}(\lambda)U^{-1}(k_2)}_{\mathscr{K}_2}<+\infty \Bigr\} \\
    &= \Bigl\{ k_2\in \mathscr{K}_2: \int_{(-\infty,+\infty)} |f(\lambda)|^2d\braket{U^{-1}(k_2)}{E_{T}(\lambda)U^{-1}(k_2)}_{\mathscr{K}_1}<+\infty \Bigr\}  \\ &(\text{because } U^{*}=U^{-1} \text{ since } U \text{ is unitary}) \\ &= \Bigl\{ U(k_1)\in \mathscr{K}_2: k_1 \in \mathscr{K}_1 \text{,  } \int_{(-\infty,+\infty)} |f(\lambda)|^2d\braket{k_1}{E_{T}(\lambda)k_1}_{\mathscr{K}_1}<+\infty \Bigr\}\\
    &(\text{by setting } k_2=U(k_1) \text{ since the unitary } U:\mathscr{K}_1\to \mathscr{K}_2 \text{ is 1-1 and onto})\\
    &=U\left(D\left(f(T)\right)\right).
\end{align*}
Now, we prove that $Uf(T)U^{-1}=f(UTU^{-1})$. By Theorem~\ref{thm: Borel functional calculus}, we obtain
\begin{align*}
f(UTU^{-1})
&= \int_{(-\infty,+\infty)} f(\lambda)\, d\bigl(U E_T(\lambda) U^{-1}\bigr)\\&= U\left( \int_{(-\infty,+\infty)} f(\lambda) d E_{T}(\lambda) \right) U^{-1}\\&=Uf(T)U^{-1}.
\end{align*}
\end{proof}

The following lemma is well known, but we present its proof for completeness.

\begin{lemma} \label{lem:E zero infty}
Let $\mathscr{K}$ be a Hilbert space, $T:D(T)\subseteq \mathscr{K}\to \mathscr{K}$ be a positive self-adjoint operator. Let $T=\int_{[0,+\infty)} \lambda dE_{T}(\lambda)$ be the 
spectral decomposition of $T$ according to Theorem~\ref{thm: spectral decomposition} and the comment following Theorem~\ref{thm: Borel functional calculus}. Then, $E_T((0,+\infty))$ is the smallest projection $p$ on $\mathscr{K}$ that satisfies 
$pT=T$.
\end{lemma}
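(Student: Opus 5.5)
Write $P:=E_T((0,+\infty))$ and $Q:=E_T(\{0\})=1-P$; the second equality holds because $T$ is positive, so $E_T$ is supported on $[0,+\infty)$. The plan is to check two things: first that $PT=T$, and then that every projection $p$ with $pT=T$ satisfies $p\geq P$. Both facts come from the Borel functional calculus of Theorem~\ref{thm: Borel functional calculus}, together with the identification $\ker T=\rang Q$.

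For the first claim, apply the functional calculus to $g(\lambda)=\chi_{(0,+\infty)}(\lambda)\lambda$. Since $E_T$ lives on $[0,+\infty)$, we have $g(\lambda)=\lambda$ for $E_T$-almost every $\lambda$, so $g(T)=T$ with the same domain. On the other hand, for $\xi\in D(T)$ the multiplicativity of the calculus gives $PT\xi=\chi_{(0,+\infty)}(T)\,T\xi=g(T)\xi=T\xi$. Here it is important that the product formula is applied with the bounded factor on the left, because then no domain issue arises. Hence $PT=T$ as operators on $D(T)$.

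For minimality, let $p$ be a projection with $pT=T$. Then $\rang T\subseteq \rang p$, so $\overline{\rang T}\subseteq \rang p$. Since $T$ is self-adjoint, $\overline{\rang T}=(\ker T^*)^\perp=(\ker T)^\perp$. Next I identify $\ker T=\rang E_T(\{0\})$ from the spectral integral. If $T\xi=0$, then $0=\|T\xi\|^2=\int\lambda^2\,d\langle \xi,E_T(\lambda)\xi\rangle$, so the measure $\langle \xi,E_T(\cdot)\xi\rangle$ is concentrated on $\{0\}$, which gives $\xi=Q\xi$. Conversely, $TQ=\int \lambda\chi_{\{0\}}(\lambda)\,dE_T(\lambda)=0$. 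It follows that $(\ker T)^\perp=\rang(1-Q)=\rang P$, so $\rang P\subseteq \rang p$, i.e.\ $P\leq p$. Therefore $P$ is the smallest such projection.

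The only delicate point is the domain bookkeeping in the unbounded products $PT$ and $\chi(T)g(T)$. I would handle it by citing the standard rule $\chi(T)\,h(T)\subseteq(\chi h)(T)$ with $D(\chi(T)h(T))=D(h(T))\cap D((\chi h)(T))$. When $\chi$ is bounded this intersection is just $D(h(T))$, which here is $D(T)$, so the equality $PT=T$ holds on all of $D(T)$.
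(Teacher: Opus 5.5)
Your proposal is correct and follows essentially the same route as the paper. Both prove $PT=T$ via the product rule of the functional calculus: the paper writes $E_T((0,+\infty))T\subseteq \overline{E_T((0,+\infty))T}=T$ and then uses equality of domains, which is exactly what your rule $D(\chi(T)h(T))=D(h(T))\cap D((\chi h)(T))$ gives directly. Both then get minimality from $\overline{\rang T}=(\ker T)^\perp$ together with $\ker T=\rang E_T(\{0\})$, proved just as in the paper's Claim.
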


\begin{proof}
Since $E_{T}(\cdot)$ is a projection-valued measure, $E_{T} \left( (0,+\infty)\right)$ is a projection. We first prove that $E_T((0,+\infty))T = T$.  By  $\mathscr{X}_{A}(\cdot)$ we denote the characteristic function of any set $A$ and observe that
\[E_{T}\left( (0,+\infty)\right)=\int_{(0,+\infty)} 1dE_{T}(\lambda)= \int_{[0,+\infty)}\mathscr{X}_{(0,+\infty)} (\lambda)dE_{T}(\lambda) \quad \text{and infer}\] 
\begin{align}\label{eq: lem 2.15 first equation}
E_{T}\left( (0,+\infty)\right) T &=\int_{[0,+\infty)}\mathscr{X}_{(0,+\infty)} (\lambda)dE_{T}(\lambda)\int_{[0,+\infty)} \lambda dE_{T} (\lambda) \notag \\&\subseteq \overline{\int_{[0,+\infty)}\mathscr{X}_{(0,+\infty)} (\lambda)dE_{T}(\lambda)\int_{[0,+\infty)} \lambda dE_{T} (\lambda)} \notag \\
&= \int_{[0,+\infty)}\mathscr{X}_{(0,+\infty)} (\lambda) \lambda dE_{T} (\lambda) \quad \text{(by \cite[Theorem 4.16 (iii)]{Schmudgen})} \notag \\
&=\int_{[0,+\infty)} \lambda dE_{T}(\lambda)=T.
\end{align}
Since  $E_{T}\left( (0,+\infty)\right)$ is bounded, it has a full domain and therefore the domain of the composition $E_{T}\left( (0,+\infty)\right)T$ is equal to the domain of $T$. Thus,  Inclusion~\eqref{eq: lem 2.15 first equation}  is an equality and we have 
\begin{align}\label{eq: lem 2.15 firrrst equation}
    E_{T}\left( (0,+\infty)\right) T=T.
\end{align}

Let $p$ be a projection on $\mathscr{K}$ that satisfies $pT=T$. It suffices to show $\rang \left( E_T((0,+\infty)) \right) \subseteq \rang (p)$.
Since $pT=T$, we infer 
\begin{align} \label{eq:lem 2.15 second equation}
     &\rang (T)\subseteq \rang\left( p \right) \notag\\
    \Rightarrow \quad &\overline{\rang (T)}\subseteq \overline{\rang\left( p \right)}=\rang\left( p \right) \notag \\
    \Leftrightarrow \quad &\text{ker}(T)^{\perp} \subseteq \rang\left( p \right) .
\end{align}
{\bf{\underline{Claim:}}}
\begin{align}\label{eq: lem 2.15 third equation}
    \text{ker}(T) =\rang (E_T(\{ 0 \} )).
\end{align}
Since $\lambda \mathscr{X}_{\{ 0 \} } (\lambda)   =0$ for every $\lambda \geq 0$ and $\mathscr{X}_{\{ 0 \} } (T) =E_T(\{ 0 \}) $,
we obtain by \cite[Theorem 5.9 (6)]{Schmudgen} that $\overline{T E_T(\{ 0 \})} =0$. Since $T$ is closed and $E_T(\{ 0 \} )$ is bounded
we have that $T E_T(\{ 0 \} )$ is closed, hence $T E_T( \{ 0 \} ) =0$. Thus, $\rang (E_T\left(\{ 0 \}\right) ) \subseteq \text{ker} (T)$. 

Now, for showing the reverse inclusion, let $x\in \text{ker}(T)$. Then, 
\begin{align*}
    0=\|T (x)\|_{\mathscr{K}}^2&=\int_{[0,+\infty)} \lambda^2d\braket{x}{E_{T}(\lambda)x}_{\mathscr{K}} \quad \text{(by \cite[Theorem 5.9 (2)]{Schmudgen})}\\
    &=\int_{(0,+\infty)} \lambda^2d\braket{x}{E_{T}(\lambda)x}_{\mathscr{K}}, 
\end{align*}
which implies 
\begin{align*}
  &0=\braket{x}{E_{T}\left( (0,+\infty) \right) (x)}_{\mathscr{K}}= \|E_{T}\left( (0,+\infty) \right) (x)\|_{\mathscr{K}}^2\\
  \Leftrightarrow \quad &E_{T}\left( (0,+\infty) \right) (x)=0
   \Leftrightarrow  E_{T}\left( (0,+\infty) \right) (x)+E_{T}\left( \{0\} \right) (x)=E_{T}\left( \{0\} \right) (x) \\
   \Leftrightarrow \quad &E_{T}\left( [0,+\infty)\right) (x) =E_{T}\left( \{0\} \right) (x)  
   \Leftrightarrow   x= E_{T}\left( \{0\} \right) (x).
\end{align*}
Consequently, $x\in \rang\left(E_{T}\left( \{0\} \right) \right)$ and this finishes the proof of the Claim. 

We apply Equation~\eqref{eq: lem 2.15 third equation} on Inclusion~\eqref{eq:lem 2.15 second equation} and derive
  \begin{align*}
&\rang\left(E_{T}\left( \{0\}\right)\right)^{\perp} \subseteq \rang\left( p \right)\\
     \Leftrightarrow \quad &\rang\left(E_{T}\left( (0,+\infty) \right)\right)\subseteq \rang\left( p \right), 
\end{align*} 
which proves that $E_{T}\left( (0,+\infty) \right)$ is smaller than any projection $p$ on $\mathscr{K}$ that satisfies $pT=T$.
\end{proof}

\section{The Relative modular operator}
In this section, we focus on the relative modular operator with respect to two normal states on a semifinite von~Neumann algebra. This is an essential tool for the quantum \texorpdfstring{\ensuremath{f}}{}-divergence. We begin by reviewing its definition for general von~Neumann algebras. 
 
\begin{definition}\cite[Section 2.1]{Hiai-2018}\label{relative modular operator}
    Let  $(\mathscr{M},\mathscr{H},J,P)$ be a standard form of a von~Neumann algebra. Let $\phi,\text{ }\omega\in  \mathscr{M}_{*}^{+}$ and $\xi_\phi$, $\xi_\omega$ be the vector representatives of $\phi$ and $\omega$ respectively (see Remark~\ref{existence of hphi,homega}). By $s_\mathscr{M}(\omega)$ we denote the orthogonal projection onto 
$\overline{\mathscr{M}'\xi_{\omega}}$ and by 
  $s_{\mathscr{M}'}(\omega) $ the orthogonal projection onto 
$\overline{\mathscr{M}\xi_{\omega}}$. Note that $s_{\mathscr{M}}(\omega)$ is also the support of the state $\omega$ on $\mathscr{M}$, by \cite[Section 5.22]{StratilaZsido2019}, i.e. $s_{\mathscr{M}}(\omega)= 1-\sup\{e\in \mathscr{M}_{proj}:\omega(e)=0\}$ (see~\cite[Section 5.15]{StratilaZsido2019}). Recall that $s_{\mathscr{M}}(\omega)\in \mathscr{M}$ and  $s_{\mathscr{M}'}(\omega) \in \mathscr{M}'$, by~\cite[Section 3.8]{StratilaZsido2019}.
The operator $S_{\phi,\omega}$ is defined as 
\begin{align}\label{eq: S of phi,omega}
 S_{\phi,\omega}(x\xi_{\omega}+\eta):=s_{\mathscr{M}}(\omega)x^{*}\xi_{\phi} \text{ for all } x\in \mathscr{M}, \text{ } \eta\in \left(1-s_{\mathscr{M}'}(\omega)\right)\mathscr{H}. \end{align}
 According to \cite[Lemma 2.2]{araki1977relative}, $ S_{\phi,\omega}$ is a densely-defined closable conjugate-linear operator. The {\bf{relative modular operator with respect to {\boldmath{$\phi$}} and {\boldmath{$\omega$}}}}  is defined as
\begin{align}\Delta_{\phi,\omega}:=S_{\phi,\omega}^{*}\overline{S_{\phi,\omega}} \end{align} where the bar denotes the closure of $S_{\phi,\omega}$.
\end{definition} 
The polar decomposition of $\overline{S_{\phi,\omega}}$ is given by \cite[Theorem 2.4]{araki1977relative}
\begin{equation}\label{eq: polar decomp of S}
\overline{S_{\phi,\omega}}=J\Delta_{\phi,\omega}^{\frac{1}{2}}.
\end{equation}
The relative modular operator $\Delta_{\phi,\omega}$ is a positive self-adjoint operator with support projection $s_\mathscr{M}(\phi)s_{\mathscr{M}'}(\omega)$.

There are specific examples of von~Neumann algebras for which a useful formula for the relative modular operator has been established. 
Hiai in \cite[Example 2.5 and Example 2.6]{Hiai-2018} mentions a formula for the relative modular operator which helps him derive a formula for the quantum \texorpdfstring{\ensuremath{f}}{}-divergence between normal states in the case of an abelian von~Neumann algebra  and in the case of the von~Neumann algebra $\mathbb{B}\left(\mathscr{H}\right)$, respectively. See also \cite[Appendix B]{Androulakis2023Jul}. 

{\L}uczak, Pods{\k{e}}dkowska and Wieczorek in \cite{Luczak2022Sep} analyze the relative modular operator with respect to normal states on a semifinite von~Neumann algebra. Their analysis on this topic is one of the key ingredients in our work, especially their Theorem $8'$. To present this theorem, it is necessary to review some other important tools first (following~\cite{Luczak2022Sep}).

Let $\mathscr{M}$ be a semifinite von~Neumann algebra with a faithful semifinite normal trace $\tau$. The right multiplication $\pi'$ of $\mathscr{M}$ on $L_2(\mathscr{M},\tau)$,     $\pi':\mathscr{M}\to \mathbb{B}\left(L_2(\mathscr{M},\tau)\right)$, is defined by the formula
\begin{align}\label{Eq: right multiplication}
\pi'(x)(\xi):=\xi x \text{ for all } x\in \mathscr{M} \text{ and } \xi\in L_2(\mathscr{M},\tau). 
\end{align}
The left multiplication $\pi$, which was presented in Equation~\eqref{Eq: Left Multiplication}, is also useful. We  extend $\pi,\text{ }\pi'$ to  $\tau$-measurable self-adjoint, possibly unbounded operators $x$ on $ L_{2}(\mathscr{M},\tau)$. Let $x=\int_{(-\infty,+\infty) } t dE(t)$ be the spectral decomposition of such $x$. Then, $\pi(x)$ and $\pi'(x)$ are defined as
\begin{align}\label{eq: pi,pi' of x}
    \pi (x):=\int_{(-\infty,+\infty)} td\pi\left(E(t)\right) \text{ and } \pi' (x):=\int_{(-\infty,+\infty)} td\pi'\left(E(t)\right),
\end{align}
where the spectral measures $\pi\left(E(\cdot)\right),\text{ }\pi'\left(E(\cdot)\right)$ are defined according to Equations~\eqref{Eq: Left Multiplication}, \eqref{Eq: right multiplication} respectively.
 By \cite[Formulae (3), (4)]{Luczak2022Sep}, if $f:\mathbb{R}\to \mathbb{R}$ is any Borel measurable function, then we obtain
     \begin{align}\label{Eq: pi of f of x}
         \pi(f(x))=f(\pi(x))\quad and \quad
         \pi'(f(x))=f(\pi'(x)).
     \end{align}
A useful form of $\pi(x),\text{ } \pi'(x)$ is given in the following proposition.
 \begin{proposition}\label{prop:pi,pi' for self-adjoint}\cite[Proposition 4]{Luczak2022Sep}
Let $\mathscr{M}$ be a semifinite von~Neumann algebra with a faithful semifinite normal trace $\tau$. If $x$ is a $\tau$-measurable self-adjoint operator on $L_2(\mathscr{M},\tau)$ and $a\in D(\pi(x)),\text{ } b\in D(\pi'(x)) $, then $xa,\text{ } bx\in L_{2}(\mathscr{M},\tau)$ and we obtain 
 \begin{align}\label{extended pi,pi'}
 \pi(x)(a)=xa,\quad \pi'(x)(b)=bx.
\end{align}
\end{proposition}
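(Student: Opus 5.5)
We reduce to the definition of $\pi(x)$, $\pi'(x)$ via spectral integrals and approximate $x$ by bounded truncations. Write $x=\int t\,dE(t)$ and, for $n\in\mathbb{N}$, let $e_n:=E([-n,n])\in\mathscr{M}_{proj}$ and $x_n:=xe_n=\int_{[-n,n]}t\,dE(t)\in\mathscr{M}$; $e_n$ lies in $\mathscr{M}$ because $x$ is affiliated with $\mathscr{M}$. For the bounded operators $x_n$, Equations~\eqref{Eq: Left Multiplication}, \eqref{Eq: right multiplication} give $\pi(x_n)a=x_na$ and $\pi'(x_n)b=bx_n$. By Equation~\eqref{Eq: pi of f of x} applied to $f(t)=t\mathscr{X}_{[-n,n]}(t)$, we also have $\pi(x_n)=\pi(x)\pi(e_n)$, which is the spectral truncation of $\pi(x)$.

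First we handle $\pi$. For $a\in D(\pi(x))$, the spectral calculus (dominated convergence in $\int |t|^2 d\langle a,\pi(E(t))a\rangle$) gives $\pi(x_n)a\to\pi(x)a$ in $L_2(\mathscr{M},\tau)$, and also $\pi(e_n)a=e_na\to a$. Hence $x_na=x e_n a\to \pi(x)a=:c$ in $L_2$. Convergence in $L_2(\mathscr{M},\tau)$ implies convergence in measure, i.e.\ in the measure topology of $\widetilde{\mathscr{M}}$, and multiplication in $\widetilde{\mathscr{M}}$ is jointly continuous in that topology. Therefore $xe_na=(x)(e_na)\to xa$ in measure, where the product is the strong product in $\widetilde{\mathscr{M}}$. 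By uniqueness of limits in the Hausdorff measure topology, $xa=c\in L_2(\mathscr{M},\tau)$, and $\pi(x)a=xa$.

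The case of $\pi'$ is symmetric: $bx_n=be_nx\to bx$ in measure, and $\pi'(x_n)b\to\pi'(x)b$ in $L_2$. Alternatively, one can use $J$: we have $J\pi(x)J=\pi'(x)$ on appropriate domains, because $J\pi(E(\cdot))J=\pi'(E(\cdot))$ by $(e\xi)^*=\xi^*e$. Then $b\in D(\pi'(x))$ iff $b^*\in D(\pi(x))$, and $\pi'(x)b=(xb^*)^*=bx$.

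The main obstacle is making sure the products are meant in the right sense and that the limits can be identified. We must check that the $\tau$-measurable $x$, which a priori acts on the underlying space of $\mathscr{M}$, multiplies $a\in L_2$ as elements of $\widetilde{\mathscr{M}}$. The key facts are that $L_2$-convergence implies measure-topology convergence (via $\tau(|y|^2)\ge \epsilon^2\tau(E_{|y|}(\epsilon,\infty))$) and that multiplication is continuous in the measure topology (Nelson). The step $e_na\to a$ in measure, combined with $x$ fixed, then gives $x(e_na)\to xa$.
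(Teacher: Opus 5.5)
The paper gives no proof of this statement. It imports the left-multiplication half from Proposition~4 of {\L}uczak, Pods{\k{e}}dkowska and Wieczorek, and for the right-multiplication half it only records those authors' remark that it is proved ``similarly''. Your argument is a correct, self-contained proof, and each ingredient checks out:
\begin{itemize}
\item For $a\in D(\pi(x))$, dominated convergence in $\int_{|t|>n}t^2\,d\langle a,\pi(E(t))a\rangle$ gives $\pi(x_n)a\to\pi(x)a$ in $L_2$.
\item The Chebyshev estimate with $p=E_{|y|}([0,\epsilon])$ shows that $\|y\|_2^2\le\epsilon^2\delta$ implies $y\in N(\epsilon,\delta)$. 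Hence $L_2$-convergence implies convergence in measure.
\item Nelson's theorem supplies joint continuity of the strong product and Hausdorffness of the measure topology.
\item The identification $x\cdot(e_n\cdot a)=(x\cdot e_n)\cdot a=x_na$ follows from associativity in $\widetilde{\mathscr{M}}$ together with $x\cdot e_n=\overline{xe_n}=xe_n$. Likewise $e_n\cdot x=\overline{e_nx}=xe_n$ handles the right-hand version.
\end{itemize}

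State explicitly that you read $xa$ and $bx$ as strong products in $\widetilde{\mathscr{M}}$. That is how the paper uses the proposition: in Claim~1 of the proof of Proposition~\ref{prop: f phi,omega are distributions}, $U^{-1}(\cdots)\xi_\omega$ is a product of two $L_2$-elements that is then fed into $\tau$.

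What your approach adds over the paper's citation is a proof of exactly the half the paper leaves unproved. The identity $J\pi(E(\cdot))J=\pi'(E(\cdot))$ follows from $(e\xi^*)^*=\xi e$. It gives $\pi'(x)=J\pi(x)J$ with $D(\pi'(x))=J\,D(\pi(x))$; the antilinearity of $J$ causes no trouble because the integrand $t$ is real. This reduces the $\pi'$ statement to the $\pi$ statement via $\pi'(x)b=(xb^*)^*=bx$, so the argument does not have to be repeated.
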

Note that the second equation in \eqref{extended pi,pi'} is not part of \cite[Proposition 4]{Luczak2022Sep}, but the authors mention that it is true and that its proof is similar to that of the first equation in ~\eqref{extended pi,pi'}.

Next, we present the formula of the relative modular operator with respect to two normal states on a semifinite von~Neumann algebra. 

\begin{theorem}\label{thm:Relative modular operator} \cite[Theorem $8'$]{Luczak2022Sep}
Let $\mathscr{M}$ be a semifinite von~Neumann algebra with a faithful semifinite normal trace $\tau$ and let $\phi,\text{ }\omega$ be normal states on $\mathscr{M}$. By Remark~\ref{existence of hphi,homega}, there exist vector representatives $\xi_\phi,\text{ } \xi_\omega \in L_{2}(\mathscr{M},\tau)_{+}$, corresponding to $\phi$, $\omega$ respectively, that satisfy Equation~\eqref{eq: equation for vector representative}. We define \begin{align}\label{xi omega tilde}
    \widetilde{\xi_{\omega}}:=w(\xi_{\omega}) \text{ where } w:[0,+\infty)\to [0,+\infty) \text{ is the funtion } w(\lambda)=\begin{cases} 
\frac{1}{\lambda} & \text{if }  \lambda>0\\
0 & \text{if }  \lambda=0
\end{cases}.
\end{align}
 Then, for the relative modular operator $\Delta_{\phi,\omega}$ we have
\begin{align}\label{formula of LPW for relative modular operator}
\Delta_{\phi,\omega}^{\frac{1}{2}}=\overline{ \pi(\xi_{\phi})\pi'(\widetilde{\xi_{\omega}})} \text{ .}\end{align}
\end{theorem}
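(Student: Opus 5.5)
The plan is to show that $\overline{S_{\phi,\omega}}\subseteq J\,T$, where
\[
T:=\overline{\pi(\xi_\phi)\pi'(\widetilde{\xi_\omega})} .
\]
Combined with the polar decomposition \eqref{eq: polar decomp of S}, this gives $\Delta_{\phi,\omega}^{1/2}\subseteq T$. We then conclude equality from the fact that a self-adjoint operator has no proper self-adjoint extension: if $A\subseteq T$ with $A$ and $T$ both self-adjoint, then $T=T^*\subseteq A^*=A$.

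\textbf{Step 1 (structure of $T$).} Every spectral projection of $\pi(\xi_\phi)$ is a left multiplication and every spectral projection of $\pi'(\widetilde{\xi_\omega})$ is a right multiplication. Left and right multiplications commute on $L_2(\mathscr{M},\tau)$ by associativity of the (strong) product of measurable operators. Hence $\pi(\xi_\phi)$ and $\pi'(\widetilde{\xi_\omega})$ strongly commute and have a joint spectral measure $E$ on $[0,\infty)^2$. By the functional calculus for strongly commuting self-adjoint operators,
\[
T=\int st\,dE(s,t),
\]
and this is a positive self-adjoint operator. This closure statement is the point where care with domains is needed.

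\textbf{Step 2 (a convenient core for $\overline{S_{\phi,\omega}}$).} Let $e:=E_{\xi_\omega}((0,\infty))$ and $e_n:=E_{\xi_\omega}([1/n,n])$. Then $e_n\in\mathscr{M}$, $\tau(e_n)\le n^2\tau(h_\omega)<\infty$, $e_n\uparrow e$ strongly, and $e_n$ commutes with $\xi_\omega$. Using Lemma~\ref{lem:E zero infty}, one checks that $e=s_\mathscr{M}(\omega)$ and that $s_{\mathscr{M}'}(\omega)=\pi'(e)$.

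Consider the set $\mathcal{D}$ of vectors $x\xi_\omega e_n+\eta$ with $x\in\mathscr{M}$, $n\in\mathbb{N}$ and $\eta\in(1-\pi'(e))L_2(\mathscr{M},\tau)$. Since $x\xi_\omega e_n=(xe_n)\xi_\omega$, Definition~\ref{relative modular operator} gives
\[
S_{\phi,\omega}(x\xi_\omega e_n)=e\,e_nx^*\xi_\phi\longrightarrow e\,x^*\xi_\phi=S_{\phi,\omega}(x\xi_\omega),
\]
while $x\xi_\omega e_n\to x\xi_\omega$. Therefore $\mathcal{D}$ is a core for $\overline{S_{\phi,\omega}}$.

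\textbf{Step 3 (computation on the core).} Take a vector $x\xi_\omega e_n$ with $x\in\mathscr{M}$. We have
\[
\widetilde{\xi_\omega}e_n=w(\xi_\omega)e_n\in\mathscr{M}
\quad\text{and}\quad
x\xi_\omega e_n\widetilde{\xi_\omega}=xe_n\in L_2(\mathscr{M},\tau),
\]
the latter because $\tau(e_n)<\infty$. By Proposition~\ref{prop:pi,pi' for self-adjoint}, the vector $x\xi_\omega e_n$ lies in the domain of $\pi'(\widetilde{\xi_\omega})$, with image $xe_n$. Moreover $\xi_\phi xe_n\in L_2(\mathscr{M},\tau)$, so Proposition~\ref{prop:pi,pi' for self-adjoint} applies again and gives
\[
T(x\xi_\omega e_n)=\xi_\phi xe_n .
\]
Applying $J$, we get
\[
JT(x\xi_\omega e_n)=e_nx^*\xi_\phi=e\,e_nx^*\xi_\phi=S_{\phi,\omega}(x\xi_\omega e_n).
\]
Now take $\eta$ with $\eta e=0$. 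Then $\eta\widetilde{\xi_\omega}=\eta e\widetilde{\xi_\omega}=0$, so $T\eta=0=S_{\phi,\omega}\eta$.

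\textbf{Step 4 (conclusion).} By Steps 2 and 3, $S_{\phi,\omega}$ and $JT$ agree on the core $\mathcal{D}$, and $JT$ is closed. Hence $\overline{S_{\phi,\omega}}\subseteq JT$. Multiplying by $J$ and using \eqref{eq: polar decomp of S} gives $\Delta_{\phi,\omega}^{1/2}=J\,\overline{S_{\phi,\omega}}\subseteq T$. By Step 1 and the opening remark, $\Delta_{\phi,\omega}^{1/2}=T$.

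The main obstacle is the domain bookkeeping. We must verify that $\mathcal{D}$ really lies in the domain of the product $\pi(\xi_\phi)\pi'(\widetilde{\xi_\omega})$ before closure. The naive choice $x\xi_\omega$ fails, since $xe$ need not be in $L_2(\mathscr{M},\tau)$ when $\tau(e)=\infty$; this is why the truncations $e_n$ are introduced. We must also justify the strong commutation in Step 1 rigorously, by passing from bounded left and right multiplications to their unbounded extensions \eqref{eq: pi,pi' of x}.
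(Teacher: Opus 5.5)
\paragraph{An independent route, correct in outline.} The paper gives no proof of this statement; it imports it from \cite[Theorem $8'$]{Luczak2022Sep}. Your argument is therefore an independent, self-contained route, and its architecture is sound. You verify $S_{\phi,\omega}\subseteq JT$ on a core, pass to closures, and use $\overline{S_{\phi,\omega}}=J\Delta_{\phi,\omega}^{1/2}$ to get $\Delta_{\phi,\omega}^{1/2}\subseteq T$. Maximality of self-adjoint operators then closes the argument. Compared with the bare citation, this makes visible exactly where $w(0)=0$ and the finite-trace truncations enter.

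The truncations $e_n$ are indeed essential. A direct computation gives $\int w^2\,d\|\xi_\omega E_\omega(\lambda)\|_2^2=\tau(s_{\mathscr{M}}(\omega))$, so $\xi_\omega\in D(\pi'(\widetilde{\xi_\omega}))$ only when $\tau(s_{\mathscr{M}}(\omega))<\infty$.

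To kill the $\eta$-part you need the nontrivial inequality $\pi'(e)\le s_{\mathscr{M}'}(\omega)$, which ``one checks via Lemma~\ref{lem:E zero infty}'' does not supply. It follows at once from $J\mathscr{M}J=\mathscr{M}'$ and $J\xi_\omega=\xi_\omega$. These give $s_{\mathscr{M}'}(\omega)=Js_{\mathscr{M}}(\omega)J=J\pi(e)J=\pi'(e)$, using Lemma~\ref{lem: s M of omega}.

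\paragraph{The step that does not follow as written.} In Step 3 you invoke Proposition~\ref{prop:pi,pi' for self-adjoint} to obtain domain membership. That proposition only says that a vector already in $D(\pi(x))$ (resp.\ $D(\pi'(x))$) is sent to $xa$ (resp.\ $bx$). It does not infer domain membership from $xa\in L_2(\mathscr{M},\tau)$. Moreover, it is stated for $\tau$-measurable $x$, whereas $\widetilde{\xi_\omega}$ is typically not $\tau$-measurable. For example, for a faithful normal state on $\mathbb{B}(\ell^2)$ every $E_{\xi_\omega}((0,\varepsilon))$ has infinite trace.

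\paragraph{How to repair it.} Both facts can be checked directly from the spectral integrals.
\begin{itemize}
\item Put $\zeta:=x\xi_\omega e_n$. Then $\zeta=\pi'(e_n)\zeta$, so its spectral measure for $\pi'\circ E_\omega$ is carried by $[1/n,n]$, where $w\le n$. Hence $\zeta\in D(\pi'(\widetilde{\xi_\omega}))$, and $\pi'(\widetilde{\xi_\omega})\zeta=\zeta\,(\widetilde{\xi_\omega}e_n)=x\bigl(\xi_\omega(\widetilde{\xi_\omega}e_n)\bigr)=xe_n$. This is a bounded right multiplication, with $\widetilde{\xi_\omega}e_n\in\mathscr{M}$.
\item Put $q_N:=E_\phi([0,N])$. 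Then $\int_{[0,N]}\lambda^2\,d\|E_\phi(\lambda)xe_n\|_2^2=\|q_N\xi_\phi xe_n\|_2^2\le\|\xi_\phi xe_n\|_2^2<\infty$. Hence $xe_n\in D(\pi(\xi_\phi))$, and only then does Proposition~\ref{prop:pi,pi' for self-adjoint} give $\pi(\xi_\phi)(xe_n)=\xi_\phi xe_n$.
\item For $\eta$ with $\eta e=0$, the spectral measure of $\eta$ for $\pi'\circ E_\omega$ is carried by $\{0\}$. Since $w(0)=0$, this gives $\eta\in D(\pi'(\widetilde{\xi_\omega}))$ and $T\eta=0$.
\end{itemize}
With these repairs your proof is complete.
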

The operator $w(\xi_\omega)$ in the previous theorem is defined via the Borel functional calculus (see Theorem~\ref{thm: Borel functional calculus}), i.e. if $\xi_\omega= \int_{[0,+\infty)} \lambda dE_{\omega} (\lambda) $ is the spectral decomposition of the positive self-adjoint operator $\xi_\omega$, then $w(\xi_\omega)=\int_{[0,+\infty)} w( \lambda) dE_{\omega} (\lambda)$.

The multiplicative form of the Spectral Theorem for strongly commuting normal operators is very important to our work. It is the key that allows us to create a bridge between the quantum and the classical states. Before we present it, we review the definition of  strongly commuting normal operators, using \cite[Proposition 5.27]{Schmudgen}.

\begin{definition}\label{defn: strongly commuting}
    Let $T_1,\text{ }T_2$ be normal operators on a Hilbert space $\mathscr{H}$ and let $E_{T_1}(\cdot),\text{ }E_{T_2}(\cdot) $ be their spectral measures. We say that $T_1,\text{ }T_2$ are {\bf{strongly commuting}} if their spectral measures commute, that is,
    \[ E_{T_1}(A)E_{T_2}(B)=E_{T_2}(B)E_{T_1}(A) \text{ for all Borel sets $A,\text{ }B \subseteq \mathbb{C}$}.\]
\end{definition}

\begin{theorem}\label{thm:Multiplication form of Spectral} \cite[Theorem 7.3]{BinGui} 
Let $T_1,\ldots,\text{ }T_{N}$ be strongly commuting normal operators (possibly unbounded) on a Hilbert space $\mathscr{H}$. Then, there exist a set $(\mu_{n})_{n\in I}$ of finite Borel measures on $\mathbb{C}^{N}$ and a unitary map $U:\mathscr{H}\to \oplus_{n\in I} L_2(\mathbb{C}^{N},\mu_n)$ such that
\[ UT_jU^{-1}=M_{z_j} \text{,   } \forall j=1,\ldots,\text{ }N  \]
where $z_j$ is the  map sending $(\zeta_1,\ldots,\text{ }\zeta_N)\in \mathbb{C}^{N}$ to $\zeta_j\in \mathbb{C}$ and $M_{z_j}$ denotes the multiplication operator by the function $z_j$, i.e. $M_{z_j} (y)=z_jy$ (the product of the two functions) for any $y\in \oplus_{n\in I} L_2(\mathbb{C}^{N},\mu_n)$. Recall that the domain of $M_{z_j}$ is the set of all $y\in \oplus_{n\in I} L_2(\mathbb{C}^{N},\mu_n)$ such that $z_jy \in \oplus_{n\in I} L_2(\mathbb{C}^{N},\mu_n)$. Moreover, if the operators $T_1,\ldots,$ $T_N$ are positive self-adjoint, then $\mathbb{C}^N$ can be replaced by $[0,+\infty)^N$.
\end{theorem}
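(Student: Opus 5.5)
The approach is to build a joint spectral measure for $T_1,\ldots,T_N$ on $\mathbb{C}^N$, split $\mathscr{H}$ into cyclic subspaces for that measure, and identify each cyclic subspace with an $L_2$-space of a finite scalar spectral measure. Concretely, I would first reduce to bounded operators, then diagonalize, then go back to the unbounded operators.

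\textbf{Step 1: joint spectral measure.} Let $\varphi:\mathbb{C}\to\mathbb{D}$, $\varphi(z)=z/(1+|z|)$. This is a Borel homeomorphism onto the open unit disc. Put $B_j:=\varphi(T_j)=\int\varphi(\lambda)\,dE_{T_j}(\lambda)$.
\begin{itemize}
\item Each $B_j$ is a bounded normal operator whose spectral measure is $E_{T_j}\circ\varphi^{-1}$.
\item By Definition~\ref{defn: strongly commuting}, all spectral projections of the $B_j$ commute. Hence the $B_j$, their adjoints, and all their spectral projections generate an abelian unital $C^*$-algebra $\mathscr{C}$.
\item Gelfand theory for $\mathscr{C}$, together with the Riesz representation theorem, gives a projection-valued measure $F$ on the joint spectrum $\subseteq\overline{\mathbb{D}}^N$.
\item $F$ satisfies $\int w_j\,dF=B_j$, and its marginals are the spectral measures of the $B_j$, so $F(\overline{\mathbb{D}}^{j-1}\times\partial\mathbb{D}\times\overline{\mathbb{D}}^{N-j})=0$.
\end{itemize}
Transport $F$ through $\Phi:=\varphi^{-1}\times\cdots\times\varphi^{-1}$ to obtain a projection-valued measure $E$ on $\mathbb{C}^N$. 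Its $j$-th marginal is $E_{T_j}$, so $E(\mathbb{C}^{j-1}\times A\times\mathbb{C}^{N-j})=E_{T_j}(A)$. Consequently $T_j=\int z_j\,dE$, with the usual unbounded domain from Theorem~\ref{thm: Borel functional calculus}.

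\textbf{Step 2: cyclic decomposition.} By Zorn's lemma, choose a maximal family of unit vectors $(x_n)_{n\in I}$ such that the cyclic subspaces $\mathscr{H}_n:=\overline{\mathrm{span}}\{E(B)x_n: B \text{ Borel}\}$ are mutually orthogonal. Each $\mathscr{H}_n$ reduces every $E(B)$. Maximality then forces $\mathscr{H}=\oplus_n\mathscr{H}_n$: a nonzero vector orthogonal to all $\mathscr{H}_n$ would generate a new orthogonal cyclic subspace.

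Set $\mu_n(B):=\braket{x_n}{E(B)x_n}$. This is a finite Borel measure on $\mathbb{C}^N$ of total mass at most $1$. Define $U_n(E(B)x_n):=\mathscr{X}_B$ and extend linearly. It is isometric because
\[
\braket{E(B)x_n}{E(C)x_n}=\mu_n(B\cap C)=\braket{\mathscr{X}_B}{\mathscr{X}_C}_{L_2(\mu_n)}.
\]
Hence it extends to an isometry $\mathscr{H}_n\to L_2(\mathbb{C}^N,\mu_n)$. It is onto because simple functions are dense in $L_2(\mu_n)$. Put $U:=\oplus_n U_n$.

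\textbf{Step 3: intertwining.} For bounded Borel $g$, $U_n\big(\int g\,dE\big)x=g\cdot U_nx$. This holds first on simple $g$ and then passes to the limit. For the unbounded coordinate functions:
\begin{itemize}
\item We have $x\in D(T_j)$ iff $\int|z_j|^2\,d\braket{x}{E(\cdot)x}<\infty$.
\item For $x\in\mathscr{H}_n$, the measure $\braket{x}{E(\cdot)x}$ equals $|U_nx|^2\,d\mu_n$. So the condition above is exactly $z_jU_nx\in L_2(\mu_n)$.
\item Summing over $n$ gives $U(D(T_j))=D(M_{z_j})$.
\item Approximating $z_j$ by the truncations $z_j\mathscr{X}_{\{|z_j|\le k\}}$ then yields $UT_jU^{-1}=M_{z_j}$.
\end{itemize}

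\textbf{Positive case.} If all $T_j$ are positive self-adjoint, each $E_{T_j}$ is supported on $[0,+\infty)$. Therefore $E$, and hence every $\mu_n$, is concentrated on $[0,+\infty)^N$, so we may restrict all measures to $[0,+\infty)^N$.

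The main obstacle I expect is Step 1: getting a genuinely countably additive joint measure whose marginals are the original $E_{T_j}$. Passing to the bounded transforms $B_j$ and the abelian $C^*$-algebra $\mathscr{C}$ is the cleanest route, and the marginal identity has to be checked on continuous functions first and then extended by monotone class. Step 3 also needs some care with unbounded domains, but that is routine by truncation.
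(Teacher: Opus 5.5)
Your proof is correct. The paper gives no argument for this statement; it quotes it from \cite[Theorem 7.3]{BinGui} and uses it as a black box. So there is no internal proof to compare against, and your argument is a self-contained substitute for that citation. Its steps are the bounded transform $\varphi(z)=z/(1+|z|)$, the joint spectral measure $E$ on $\mathbb{C}^N$ with marginals $E_{T_j}$, the cyclic decomposition, and truncation for the unbounded coordinates.

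What your route buys beyond the citation is the joint spectral measure $E$ itself. Run your Step 3 with $z_j$ replaced by an arbitrary Borel $h:\mathbb{C}^N\to\mathbb{C}$; the same argument gives $U\bigl(\int h\,dE\bigr)U^{-1}=M_h$, domains included, for every such $h$. Take $T_1=\pi(\xi_\phi)$ and $T_2=\pi'(\xi_\omega)$, with $z_1,z_2$ playing the roles of $g_\phi,g_\omega$. This identity then directly yields several facts the paper later assembles from Lemma~\ref{lem: applying the Borel functional calculus}, \cite[Example 5.3]{Schmudgen}, \cite[Lemma 2]{Luczak2022Sep} and \cite[Theorem 4]{Nelson1974}:
\begin{itemize}
\item $U\pi'(\widetilde{\xi_\omega})U^{-1}=M_{w(z_2)}$;
\item $U\pi'(E_\omega(\{0\}))U^{-1}=M_{\mathscr{X}_{\{z_2=0\}}}$, as used in Claim 3 of Proposition~\ref{prop: f phi,omega are distributions};
\item $U\Delta_{\phi,\omega}^{1/2}U^{-1}=M_{z_1w(z_2)}$, because the closure of $\int z_1\,dE\int w(z_2)\,dE$ is $\int z_1w(z_2)\,dE$. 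This gives Theorem~\ref{prop: application of the multiplication form of spectral theorem} and, in turn, Corollary~\ref{thm:fDelta}.
\end{itemize}
The citation keeps the paper short, at the price of these separate verifications.

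Two small points to tidy:
\begin{itemize}
\item If $\mathscr{C}$ contains all the spectral projections, its maximal ideal space is in general not the joint spectrum of $(B_1,\ldots,B_N)$. Either push the projection-valued measure forward along the continuous map $\chi\mapsto(\chi(B_1),\ldots,\chi(B_N))$, or simply take $\mathscr{C}=C^*(1,B_1,\ldots,B_N)$. The latter is already abelian, since commuting spectral measures force $B_j$ to commute with $B_k$ and $B_k^*$.
\item $\mu_n(\mathbb{C}^N)=\|x_n\|^2=1$ exactly. The well-definedness of the linear extension of $U_n$ follows from the same isometry identity you already wrote down.
\end{itemize}
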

In the setting of Theorem~\ref{thm:Relative modular operator}, we will apply Theorem~\ref{thm:Multiplication form of Spectral} for the operators $\pi(\xi_\phi)$, $\pi'(\xi_\omega)$. In the following lemma, we show that these operators are positive self-adjoint and strongly commuting. The proof of the lemma is rather obvious, but we present it for the sake of completeness. 

\begin{lemma}\label{remark: applying spectral theorem on pi of xi phi}
     Let $\mathscr{M}$ be a semifinite von~Neumann algebra with a faithful semifinite normal trace $\tau$ and let $\phi,\text{ }\omega$ be normal states on $\mathscr{M}$. Let  $\xi_\phi,\text{ }\xi_\omega \in L_{2}(\mathscr{M},\tau)_{+}$ be the  vector representatives corresponding to $\phi,\text{ }\omega$, respectively, satisfying Equation~\eqref{eq: equation for vector representative}.  The operators $T_1=\pi(\xi_{\phi}),\text{ } T_2=\pi'(\xi_{\omega})$  are  positive self-adjoint  strongly commuting operators. 
\end{lemma}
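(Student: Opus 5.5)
The plan is to work directly from the definitions in \eqref{eq: pi,pi' of x}. Let $\xi_\phi=\int_{[0,+\infty)}\lambda\, dE_\phi(\lambda)$ and $\xi_\omega=\int_{[0,+\infty)}\lambda\, dE_\omega(\lambda)$ be the spectral decompositions of the positive self-adjoint $\tau$-measurable operators $\xi_\phi,\xi_\omega$. Since $\xi_\phi$ and $\xi_\omega$ are affiliated with $\mathscr{M}$, all their spectral projections $E_\phi(A)$ and $E_\omega(B)$ lie in $\mathscr{M}_{proj}$.

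\textbf{Step 1: the transported measures are spectral measures on $[0,+\infty)$.} I first check that $\pi(E_\phi(\cdot))$ and $\pi'(E_\omega(\cdot))$ are projection-valued measures on $Borel(\mathbb{R})$ supported in $[0,+\infty)$.

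For $\pi$, this holds because $\pi$ is a normal unital $*$-homomorphism $\mathscr{M}\to\mathbb{B}(L_2(\mathscr{M},\tau))$. It therefore preserves projections, sends $1$ to $1$, and is $\sigma$-weakly continuous, so countable additivity in the strong operator topology is preserved.

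For $\pi'$, the map is a normal unital $*$-anti-homomorphism. On projections this suffices: $\pi'(p)^2=\pi'(p^2)=\pi'(p)$, and $\pi'(p)^*=\pi'(p^*)$ since $\langle \eta, \xi p\rangle=\tau(\eta^*\xi p)=\tau((\eta p)^*\xi)$ by cyclicity \eqref{Eq:cyclicitity}. Strong countable additivity follows from $\|\xi\sum_n E(A_n)-\xi E(\cup A_n)\|_2\to0$, which is a consequence of normality of $\tau$.

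Because $E_\phi([0,+\infty))=1$, the operator $\pi(\xi_\phi)=\int_{[0,+\infty)}t\, d\pi(E_\phi(t))$ is, by Theorem~\ref{thm: Borel functional calculus} applied to this spectral measure, a self-adjoint operator (the integrand is real) and positive (the integrand is nonnegative on the support). The same argument applies to $\pi'(\xi_\omega)$. By the uniqueness in Theorem~\ref{thm: spectral decomposition}, their spectral measures are exactly $\pi(E_\phi(\cdot))$ and $\pi'(E_\omega(\cdot))$.

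\textbf{Step 2: strong commutation.} By Definition~\ref{defn: strongly commuting}, it suffices to show
\[
\pi(E_\phi(A))\,\pi'(E_\omega(B))=\pi'(E_\omega(B))\,\pi(E_\phi(A))
\]
for all Borel sets $A,B$. For any $\xi\in L_2(\mathscr{M},\tau)$ this reads $E_\phi(A)(\xi E_\omega(B))=(E_\phi(A)\xi)E_\omega(B)$, which is associativity of the product in the $*$-algebra $\widetilde{\mathscr{M}}$ (with strong products, \cite[Proposition 4.10]{Hiai2021Apr}). More generally, $\pi(x)\pi'(y)=\pi'(y)\pi(x)$ for all $x,y\in\mathscr{M}$, i.e.\ $\pi'(\mathscr{M})\subseteq\pi(\mathscr{M})'$.

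\textbf{Main obstacle.} The only step needing care is confirming that $\pi'(E_\omega(\cdot))$ really is a projection-valued measure, that is, that right multiplication is self-adjoint on projections and strongly $\sigma$-additive. I would handle this via cyclicity of $\tau$ and normality as sketched above. Alternatively, I would note that $\pi'(x)=J\pi(x^*)J$, since $(x^*\xi^*)^*=\xi x$, which transfers all properties of $\pi$ directly to $\pi'$.
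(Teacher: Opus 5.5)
Your proposal is correct and follows the paper's route. You identify $\pi\circ E_\phi$ and $\pi'\circ E_\omega$ as the spectral measures of $\pi(\xi_\phi)$ and $\pi'(\xi_\omega)$ to get positivity and self-adjointness, then verify commutation of spectral projections via $E_\phi(A)\xi E_\omega(B)$. Your extra check that $\pi'\circ E_\omega$ really is a projection-valued measure, via cyclicity and normality or via $\pi'(x)=J\pi(x^*)J$, is sound; the paper takes this point for granted.
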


\begin{proof}
    Let \[\xi_\phi=\int_{[0,+\infty)} \lambda dE_{\phi}(\lambda),\quad \xi_\omega=\int_{[0,+\infty)} \lambda dE_{\omega}(\lambda)\] be the spectral decompositions of $\xi_\phi$, $\xi_\omega$  with spectral measures $E_\phi(\cdot),\text{ }E_\omega(\cdot)$ respectively. Since $\xi_\phi$, $\xi_\omega$ are self-adjoint operators affiliated with $\mathscr{M}$, it is true that $E_\phi(B)$, $E_\omega (B)\in \mathscr{M}$ for every Borel set $B\subseteq[0,+\infty)$. The measures  $\pi \circ E_\phi, \text{ }\pi' \circ E_\omega: Borel\left([0,+\infty)\right)\to \mathbb{B}\left(L_2(\mathscr{M},\tau)\right)$ are defined via Equations~\eqref{Eq: Left Multiplication} and \eqref{Eq: right multiplication}, and they  are the spectral measures of $\pi(\xi_\phi)$ and $\pi'(\xi_\omega)$, respectively. By Equation~\eqref{eq: pi,pi' of x}, we have 
\[\pi(\xi_\phi)=\int_{[0,+\infty)} \lambda \quad d(\pi\circ E_\phi)(\lambda), \quad \pi'(\xi_\omega)=\int_{[0,+\infty)} \lambda \quad d(\pi'\circ E_\omega)(\lambda). \]
Therefore, $\pi(\xi_\phi),\text{ }\pi'(\xi_\omega)$ are positive self-adjoint operators by the spectral theory. It is obvious that
\begin{align*} [(\pi\circ E_\phi)(A)(\pi'\circ E_\omega)(B) ](\xi)=E_\phi(A) \xi E_\omega(B)=
[(\pi'\circ E_\omega)(B)(\pi\circ E_\phi)(A) ](\xi),
\end{align*}
for all $A,\text{ }B$ Borel subsets of $[0,+\infty)$ and  $\xi \in L_2(\mathscr{M},\tau)$. Therefore, $\pi(\xi_\phi)\text{ and }\pi'(\xi_\omega)$ are strongly commuting by Definition~\ref{defn: strongly commuting}. 
\end{proof}

In the next theorem, we combine Theorem~\ref{thm:Relative modular operator} and Theorem~\ref{thm:Multiplication form of Spectral} to establish a formula for the relative modular operator in terms of multiplication operators. This is a key result in our work. Before we present the theorem, we recall that if $(X,\Sigma,\mu)$ is a measure space, then the commutative $L_2$-space  $L_2(X,\Sigma,\mu)$ (or simply $L_2(X,\mu)$) is defined as \[L_2(X,\mu):= \Bigl\{f:X\to \mathbb{C} : \int_{X}|f(x)|^2d\mu(x)<+\infty \Bigr\},  \]
where we identify functions that are equal $\mu$-almost everywhere. The space $L_2(X,\mu)$ is a Hilbert space with inner product $\braket{f}{g}_{L_2(X,\mu)}:=\int_{X} \overline{f(x)}g(x)d\mu(x)$ for all $f$, $g\in L_2(X,\mu)$. 

\begin{theorem}
    
\label{prop: application of the multiplication form of spectral theorem}
    Let $\mathscr{M}$ be a semifinite von~Neumann algebra with a faithful semifinite normal trace $\tau$ and  $\phi$, $\omega$ be normal states on $\mathscr{M}$. Let $\xi_\phi,\text{ }\xi_\omega \in L_{2}(\mathscr{M},\tau)_{+}$ be the vector representatives of $\phi,\text{ } \omega $ respectively, satisfying Equation~\eqref{eq: equation for vector representative}, and  $\Delta_{\phi,\omega}$ be the relative modular operator with respect to $\phi$ and $\omega$. Then, there exists a measure space $(X, \Sigma,\mu)$, a unitary map $U:L_2(\mathscr{M},\tau)\to L_2(X,\mu) $ and two Borel measurable functions  $g_\phi$, $g_\omega:X\to [0,+\infty)$ (where both functions as well as the unitary map $U$ depend on both states $\phi$ and $\omega$) such that
       \begin{align} \label{Eq: multiplication form we use}
U\pi(\xi_\phi)U^{-1}=M_{g_\phi}, \quad  U\pi' (\xi_\omega)U^{-1}=M_{g_\omega}  
\end{align}
\begin{align}\label{eq: square root of Delta mult oper}
\text{ and } \quad U\Delta_{\phi,\omega}^{\frac{1}{2}}U^{-1}=M_{g_{\phi}w(g_{\omega})},
\end{align}
where $w(\cdot)$ is the function defined in Equation~\eqref{xi omega tilde}.
\end{theorem}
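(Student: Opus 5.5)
We apply Theorem~\ref{thm:Multiplication form of Spectral} with $N=2$ to the operators $T_1=\pi(\xi_\phi)$ and $T_2=\pi'(\xi_\omega)$. By Lemma~\ref{remark: applying spectral theorem on pi of xi phi}, these are positive self-adjoint and strongly commuting. The theorem yields finite Borel measures $(\mu_n)_{n\in I}$ on $[0,+\infty)^2$ and a unitary $U:L_2(\mathscr{M},\tau)\to\oplus_{n\in I}L_2([0,+\infty)^2,\mu_n)$ with $UT_jU^{-1}=M_{z_j}$.

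We then repackage the direct sum as a single $L_2$-space. Put $X:=I\times[0,+\infty)^2$ (a disjoint union of copies of $[0,+\infty)^2$), let $\Sigma$ consist of the sets whose $n$-th slices are Borel for every $n$, and define $\mu(A):=\sum_{n}\mu_n(A_n)$. The identification $(y_n)_n\mapsto y$ with $y(n,\zeta)=y_n(\zeta)$ is a unitary $\oplus_n L_2(\mu_n)\cong L_2(X,\mu)$, and it carries $M_{z_j}$ to multiplication by $g(n,\zeta_1,\zeta_2)=\zeta_j$, including domains. Composing, we still call the resulting unitary $U$, and we set $g_\phi(n,\zeta_1,\zeta_2)=\zeta_1$ and $g_\omega(n,\zeta_1,\zeta_2)=\zeta_2$. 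These are measurable and nonnegative, which gives \eqref{Eq: multiplication form we use}.

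For \eqref{eq: square root of Delta mult oper}, we start from Theorem~\ref{thm:Relative modular operator}: $\Delta_{\phi,\omega}^{1/2}=\overline{\pi(\xi_\phi)\pi'(w(\xi_\omega))}$. By \eqref{Eq: pi of f of x}, $\pi'(w(\xi_\omega))=w(\pi'(\xi_\omega))$. Lemma~\ref{lem: applying the Borel functional calculus} then gives $Uw(\pi'(\xi_\omega))U^{-1}=w(M_{g_\omega})$. Applying the functional calculus to a multiplication operator gives $w(M_{g_\omega})=M_{w\circ g_\omega}$, the standard fact that the spectral measure of $M_g$ is $B\mapsto M_{\mathscr{X}_{g^{-1}(B)}}$. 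It follows that
\[
U\Delta_{\phi,\omega}^{1/2}U^{-1}=\overline{U\pi(\xi_\phi)U^{-1}\,U w(\pi'(\xi_\omega))U^{-1}}=\overline{M_{g_\phi}M_{w(g_\omega)}},
\]
where we used that conjugation by a unitary commutes with closure and with products, domains included.

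The main obstacle is the final identification $\overline{M_{g_\phi}M_{w(g_\omega)}}=M_{g_\phi w(g_\omega)}$, which must hold with equality of domains, not just agreement on a core. The inclusion $M_{g_\phi}M_{w(g_\omega)}\subseteq M_{g_\phi w(g_\omega)}$ is immediate, and the right-hand side is closed because it is self-adjoint. For the reverse direction, take $y\in D(M_{g_\phi w(g_\omega)})$ and the truncations $y_k:=\mathscr{X}_{\{w(g_\omega)\le k\}}\,y$. Then $w(g_\omega)y_k\in L_2$ and $g_\phi w(g_\omega) y_k\in L_2$, so $y_k\in D(M_{g_\phi}M_{w(g_\omega)})$. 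By dominated convergence, $y_k\to y$ and $g_\phi w(g_\omega)y_k\to g_\phi w(g_\omega)y$. Hence $y$ lies in the domain of the closure, and the two operators coincide. (Equivalently, one may cite \cite[Theorem 4.16]{Schmudgen}: the closure of a product of functions of commuting spectral measures is the function of the product.) Finally, $g_\phi w(g_\omega)$ is a finite-valued measurable function, so the multiplication operator $M_{g_\phi w(g_\omega)}$ is well defined.
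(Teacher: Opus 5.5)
Your proposal is correct and follows the paper's proof essentially step for step. Like the paper, you apply the multiplication form of the spectral theorem to the strongly commuting pair $\pi(\xi_\phi)$, $\pi'(\xi_\omega)$, repackage $\oplus_{n\in I}L_2([0,+\infty)^2,\mu_n)$ as a single $L_2(X,\mu)$, transport $w$ via Equation~\eqref{Eq: pi of f of x} and Lemma~\ref{lem: applying the Borel functional calculus}, and conjugate the formula of Theorem~\ref{thm:Relative modular operator} by $U$. The only difference is the final identity $\overline{M_{g_\phi}M_{w(g_\omega)}}=M_{g_\phi w(g_\omega)}$: you prove it directly by truncation, which is valid because $w(g_\omega)$ is finite-valued, so $\{w(g_\omega)\le k\}\uparrow X$; the paper simply cites \cite[Theorem 4]{Nelson1974}.
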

Note that Equations~\eqref{Eq: multiplication form we use}, \eqref{eq: square root of Delta mult oper}  hold on the domains of the multiplication operators. Recall that if $M_g$ is the multiplication operator with respect to a Borel measurable function $g:X \to \mathbb{C}$, the domain of $M_g$ is the set \[\{h\in L_2(X,\mu): gh\in L_2(X,\mu)\}=\Bigl\{ h\in L_2(X,\mu): \int_{X} |g(x)|^2|h(x)|^2d\mu(x)<+\infty\Bigr\}.\]
\begin{proof}
      By Lemma~\ref{remark: applying spectral theorem on pi of xi phi}, the operators $T_1=\pi(\xi_\phi)  \text{ and } T_2=\pi'(\xi_\omega)$ are  positive self-adjoint   strongly commuting on the Hilbert space $L_2(\mathscr{M},\tau)$. By Theorem~\ref{thm:Multiplication form of Spectral},  there exist a set $(\mu_n)_{n\in I}$ of finite Borel measures on $[0,+\infty)^2$ and a unitary map  $U:L_2(\mathscr{M},\tau)\to \oplus_{n\in I} L_2([0,+\infty)^2,\mu_n)$ such that
      \begin{align} \label{multiplication form that we won't use}
U\pi(\xi_\phi)U^{-1}=M_{z_1} \text{  and  }  U\pi'(\xi_\omega)U^{-1}=M_{z_2}
\end{align}
where $z_j$ is the  map sending $(\zeta_1,\zeta_2)\in [0,+\infty)^2$ to $\zeta_j$ for $j=1$, $2$. We observe that  $\oplus_{n\in I} L_2([0,+\infty)^2,\mu_n)=L_2(X,\mu) \text{ where }$
\begin{itemize}
    \item $X := \sqcup_{ n \in I} [0,+\infty)^2=\cup_{n\in I}\left([0,+\infty)^2\times\{n\} \right)$,
    \item $ \mu:=\oplus_{n\in I}\mu_n \text{ is the measure defined as } 
 \mu(A):= \sum_{n\in I} \mu_{n}\big( A\cap ([0,+\infty)^2 \times \{ n \}) \big)$  for every  subset $A$ of $X$ in the smallest $\sigma$-algebra $\Sigma$ that contains the set \newline$ \cup_{n\in I} \left( Borel\left([0,+\infty)^2\right) \times \{n\} \right)$.
\end{itemize}

Next, we define the functions $g_\phi,\text{ }g_\omega: X\to[0,+\infty)$ as 
\begin{align} \label{E:the_formulas_of_g}
\begin{split}
    g_\phi( (\zeta_1,\zeta_2,n))&:=\zeta_1, \text{ for all } (\zeta_1,\zeta_2) \in [0,+\infty)^2  \text{ and } n\in I, \\
     g_\omega( (\zeta_1,\zeta_2,n))&:=\zeta_2, \text{ for all } (\zeta_1,\zeta_2) \in [0,+\infty)^2 \text{ and } n\in I.
     \end{split}
\end{align}
Thus, we can rewrite the unitary map $U$ as $U:L_2(\mathscr{M},\tau)\to L_2(X,\mu)$ and Equation~\eqref{multiplication form that we won't use} becomes $U\pi(\xi_\phi)U^{-1}=M_{g_\phi}$ and $ U\pi'(\xi_\omega)U^{-1}=M_{g_\omega}$.

Next, we focus on proving Equation~\eqref{eq: square root of Delta mult oper}.
We begin by applying the Borel measurable function $w$, which was introduced in Equation~\eqref{xi omega tilde}, on Equation~\eqref{Eq: multiplication form we use} to obtain
\begin{align}\label{eq: U of pi prime of xi tilde}
w(U\pi'(\xi_\omega)U^{-1})&=w(M_{g_\omega}) \notag\\
\Leftrightarrow                Uw(\pi'(\xi_{\omega}))U^{-1}&=M_{w(g_\omega)} \quad \text{ (by Lemma~\ref{lem: applying the Borel functional calculus} and \cite[Example 5.3]{Schmudgen})} \notag\\
\Leftrightarrow U\pi'\left(w(\xi_\omega)\right)U^{-1}&=M_{w(g_\omega)} \quad \text{ (by Equation~\eqref{Eq: pi of f of x}}) \notag\\
\Leftrightarrow U\pi'(\widetilde{\xi_\omega})U^{-1}&=M_{w(g_\omega)} \quad \text{(since } \widetilde{\xi_\omega}=w(\xi_\omega) \text{)}.\end{align}
Now that we have established Equation~\eqref{eq: U of pi prime of xi tilde}, we finish the proof of the theorem. 
\begin{align*}
   U\Delta_{\phi,\omega}^{\frac{1}{2}}U^{-1}&=U\overline{ \pi(\xi_{\phi})\pi'(\widetilde{\xi_{\omega}})}U^{-1} \text{ (by Theorem~\ref{thm:Relative modular operator})}\\ 
&=\overline{U  \pi(\xi_{\phi})\pi'(\widetilde{\xi_{\omega}})U^{-1}} \quad \text{(by \cite[Lemma 2]{Luczak2022Sep})}\\
&=\overline{U  \pi(\xi_{\phi})U^{-1} U\pi'(\widetilde{\xi_{\omega}})U^{-1}}=\overline{M_{g_\phi} M_{w(g_\omega)}} \quad (\text{by Equations}~\eqref{Eq: multiplication form we use},~\eqref{eq: U of pi prime of xi tilde} \text{) }\\
&=M_{g_\phi w(g_\omega)},
\end{align*} 
where the last equality is true by \cite[Theorem 4]{Nelson1974}. 
\end{proof}
We close this section by generalizing Theorem~\ref{prop: application of the multiplication form of spectral theorem} in the next corollary, which will be useful in Theorem~\ref{lem: first term} 
where we study the first term of the quantum $f$-divergence.

\begin{corollary}\label{thm:fDelta} 
Let $\mathscr{M}$ be a semifinite von~Neumann algebra,  $\phi$, $\omega$ be normal states on $\mathscr{M}$ and $\Delta_{\phi,\omega}$ be the relative modular operator with respect to $\phi$ and $\omega$. Let $f:[0,+\infty)\to\mathbb{R}$ be a Borel measurable function. Let $U$ be the unitary map  and $g_\phi,\text{ }g_\omega: X\to [0,+\infty)$ be the functions that appear in  Theorem~\ref{prop: application of the multiplication form of spectral theorem}. If $f_\phi:=g_\phi^2$ and $f_\omega:=g_\omega^2$, then we obtain
\begin{align}\label{eq1:f(Delta phi,omega)}
Uf(\Delta_{\phi,\omega})U^{-1}=M_{f(f_{\phi}w(f_{\omega}))} 
\end{align}
where the equality holds on the domain of the multiplication operator, i.e.  $\{g\in L_2(X,\mu): f\left(f_{\phi}w(f_{\omega})\right)(\cdot)g(\cdot) \in L_2(X,\mu) \}. $
\end{corollary}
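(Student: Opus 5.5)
The plan is to reduce everything to functional calculus of multiplication operators, starting from Equation~\eqref{eq: square root of Delta mult oper} of Theorem~\ref{prop: application of the multiplication form of spectral theorem}. That theorem gives $U\Delta_{\phi,\omega}^{\frac{1}{2}}U^{-1}=M_{g_\phi w(g_\omega)}$, and $g_\phi w(g_\omega)\geq 0$ pointwise.

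First, I will apply the Borel function $s(\lambda):=\lambda^2$ on $[0,+\infty)$ to both sides. Since $\Delta_{\phi,\omega}$ is positive self-adjoint, the functional calculus gives $s(\Delta_{\phi,\omega}^{\frac{1}{2}})=\Delta_{\phi,\omega}$. By Lemma~\ref{lem: applying the Borel functional calculus},
\[
U\Delta_{\phi,\omega}U^{-1}=s\bigl(U\Delta_{\phi,\omega}^{\frac{1}{2}}U^{-1}\bigr)=s\bigl(M_{g_\phi w(g_\omega)}\bigr)=M_{(g_\phi w(g_\omega))^2},
\]
where the last equality is the functional calculus for multiplication operators \cite[Example 5.3]{Schmudgen}. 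This example says that $h(M_g)=M_{h\circ g}$ for real Borel $g$ and Borel $h$, with the natural maximal domain.

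Next comes a pointwise identity: $(g_\phi w(g_\omega))^2=f_\phi\, w(f_\omega)$. This holds because $w(\lambda)^2=w(\lambda^2)$ for every $\lambda\geq 0$. Indeed, both sides equal $1/\lambda^2$ if $\lambda>0$, and both equal $0$ if $\lambda=0$. Hence $U\Delta_{\phi,\omega}U^{-1}=M_{f_\phi w(f_\omega)}$.

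Finally, I will apply the Borel function $f$, extended by $0$ on $(-\infty,0)$ so that Lemma~\ref{lem: applying the Borel functional calculus} applies literally. This extension is harmless, since the spectral measures of $\Delta_{\phi,\omega}$ and of $M_{f_\phi w(f_\omega)}$ both live on $[0,+\infty)$. Lemma~\ref{lem: applying the Borel functional calculus} then gives
\[
Uf(\Delta_{\phi,\omega})U^{-1}=f\bigl(U\Delta_{\phi,\omega}U^{-1}\bigr)=f\bigl(M_{f_\phi w(f_\omega)}\bigr),
\]
and \cite[Example 5.3]{Schmudgen} once more identifies the right-hand side with $M_{f(f_\phi w(f_\omega))}$. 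The domain is $\{g\in L_2(X,\mu): f(f_\phi w(f_\omega))\,g\in L_2(X,\mu)\}$, as claimed. The domain equality $D(f(UTU^{-1}))=U(D(f(T)))$ from Lemma~\ref{lem: applying the Borel functional calculus} is what makes the stated equality hold as operators with matching domains.

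The only step requiring care is justifying $h(M_g)=M_{h\circ g}$ for the multiplication operator on $L_2(X,\mu)$. The measure $\mu$ here is a direct sum of finite measures and need not be $\sigma$-finite. I would verify this directly rather than rely on a $\sigma$-finiteness hypothesis. The spectral measure of $M_g$ is $B\mapsto M_{\mathscr{X}_B\circ g}$: it is a projection-valued measure, and integrating $\lambda$ against it reproduces $M_g$. Then
\[
\int |h(\lambda)|^2\,d\braket{k}{E(\lambda)k}=\int_X |h(g(x))|^2|k(x)|^2\,d\mu(x),
\]
which gives both the domain and the action of $h(M_g)$. This computation is local on each summand $L_2([0,+\infty)^2,\mu_n)$, so the direct-sum structure causes no difficulty.
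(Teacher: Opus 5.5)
Your proposal is correct and is essentially the paper's argument. The paper applies the single Borel function $K(t)=f(t^2)$ to $U\Delta_{\phi,\omega}^{1/2}U^{-1}=M_{g_\phi w(g_\omega)}$ via Lemma~\ref{lem: applying the Borel functional calculus} and \cite[Example 5.3]{Schmudgen} and then uses $w(t)^2=w(t^2)$; your factoring of $K$ as squaring followed by $f$ is cosmetic (it only trades the composition rule $K(\Delta_{\phi,\omega}^{1/2})=f(\Delta_{\phi,\omega})$ for $(\Delta_{\phi,\omega}^{1/2})^2=\Delta_{\phi,\omega}$), and your direct check of $h(M_g)=M_{h\circ g}$ for the possibly non-$\sigma$-finite $\mu$ is a sound extra precaution.
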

\begin{proof}
First, we define the  function $K:\mathbb{R}\to \mathbb{R}$ by the formula $K(t):=f(t^2)$. Then,
\begin{align*}
U\Delta_{\phi,\omega}^{\frac{1}{2}}U^{-1}&=M_{g_\phi w(g_\omega)} \quad \text{ (by Theorem~\ref{prop: application of the multiplication form of spectral theorem}) }\\
  \Rightarrow  K(U\Delta_{\phi,\omega}^{\frac{1}{2}}U^{-1})&=K(M_{g_\phi w(g_\omega)}) \quad \ \text{ (by applying the function } K \text{)}\\
    \Leftrightarrow U K(\Delta_{\phi,\omega}^{\frac{1}{2}}) U^{-1}&=M_{K(g_\phi w(g_\omega))} \quad\text{ (by Lemma~\ref{lem: applying the Borel functional calculus} and \cite[Example 5.3]{Schmudgen})}  \\
    \Leftrightarrow Uf(\Delta_{\phi,\omega})U^{-1}&=M_{f(g_\phi^2 w^2(g_\omega))} \quad \text{ (since } K(t)=f(t^2) \text{)} \\
    &=M_{f(f_\phi w(g_\omega^2))} \quad \text{ (since } f_\phi:=g_\phi^2 \text{ and the functions }  w,t^2 \text{ commute)}\\
    &=M_{f(f_\phi w(f_\omega))} \quad \text{ (since } f_\omega:=g_\omega^2 \text{)}.
\end{align*}
\end{proof}

\section{The Relation between Quantum and Classical  \texorpdfstring{\ensuremath{f}}{}-divergences via Nussbaum-Szko{\l}a Distributions}\label{sec:f-divergence} 
Our goal in this section is to prove our main result, namely Theorem~\ref{maintheorem}. The proof is provided towards the end of this section. We begin  with some standard notation and conventions related to a convex  function $f:(0,+\infty)\to \mathbb{R}$. We fix the notation
 \begin{align}\label{eq:convex-f-notations-1}
    f(0^{+}):=\lim_{t\to 0^{+}} f(t) \text{ and } f^{'}(+\infty):=\lim_{t\to +\infty} \frac{f(t)}{t} \text{ .}
\end{align}
Both $f(0^{+})$ and $f^{'}(+\infty)$ exist in $(-\infty,+\infty]$ due to the fact that $f$ is convex. We also fix the common conventions
\begin{align}\label{eq:convex-f-notations-2}
   \begin{split}
      & 0f\left(\frac{0}{0}\right) := 0,\text{ } 0\cdot (\pm\infty):= 0, \text{ } a \cdot (\pm\infty) := \pm\infty \text{ for  } a >0,\\
    & 0f\left(\frac{a}{0}\right) :=  \lim\limits_{t\rightarrow 0^{+}}tf\left(\frac{a}{t}\right) = a\lim\limits_{s\rightarrow +\infty}\frac{f(s)}{s}=af'(+\infty)  \text{ for  } a>0.\\
   \end{split}
 \end{align} 
Next, we define the classical $f$-divergence following \cite{Liese-Vajda-2006}. The notion was first introduced by Csisz{\ifmmode\acute{a}\else\'{a}\fi}r in \cite{Csiszar1963}.
 \begin{definition}
 \label{defn:classical f divergence} Let $P$,  $ Q$ be probability measures on a measurable space $(X,\Sigma)$. Let $\nu$ be any $\sigma$-finite measure such that $P\ll\nu$ and $Q\ll\nu$ (i.e. if $\nu(A)=0$ for some $A\in \Sigma$, then $P(A)=Q(A)=0$). Let $p:=\frac{dP}{d\nu}$ and $q:=\frac{dQ}{d\nu}$  denote the Radon-Nikodym derivatives of $P$ and $Q$ respectively, with respect to $\nu$,  and let $f:(0,+\infty)\to \mathbb{R}$ be a convex function. The {\bf{ classical \boldmath{\texorpdfstring{\ensuremath{f}}{}}-divergence}} $D_f(P\|Q)$ between the classical states $P$ and $Q$  is defined in \cite[Equation (24)]{Liese-Vajda-2006} as
  \[D_f(P\|Q) := \int\limits_{X}qf\left(\frac{p}{q}\right)d\nu\]
 using the  notation in~\eqref{eq:convex-f-notations-1} and the conventions in~\eqref{eq:convex-f-notations-2}. If we expand this formula, we obtain   
 \begin{align}\label{eq: formula of the classical divergence}
 D_f(P\|Q)= \int\limits_{\{p,q>0\}}qf\left(\frac{p}{q}\right)d\nu+ f(0^{+})\int\limits_{\{p=0,q>0\}}qd\nu+f^{'}(+\infty)\int\limits_{\{p>0,q=0\}}pd\nu 
 \end{align}
where we have adopted a common notation for sets in measure theory, namely for example $\{p,q>0\}$ denotes the set $\{x\in X:p(x)>0 \text{ and } q(x)>0\}$. 
\end{definition}
Notice that the set $\{p=0,q=0\}$ does not contribute to the \texorpdfstring{\ensuremath{f}}{}-divergence and this is where the convention $0f\left(\frac{0}{0}\right) := 0$ is used. 

We continue with the definition of the quantum \texorpdfstring{\ensuremath{f}}{}-divergence between two normal positive functionals on a general von~Neumann algebra, following \cite{Hiai-2018}.

\begin{definition}\cite[Definition 2.1]{Hiai-2018}\label{defn:f-divergence}
 Let $(\mathscr{M},\mathscr{H},J,P)$ be a standard form of a von~Neumann algebra. Let $\phi$, $\omega \in \mathscr{M}_{*}^{+}$ and  $f:(0,+\infty)\to \mathbb{R}$ be a convex  function. Recall from Definition~\ref{relative modular operator} the notion of  the relative modular operator $\Delta_{\phi,\omega}$ and recall that its support projection is $s_\mathscr{M}(\phi)s_{\mathscr{M}'}(\omega)$. We write the spectral decomposition of $\Delta_{\phi,\omega}$ as
$ \Delta_{\phi,\omega}=\int_{[0,+\infty)} t dE_{\phi,\omega}(t)$ 
 and  define the operator $f( \Delta_{\phi,\omega})$ on $s_\mathscr{M}(\phi)s_{\mathscr{M}'} (\omega)\mathscr{H}$  by the formula 
 \[\label{f of Delta}
     f(\Delta_{\phi,\omega}):= \int_{(0,+\infty)} f(t) dE_{\phi,\omega}(t).\]
 We define the {\bf{ quantum \boldmath{\texorpdfstring{\ensuremath{f}}{}}-divergence} } $S_f(\phi\|\omega)$ between $\phi$ and $\omega$ as
\begin{align}\label{eq: formula of the quantum f divergence}
    S_f(\phi\|\omega):=\langle\langle\xi_\omega|f(\Delta_{\phi,\omega})(\xi_\omega) \rangle\rangle+f(0^{+})\omega(1-s_{\mathscr{M}}(\phi))+ f^{'}(+\infty)\phi(1-s_{\mathscr{M}}(\omega)) 
\end{align}
 using the notation in~\eqref{eq:convex-f-notations-1} where the term $ \langle\langle\xi_\omega|f(\Delta_{\phi,\omega})(\xi_\omega)\rangle\rangle$ is defined as
 \begin{align}\label{eq: first term of divergence}
 \langle\langle\xi_\omega|f(\Delta_{\phi,\omega})(\xi_\omega) \rangle\rangle:=\int_{(0,+\infty)}f(t)d\|E_{\phi,\omega}(t)\xi_\omega\|_{\mathscr{H}}^2.
\end{align}
 \end{definition}
Notice that the term $\langle\langle\xi_\omega|f(\Delta_{\phi,\omega})(\xi_\omega) \rangle\rangle$ is not necessarily an inner product because $\xi_\omega$ may not belong to the domain of $f(\Delta_{\phi,\omega})$. It should be understood as the integral on the right-hand side of Equation~\eqref{eq: first term of divergence}. We use the notation $\langle\langle \cdot|\cdot \rangle\rangle$ to avoid any confusion with the inner products that will appear  in our proofs. 

By \cite[Lemma 2.2]{Hiai-2018}, $S_f(\phi\|\omega)$ is well defined with values in $(-\infty,+\infty]$, and by \cite[Proposition 2.3 (1)]{Hiai-2018} $S_f(\phi\|\omega)$ is independent of the standard form of the von~Neumann algebra we choose to use. 

In the next theorem, we discover a useful formula for the first term of the quantum $f$-divergence.

\begin{theorem}\label{lem: first term}
Let $\mathscr{M}$ be a semifinite von~Neumann algebra with a faithful semifinite normal trace $\tau$. Let $\phi$, $\omega$ be normal states on $\mathscr{M}$ and  $\xi_\omega \in L_2(\mathscr{M}, \tau)_+$ 
be the vector representative of $\omega$. Let $\Delta_{\phi,\omega}$ be the relative modular operator with respect to $\phi$ and $\omega$ and  $E_{\phi,\omega}(\cdot)$ be its spectral measure. Let $\mu$ be the measure and $U$ be the unitary map which appear in Theorem~\ref{prop: application of the multiplication form of spectral theorem}. Let $f_\phi$, $f_\omega$ be the two functions that were  obtained in Corollary~\ref{thm:fDelta} and  $f:(0,+\infty)\to \mathbb{R}$ be a convex function. Then,
    \begin{align}\label{eq: first term analyzed} \langle\langle\xi_\omega|f(\Delta_{\phi,\omega})(\xi_\omega) \rangle\rangle =\int_{\{f_\phi,f_\omega>0\}} f\left(\frac{f_\phi}{f_\omega}\right) |U(\xi_\omega)|^2d\mu.
    \end{align}
\end{theorem}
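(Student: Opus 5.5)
The plan is to identify the scalar spectral measure $B\mapsto \|E_{\phi,\omega}(B)\xi_\omega\|^2$ as the push-forward of the finite measure $|U(\xi_\omega)|^2\,d\mu$ under a single function on $X$, and then to change variables in the integral \eqref{eq: first term of divergence}. Set
\[
h:=f_\phi\, w(f_\omega):X\to[0,+\infty).
\]
This function is $\Sigma$-measurable, since $f_\phi$ and $f_\omega$ are measurable and $w$ is Borel.

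\textbf{Step 1: the spectral projections.} For a Borel set $B\subseteq[0,+\infty)$, apply Corollary~\ref{thm:fDelta} to the real-valued Borel function $\mathscr{X}_B$. Since $\mathscr{X}_B(\Delta_{\phi,\omega})=E_{\phi,\omega}(B)$, this gives
\[
U E_{\phi,\omega}(B)U^{-1}=M_{\mathscr{X}_B\circ h}=M_{\mathscr{X}_{h^{-1}(B)}} .
\]
Both sides are bounded, so the domain caveat in the corollary is harmless.

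\textbf{Step 2: the push-forward identity.} Because $U$ is unitary,
\[
\|E_{\phi,\omega}(B)\xi_\omega\|^2=\|\mathscr{X}_{h^{-1}(B)}\,U(\xi_\omega)\|^2_{L_2(X,\mu)}=\int_{h^{-1}(B)}|U(\xi_\omega)|^2\,d\mu .
\]
Hence the measure $d\|E_{\phi,\omega}(t)\xi_\omega\|^2$ on $[0,+\infty)$ is exactly the image of $\nu_\omega:=|U(\xi_\omega)|^2\,d\mu$ under $h$. This is a finite measure of total mass $\|\xi_\omega\|^2=\omega(1)=1$.

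\textbf{Step 3: change of variables.} By the abstract change of variables formula for image measures, applied separately to the nonnegative Borel functions $f^+$ and $f^-$ on $(0,+\infty)$, we get
\[
\int_{(0,+\infty)} f^{\pm}(t)\,d\|E_{\phi,\omega}(t)\xi_\omega\|^2=\int_{h^{-1}((0,+\infty))} f^{\pm}(h)\,|U(\xi_\omega)|^2\,d\mu ,
\]
as equalities in $[0,+\infty]$. The left side of \eqref{eq: first term of divergence} is understood as the difference of the $f^+$ and $f^-$ integrals. By \cite[Lemma 2.2]{Hiai-2018} the $f^-$ part is finite, so the quantity lies in $(-\infty,+\infty]$. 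Since the two sides match term by term, the right side is well defined too and the two agree.

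\textbf{Step 4: identifying the integration set.} Finally, $h(x)>0$ if and only if $f_\phi(x)>0$ and $w(f_\omega(x))>0$, that is, if and only if $f_\phi(x)>0$ and $f_\omega(x)>0$. On that set $h=f_\phi/f_\omega$. This yields \eqref{eq: first term analyzed}.

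The main obstacle is not Step 1, which follows directly from Corollary~\ref{thm:fDelta}, but the measure-theoretic bookkeeping in Step 3. Because $f$ is only convex, and possibly unbounded both near $0$ and at $\infty$, the identity must be justified as an equality of extended-real integrals rather than inner products. If I needed to, I would handle the finiteness of the negative part directly: a convex $f$ is bounded below by an affine function $a+bt$. The integral of $a+bt$ against the measure in Step 2 is finite, because its $t$-integral equals $\|\Delta_{\phi,\omega}^{1/2}\xi_\omega\|^2\le\phi(1)$, using that $\xi_\omega$ lies in the domain of $\Delta_{\phi,\omega}^{1/2}$.
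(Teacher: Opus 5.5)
Your proof is correct, and it reaches the identity by a somewhat different route than the paper.

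\textbf{The paper's route.} The paper never identifies the scalar spectral measure as an image measure. It splits $f=f^+-f^-$ and truncates to the bounded Borel functions $f_n=f^+\mathscr{X}_{\{f^+\le n\}}$ and $g_n=f^-\mathscr{X}_{\{f^-\le n\}}$, both set to $0$ at $t=0$. It applies Corollary~\ref{thm:fDelta} to each bounded operator $f_n(\Delta_{\phi,\omega})$ to get $\langle \xi_\omega, f_n(\Delta_{\phi,\omega})\xi_\omega\rangle=\int_{\{f_\phi,f_\omega>0\}} f_n(f_\phi/f_\omega)\,|U(\xi_\omega)|^2\,d\mu$. It then lets $n\to\infty$ on both sides with the Monotone Convergence Theorem. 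It also proves that the negative part is finite directly on the classical side. There it uses an affine minorant of $f$ together with $\xi_\omega\in D(\Delta_{\phi,\omega}^{1/2})$, which gives $\int_X f_\phi w(f_\omega)\,|U(\xi_\omega)|^2\,d\mu<\infty$, instead of citing \cite[Lemma 2.2]{Hiai-2018}. Your fallback argument is the same estimate read on the spectral side.

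\textbf{Your route.} You apply the Corollary only to indicator functions. This gives the stronger intermediate statement that $B\mapsto\|E_{\phi,\omega}(B)\xi_\omega\|^2$ is the push-forward of $|U(\xi_\omega)|^2\,d\mu$ under $h=f_\phi w(f_\omega)$. The abstract change-of-variables theorem then absorbs the monotone-limit argument that the paper carries out by hand. In effect, the paper proves that change of variables for the specific functions $f^\pm$ through explicit truncations.

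\textbf{Comparison.} Your version is shorter and more structural. It yields the identity for every nonnegative Borel function on $(0,+\infty)$ at once. The paper's version is more hands-on and self-contained about well-definedness in $(-\infty,+\infty]$, since it does not rely on the external lemma.
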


\begin{proof}
    Let $f^{+}(t):= \max\{f(t),0\}\text{ and }f^{-}(t):= \max\{-f(t),0\}$ for all $t>0$ be the positive and the negative part of the function $f$, respectively. We recall that $f^{+}(t),\text{ }f^{-}(t)\geq 0$ and $f(t)=f^{+}(t)-f^{-}(t)$ for all $t>0$.  We will prove Equation~\eqref{eq: first term analyzed} starting from the right-hand side.

\begin{align}\label{eq: 4.3 first equation of the proof}
 \int_{\{f_\phi, f_\omega>0\}}& f \left(\frac{f_\phi}{f_\omega}\right) |U(\xi_\omega)|^2d\mu= \int_{\{f_\phi, f_\omega>0\}} \left(f^{+}-f^{-}\right) \left(\frac{f_\phi}{f_\omega}\right) |U(\xi_\omega)|^2d\mu \notag \\
      &=\int_{\{f_\phi, f_\omega>0\}} f^{+} \left(\frac{f_\phi}{f_\omega}\right) |U(\xi_\omega)|^2d\mu-\int_{\{f_\phi, f_\omega>0\}} f^{-} \left(\frac{f_\phi}{f_\omega}\right) |U(\xi_\omega)|^2d\mu.
\end{align}
We would like to explain why the last expression makes sense by proving that \begin{align}\label{eq: 4.3 second equation of the proof}
    \int_{\{f_\phi, f_\omega>0\}} f^{-} \left(\frac{f_\phi}{f_\omega}\right) |U(\xi_\omega)|^2d\mu <+\infty.
\end{align}
If that is true, the indeterminate form $\left( +\infty\right)-\left( +\infty\right)$ is avoided and the right-hand side of Equation~\eqref{eq: 4.3 first equation of the proof} is well defined in $(-\infty,+\infty]$. Due to the convexity of the function $f$,  there exist $\alpha$, $\beta \in \mathbb{R}$ such that $f(t)\geq \alpha+\beta t$ or equivalently $-f(t)\leq -\alpha-\beta t$  for all $t>0$.    Therefore, 
\[f^{-}(t)\leq |-\alpha-\beta t|\leq |\alpha|+|\beta|t \quad\text{ for all } t>0. \]
We apply this inequality on the integral $\int_{\{f_\phi, f_\omega>0\}} f^{-} \left(\frac{f_\phi}{f_\omega}\right) |U(\xi_\omega)|^2d\mu$. 
\begin{align}\label{eq: lem4.8 equation}
    \int_{\{f_\phi, f_\omega>0\}} &f^{-} \left(\frac{f_\phi}{f_\omega}\right) |U(\xi_\omega)|^2d\mu\leq \int_{ {\{f_\phi, f_\omega>0\}} } \left(|\alpha|+|\beta|\frac{f_\phi}{f_\omega}\right) |U(\xi_\omega)|^2d\mu \notag\\
    &=|\alpha|\int_{\{f_\phi,f_\omega >0\}} |U(\xi_\omega)|^2d\mu +|\beta|\int_{\{f_\phi,f_\omega >0\}} \frac{f_\phi}{f_\omega}|U(\xi_\omega)|^2d\mu \notag\\
    &\leq |\alpha|\int_{X} |U(\xi_\omega)|^2d\mu + |\beta|\int_{\{f_\phi,f_\omega >0\}}  \frac{f_\phi}{f_\omega}|U(\xi_\omega)|^2d\mu.
\end{align}
To verify Inequality~\eqref{eq: 4.3 second equation of the proof}, it suffices to show that the two integrals on the right-hand side of  Inequality~\eqref{eq: lem4.8 equation}  are finite. For the first integral, we use the fact that $U$ is unitary and Remark~\ref{existence of hphi,homega} to obtain
\[\int_{X} |U(\xi_\omega)|^2d\mu=\braket{U(\xi_\omega)}{U(\xi_\omega)}_{L_2(X,\mu)}=\braket{\xi_\omega}{\xi_\omega}_{L_2(\mathscr{M},\tau)}=\omega(1)=1<+\infty.\]
For the second integral, we begin by observing that $\xi_\omega $ belongs to the domain of the operator $S_{\phi,\omega}$, which was introduced in Equation~\eqref{eq: S of phi,omega}. Hence, $\xi_\omega$ also belongs to the domain of the closure $\overline{S_{\phi,\omega}}$. Since $\overline{S_{\phi,\omega}}=J\Delta_{\phi,\omega}^{\frac{1}{2}}$, by Equation~\eqref{eq: polar decomp of S}, we deduce that $\xi_\omega $ belongs to the domain of $\Delta_{\phi,\omega}^{\frac{1}{2}}$. Therefore, $U(\xi_\omega)$ belongs to the domain of the operator $U\Delta_{\phi,\omega}^{\frac{1}{2}}U^{-1}$ and by Corollary~\ref{thm:fDelta}, applied for the square root function, the operator $U\Delta_{\phi,\omega}^{\frac{1}{2}}U^{-1}$ is equal to the multiplication operator $M_{\sqrt{f_\phi w(f_\omega)}}$. Thus, $U(\xi_\omega)$ belongs to the domain of the operator $M_{\sqrt{f_\phi w(f_\omega)}}$. Equivalently, the product $\sqrt{f_\phi w(f_\omega)}(\cdot) U(\xi_\omega)(\cdot)$ belongs to $L_2(X,\mu)$. Hence, 
\begin{align*}
&\int_{X} \left| \sqrt{f_\phi w(f_\omega)} U(\xi_\omega)\right|^2d\mu <+\infty \\
\Leftrightarrow & \int_{X} f_\phi w(f_\omega) |U(\xi_\omega)|^2d\mu=\int_{\{f_\phi, f_\omega>0\}} f_\phi w(f_\omega) |U(\xi_\omega)|^2d\mu<+\infty\\
\Leftrightarrow &\int_{\{f_\phi, f_\omega>0\}} \frac{f_\phi}{f_\omega}|U(\xi_\omega)|^2d\mu<+\infty \quad \text{(by Equation~\eqref{xi omega tilde})}.
\end{align*}
Therefore, both integrals of Inequality~\eqref{eq: lem4.8 equation} are finite and this verifies Inequality~\eqref{eq: 4.3 second equation of the proof}, which in turn   verifies that Equation~\eqref{eq: 4.3 first equation of the proof} makes sense as a number in $(-\infty,+\infty]$.

Next, we define  the functions $f_n,\text{ }g_n:[0,+\infty) \to \mathbb{R}$ by
\[f_n(t):=\begin{cases} 
f^{+}(t)\mathscr{X}_{ \{f^{+}\leq n\} }(t) & \text{if }  t>0\\
\quad 0 & \text{if }  t=0
\end{cases} \quad \text{and}  \quad g_n(t):=\begin{cases} 
f^{-}(t)\mathscr{X}_{ \{f^{-}\leq n\} }(t) & \text{if }  t>0\\
\quad 0 & \text{if }  t=0
\end{cases} \]
for all $n\in \mathbb{N},$  $\text{ } t>0$ where by $\mathscr{X}_A(\cdot)$ we denote the characteristic function of any set $A$. We observe that  $0\leq f_1(t)\leq f_2(t)\leq \ldots$ and $\lim_{n\to +\infty} f_{n}(t)=f^{+}(t)$ for all $t>0$. Similarly, $0\leq g_1(t)\leq g_2(t)\leq \ldots$ and $\lim_{n\to +\infty} g_{n}(t)=f^{-}(t)$ for all $t>0$. 
We return to Equation~\eqref{eq: 4.3 first equation of the proof} and apply the Monotone Convergence Theorem to obtain
\begin{align}\label{eq: 4.3 third equation of the proof}
 &\int_{\{f_\phi, f_\omega>0\}} f \left(\frac{f_\phi}{f_\omega}\right) |U(\xi_\omega)|^2d\mu \notag \\ 
      &\quad\quad =\lim_{n\to +\infty} \left( \int_{\{f_\phi, f_\omega>0\}} f_n \left(\frac{f_\phi}{f_\omega}\right) |U(\xi_\omega)|^2d\mu - \int_{\{f_\phi, f_\omega>0\}} g_n \left(\frac{f_\phi}{f_\omega}\right) |U(\xi_\omega)|^2d\mu \right) .
\end{align}
 Next, we fix some $n\in \mathbb{N}$ and focus on the integral $\int_{\{f_\phi, f_\omega>0\}} f_n \left(\frac{f_\phi}{f_\omega}\right) |U(\xi_\omega)|^2d\mu$. 
 
 \begin{align}\label{eq: 4.3 fourth equation of the proof}
  \int_{\{f_\phi, f_\omega>0\}} f_n \left(\frac{f_\phi}{f_\omega}\right) |U(\xi_\omega)|^2d\mu&=\int_{X} \overline{U(\xi_\omega)} f_n \left(f_\phi w(f_\omega)\right) U(\xi_\omega)d\mu \notag\\ &\text{(by Equation~\eqref{xi omega tilde} and the fact that  } f_n(0)=0 \text{)} \notag \\
  &=\braket{U(\xi_\omega)}{M_{f_n(f_\phi w(f_\omega))}U(\xi_\omega)}_{L_2(X,\mu)} \notag \\
  &=\braket{\xi_\omega}{U^{-1}M_{f_n(f_\phi w(f_\omega))}U(\xi_\omega)}_{L_2(\mathscr{M},\tau)}.
  \end{align}
Since $f_n$ is a bounded function, the operator $f_n(\Delta_{\phi,\omega})$ is bounded (by the bounded functional calculus). Thus, $\xi_\omega$ belongs to the domain of $f_n(\Delta_{\phi,\omega})$. Equivalently, $U(\xi_\omega)$ belongs to the domain of $Uf_{n}(\Delta_{\phi,\omega})U^{-1}$. Thus, we can apply Corollary~\ref{thm:fDelta} to obtain
   \begin{align}\label{eq: 4.3 fifth equation of the proof}
       Uf_{n}(\Delta_{\phi,\omega})U^{-1} U(\xi_\omega)&=M_{f_n(f_\phi w(f_\omega))}U(\xi_\omega) \notag\\
       \Leftrightarrow f_{n}(\Delta_{\phi,\omega})(\xi_\omega)&=U^{-1}M_{f_n(f_\phi w(f_\omega))}U(\xi_\omega).
   \end{align}
   We combine Equations~\eqref{eq: 4.3 fourth equation of the proof}, ~\eqref{eq: 4.3 fifth equation of the proof} and infer
   \begin{align}\label{eq: 4.3 sixth equation of the proof}
        \int_{\{f_\phi, f_\omega>0\}} f_n \left(\frac{f_\phi}{f_\omega}\right) |U(\xi_\omega)|^2d\mu&=\braket{\xi_\omega}{f_{n}(\Delta_{\phi,\omega})(\xi_\omega)}_{L_2(\mathscr{M},\tau)}    \notag \\   &=\int_{[0,+\infty)} f_n(t) d\braket{\xi_\omega}{E_{\phi,\omega}(t)\xi_\omega}_{L_2(\mathscr{M},\tau)}   \notag\\
       &= \int_{[0,+\infty)} f_n(t) d\|{E_{\phi,\omega}(t)\xi_{\omega}}\|_2^2 \notag\\
       &=\int_{(0,+\infty)} f_n(t) d\|{E_{\phi,\omega}(t)\xi_{\omega}}\|_2^2 
   \end{align}
   where in the second equality we used a property of the functional calculus (see  \cite[Theorem 5.9 (1)]{Schmudgen}), in the third equality we used the fact that $E_{\phi,\omega} (\cdot)$ is a projection-valued measure  and in the last equality we used the fact that $f_n(0)=0$.

In exactly the same way, one can prove that  \begin{align}\label{eq: 4.3 seventh equation of the proof}
   \int_{\{f_\phi, f_\omega>0\}} g_n \left(\frac{f_\phi}{f_\omega}\right) |U(\xi_\omega)|^2d\mu= \int_{(0,+\infty)} g_n(t) d\|E_{\phi,\omega}(t)\xi_{\omega}\|_2^2.
\end{align}

 Now, we return to Equation~\eqref{eq: 4.3 third equation of the proof} and apply Equations~\eqref{eq: 4.3 sixth equation of the proof}, ~\eqref{eq: 4.3 seventh equation of the proof} to obtain
 \begin{align}
 \int_{\{f_\phi, f_\omega>0\}}&f \left(\frac{f_\phi}{f_\omega}\right) |U(\xi_\omega)|^2d\mu \nonumber \\
 &=\lim_{n\to +\infty} \left(  \int_{(0,+\infty)} f_n(t) d\|E_{\phi,\omega}(t)\xi_{\omega}\|_2^2 -  \int_{(0,+\infty)} g_n(t) d\|E_{\phi,\omega}(t)\xi_{\omega}\|_2^2 \right) \nonumber \\
 &=\int_{(0,+\infty)}f^{+}(t)d\|E_{\phi,\omega}(t)\xi_{\omega}\|_2^2-\int_{(0,+\infty)}f^{-}(t)d\|E_{\phi,\omega}(t)\xi_{\omega}\|_2^2,
 \label{E:4145}
 \end{align}
by the Monotone Convergence Theorem. Note that the
expression in Equation~\eqref{E:4145} makes sense since by 
Equation~\eqref{eq: 4.3 seventh equation of the proof}, the Monotone Convergence Theorem and Equation~\eqref{eq: 4.3 second equation of the proof} the indeterminate form $\infty - \infty$ cannot occur.
 Then, Equation~\eqref{E:4145} becomes
 \begin{align*}
 \int_{\{f_\phi, f_\omega>0\}}&f \left(\frac{f_\phi}{f_\omega}\right) |U(\xi_\omega)|^2d\mu \nonumber \\
 &=\int_{(0,+\infty)} \left(f^{+}-f^{-}\right)(t)d\|E_{\phi,\omega}(t)\xi_{\omega}\|_2^2=\int_{(0,+\infty)}f(t)d\|E_{\phi,\omega}(t)\xi_{\omega}\|_2^2\\
 &= \langle\langle\xi_\omega|f(\Delta_{\phi,\omega})(\xi_\omega) \rangle\rangle \quad \text{(by Equation~\eqref{eq: first term of divergence})},
\end{align*}
which finishes the proof of the theorem.
\end{proof}

\begin{remark}\label{defining the measure nu}
     Let $\mathscr{M}$ be a semifinite von~Neumann algebra with a faithful semifinite normal trace $\tau$. Let   $\phi$, $\omega$ be normal states on $\mathscr{M}$, 
     $\xi_\phi,\ \xi_\omega \in L_2(\mathscr{M}, \tau)_+$ be their respective vector representatives  and  $f_\phi$, $f_\omega$ be the  two functions that were obtained in  Corollary~\ref{thm:fDelta}. Let $P_\phi$, $P_\omega$ be the measures that have densities $f_\phi$ and $f_\omega$ (i.e $f_\phi=\frac{dP_\phi}{d\nu}$ and $f_\omega=\frac{dP_\omega}{d\nu}$), respectively, with respect to the measure $\nu$ which is defined as
\begin{align}\label{eq:measure d nu}
  d\nu:= \begin{cases} 
\frac{|U(\xi_{\omega})|^2}{f_\omega}d\mu & \text{on the set }  \{f_{\omega}> 0\}\\
\frac{|U(\xi_{\phi})|^2}{f_\phi}d\mu & \text{on the set }  \{f_{\phi}> 0,f_{\omega}=0\}\\
 \quad 0 & \text{on the set }  \{f_{\phi}=f_{\omega}=0\}
\end{cases}
\end{align}
where $U$ is the unitary map and $\mu $ is the measure obtained in Theorem~\ref{prop: application of the multiplication form of spectral theorem}.  In our main result, Theorem~\ref{maintheorem}, we will show that for every convex function $f$ the quantum \texorpdfstring{\ensuremath{f}}{}-divergence between $\phi$ and $\omega$ is equal to the classical \texorpdfstring{\ensuremath{f}}{}-divergence between $P_\phi$ and $P_\omega$.
\end{remark}

Our next goal is to prove that $P_\phi$ and $P_\omega$ are classical states. Part of this proof requires the following lemma in which we analyze the  projection $s_\mathscr{M}(\omega)$, which was introduced in Definition~\ref{relative modular operator}, in the case of a semifinite von~Neumann algebra $\mathscr{M}$ where $\omega$ is a normal state on $\mathscr{M}$. This projection is also part of the formula of the quantum $f$-divergence (see  Equation~\eqref{eq: formula of the quantum f divergence}).

\begin{lemma}\label{lem: s M of omega}
  \sloppy  Let $\mathscr{M}$ be a semifinite von~Neumann algebra with a faithful  semifinite normal trace $\tau$,  $\omega$ be a normal state on $\mathscr{M}$ and $\xi_\omega \in L_{2}(\mathscr{M},\tau)_{+}$ be the vector representative of $\omega$. Let $s_\mathscr{M}(\omega)$ be the support of the state $\omega$ on $\mathscr{M}$, which was introduced in Definition~\ref{relative modular operator}. Let $\xi_\omega= \int_{[0,+\infty)} \lambda dE_{\omega} (\lambda)$ be the spectral decomposition of $\xi_\omega$, where $E_{\omega}(\cdot)$ is its spectral measure.  Then, $s_\mathscr{M}(\omega)$ is equal to $E_{\omega} \left( (0,+\infty)\right)$. 
\end{lemma}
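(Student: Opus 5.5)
Write $e:=E_\omega((0,+\infty))$. Since $\xi_\omega$ is affiliated with $\mathscr{M}$, its spectral projections lie in $\mathscr{M}$, so $e\in\mathscr{M}_{proj}$. We use the characterization from Definition~\ref{relative modular operator}: $1-s_\mathscr{M}(\omega)$ is the largest projection $q\in\mathscr{M}$ with $\omega(q)=0$. The plan is to show two things: $\omega(1-e)=0$, and every projection $q\in\mathscr{M}$ with $\omega(q)=0$ satisfies $q\le 1-e$. Together these give $1-s_\mathscr{M}(\omega)=1-e$.

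First, we check that $\omega(1-e)=0$. By Equation~\eqref{eq: equation for vector representative} and Lemma~\ref{lem:E zero infty}, the projection $1-e=E_\omega(\{0\})$ satisfies $(1-e)\xi_\omega=\xi_\omega-e\xi_\omega=0$. Hence
\[\omega(1-e)=\braket{\xi_\omega}{(1-e)\xi_\omega}_{L_2(\mathscr{M},\tau)}=0.\]

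Next, let $q\in\mathscr{M}_{proj}$ with $\omega(q)=0$. Then
\[0=\braket{\xi_\omega}{q\xi_\omega}=\|q\xi_\omega\|_2^2,\]
so $q\xi_\omega=0$ as an element of $L_2(\mathscr{M},\tau)$. Here $q\xi_\omega$ is the strong product of $\tau$-measurable operators. Since $\widetilde{\mathscr{M}}$ is a $*$-algebra and elements of $L_2$ are identified as measurable operators, this means $q\xi_\omega=0$ as an operator. Taking adjoints gives $\xi_\omega q=0$, hence $(1-q)\xi_\omega=\xi_\omega$.

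The main obstacle is to pass from the identity $(1-q)\xi_\omega=\xi_\omega$ in $\widetilde{\mathscr{M}}$ to the hypothesis $pT=T$ of Lemma~\ref{lem:E zero infty}, which is an identity of unbounded operators on their natural domains. Since $1-q$ is bounded, the product $(1-q)\xi_\omega$ has domain $D(\xi_\omega)$ and is already closed. So the strong product equals the ordinary composition, and $(1-q)\xi_\omega=\xi_\omega$ holds as a genuine operator equality. Lemma~\ref{lem:E zero infty} then gives $e\le 1-q$, i.e. $q\le 1-e$.

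Combining the two steps, $1-e$ is the largest projection in $\mathscr{M}$ annihilated by $\omega$. Therefore $s_\mathscr{M}(\omega)=1-(1-e)=E_\omega((0,+\infty))$.
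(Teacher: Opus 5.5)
Your proof is correct and is essentially the paper's argument: both show $\omega(1-e)=0$ and then use the minimality statement of Lemma~\ref{lem:E zero infty}. The only difference is that you apply minimality to every projection $q$ with $\omega(q)=0$, so the supremum is shown to be attained at $1-e$, whereas the paper applies it only to $1-s_{\mathscr{M}}(\omega)$ after citing $\omega(1-s_{\mathscr{M}}(\omega))=0$.

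One justification needs repair. A bounded operator composed after a closed one need not be closed; for example, $0\cdot\xi_\omega$ on $D(\xi_\omega)$ is not closed when $\xi_\omega$ is unbounded. You don't need that claim, though. Since $(1-q)\xi_\omega\subseteq\overline{(1-q)\xi_\omega}=\xi_\omega$ and both operators are defined on $D(\xi_\omega)$, the operator identity $(1-q)\xi_\omega=\xi_\omega$ follows directly.
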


\begin{proof}
 Since $E_{\omega}(\cdot)$ is a projection-valued measure, $E_{\omega} \left( (0,+\infty)\right)$ is a projection. Moreover, $E_{\omega} \left( (0,+\infty)\right) \in \mathscr{M}$ because $\xi_\omega$ is affiliated with $\mathscr{M}$. Hence, $1-E_{\omega}\left( (0,+\infty) \right)$ is a projection  that belongs to $\mathscr{M}$ and 
\begin{align*}
    \omega(1-E_{\omega}\left( (0,+\infty) \right))&=\braket{\xi_\omega}{\xi_\omega}-\braket{\xi_\omega}{E_{\omega}\left( (0,+\infty) \right) \xi_\omega} \quad \text{(by Remark \ref{existence of hphi,homega})}\\
    &=\braket{\xi_\omega}{\xi_\omega}-\braket{\xi_\omega}{\xi_\omega}=0 \quad \text{(by Lemma~\ref{lem:E zero infty})}.    
\end{align*}
Therefore, 
\begin{align}\label{eq: second equation of lemma 4.5}
    &1-E_{\omega}\left( (0,+\infty) \right)\leq \sup\{e\in \mathscr{M}_{proj}:\omega(e)=0\} \notag\\
    \Leftrightarrow\quad  &1-\sup\{e\in \mathscr{M}_{proj}:\omega(e)=0\}\leq E_{\omega}\left( (0,+\infty) \right)\notag\\
     \Leftrightarrow\quad  &s_{\mathscr{M}}(\omega)\leq E_{\omega}\left( (0,+\infty) \right) \quad \text{(by Definition~\ref{relative modular operator})}.
\end{align}
Next, we focus on proving the reverse inequality. We begin by proving that $s_{\mathscr{M}}(\omega)\xi_\omega=\xi_\omega$. By \cite[Section 5.15]{StratilaZsido2019}, we recall that $\omega\left(1-s_{\mathscr{M}}(\omega)\right)=0$. Hence,
\begin{align}\label{eq: lem 4.5 equation 3}
    \|\xi_\omega\|_{2}^2&=\braket{\xi_\omega}{\xi_\omega}=\omega(1)=\omega\left( s_{\mathscr{M}} (\omega)\right)\notag \\
    &=\braket{\xi_\omega}{s_{\mathscr{M}} (\omega)\xi_\omega}=\braket{\xi_\omega}{\pi \left(s_{\mathscr{M}} (\omega)\right)(\xi_\omega)}   \quad \text{(by Remark~\ref{existence of hphi,homega} and Equation~\eqref{Eq: Left Multiplication})}\notag\\
    &=\| \pi \left(s_{\mathscr{M}} (\omega)\right)(\xi_\omega) \|_{2}^2 \quad \text{(since $ \pi \left(s_{\mathscr{M}} (\omega)\right)$ is a projection on $L_2(\mathscr{M},\tau)$)}.
\end{align}
\sloppy For the complementary projection $1-\pi\left( s_{\mathscr{M}} (\omega)\right)$ of $\pi \left(s_{\mathscr{M}} (\omega)\right)$ we observe that  
$1-\pi\left( s_{\mathscr{M}} (\omega)\right)= \pi\left( 1-s_{\mathscr{M}}(\omega)\right)$ and obtain
$$
\|\xi_\omega\|_{2}^2= \|\pi\left( s_{\mathscr{M}}(\omega)\right) (\xi_\omega) \|_{2}^2+\|\pi \left(1-s_{\mathscr{M}}(\omega)\right) (\xi_\omega) \|_{2}^2,
$$ 
which combined with Equation~\eqref{eq: lem 4.5 equation 3} implies     that  $\|\pi\left(1-s_{\mathscr{M}}(\omega)\right)(\xi_\omega)\|_{2}^2=0$. Hence, $\pi\left(1-s_{\mathscr{M}}(\omega)\right)(\xi_\omega)=\left(1-s_{\mathscr{M}}(\omega)\right)\xi_\omega=0$. 
Thus, $s_{\mathscr{M}}(\omega)\xi_\omega=\xi_\omega$. By Lemma~\ref{lem:E zero infty}, $E_{\omega} \left( (0,+\infty)\right)$ is the smallest projection that satisfies $E_{\omega} \left( (0,+\infty)\right)\xi_\omega=\xi_\omega$. Therefore, 
\begin{align*}
  &s_{\mathscr{M}}(\omega)\geq  E_{\omega} \left( (0,+\infty)\right)\\
  \Rightarrow \quad &s_{\mathscr{M}}(\omega)=  E_{\omega} \left( (0,+\infty)\right) \quad \text{(by Inequality~\eqref{eq: second equation of lemma 4.5})}.
\end{align*}

\end{proof}

  \begin{proposition}\label{prop: f phi,omega are distributions} 
Let $\mathscr{M}$ be a semifinite von~Neumann algebra with a faithful semifinite normal trace $\tau$,   $\phi$, $\omega$ be normal states on $\mathscr{M}$ and  $\xi_\phi$, $\xi_\omega \in L_{2}(\mathscr{M},\tau)_{+}$ be the vector representatives of $\phi$, $\omega$ respectively. Let $U$ be the unitary map and $(X, \Sigma,\mu)$ be the measure space we obtained in Theorem~\ref{prop: application of the multiplication form of spectral theorem}. Let $f_\phi$, $f_\omega$ be the two functions we obtained in Corollary~\ref{thm:fDelta} and  $\nu$ be the measure we  defined in Equation~\eqref{eq:measure d nu}. Then, $f_\phi (x)$, $f_\omega(x) \geq 0$ for all $x\in X$ and $\int_{X}f_\phi d\nu=\int_X f_\omega d\nu=1$. In other words, the measures $P_\phi$ and $P_\omega$ with densities $f_\phi$ and $f_\omega$ (respectively) with respect to the measure $\nu$ are classical states on ${X}$.
\end{proposition}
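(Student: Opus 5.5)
Nonnegativity is immediate: $f_\phi=g_\phi^2$ and $f_\omega=g_\omega^2$ with $g_\phi,g_\omega$ real-valued, by Equation~\eqref{E:the_formulas_of_g}. For the integrals, the definition \eqref{eq:measure d nu} of $\nu$ lets us write
\[
\int_X f_\omega\,d\nu=\int_{\{f_\omega>0\}}|U(\xi_\omega)|^2\,d\mu ,
\]
since $f_\omega=0$ on the other two pieces. Similarly,
\[
\int_X f_\phi\,d\nu=\int_{\{f_\omega>0\}}\frac{f_\phi}{f_\omega}|U(\xi_\omega)|^2\,d\mu+\int_{\{f_\phi>0,f_\omega=0\}}|U(\xi_\phi)|^2\,d\mu .
\]
Because $\int_X|U(\xi_\omega)|^2d\mu=\|\xi_\omega\|_2^2=\omega(1)=1$, the first claim reduces to showing that $U(\xi_\omega)$ vanishes $\mu$-a.e.\ on $\{f_\omega=0\}$.

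To prove this, I would translate supports into multiplication operators. By Lemma~\ref{lem: s M of omega} and Lemma~\ref{lem:E zero infty}, $\xi_\omega=E_\omega((0,+\infty))\xi_\omega=\xi_\omega E_\omega((0,+\infty))$, using self-adjointness for the second equality. Hence $\pi'(E_\omega((0,+\infty)))\xi_\omega=\xi_\omega$. By Equation~\eqref{Eq: pi of f of x}, $\pi'(E_\omega((0,+\infty)))=\mathscr{X}_{(0,+\infty)}(\pi'(\xi_\omega))$. Conjugating with $U$ via Lemma~\ref{lem: applying the Borel functional calculus} and Theorem~\ref{prop: application of the multiplication form of spectral theorem}, this operator becomes $M_{\mathscr{X}_{\{g_\omega>0\}}}=M_{\mathscr{X}_{\{f_\omega>0\}}}$. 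Therefore $U(\xi_\omega)=\mathscr{X}_{\{f_\omega>0\}}U(\xi_\omega)$ a.e., which gives $\int f_\omega d\nu=1$. The same argument with $\pi$ and $E_\phi$ gives $U(\xi_\phi)=\mathscr{X}_{\{f_\phi>0\}}U(\xi_\phi)$ a.e. I also record the right-multiplication analogue $U(\xi_\phi)=\mathscr{X}_{\{f_\omega>0\}}U(\xi_\phi)$ whenever needed; it comes from $\xi_\phi E_\phi((0,\infty))=\xi_\phi$, but with $\pi'$ of $E_\omega$ it does \emph{not} hold in general, which is why the second piece of $\nu$ is needed.

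For $\int f_\phi d\nu$, the key is to rewrite the first term using $\|\xi_\phi\|_2^2=1$ together with the relation between $\Delta_{\phi,\omega}^{1/2}\xi_\omega$ and $\xi_\phi$. By \eqref{eq: polar decomp of S}, $J\Delta^{1/2}_{\phi,\omega}\xi_\omega=\overline{S_{\phi,\omega}}\xi_\omega=s_\mathscr{M}(\omega)\xi_\phi$, so
\[
\|\Delta^{1/2}_{\phi,\omega}\xi_\omega\|^2=\|s_\mathscr{M}(\omega)\xi_\phi\|^2.
\]
The left side equals $\int_{\{f_\phi,f_\omega>0\}}\frac{f_\phi}{f_\omega}|U\xi_\omega|^2d\mu$ by Corollary~\ref{thm:fDelta} with the square root, as in the proof of Theorem~\ref{lem: first term}. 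Next I would identify the right side. The vector $s_\mathscr{M}(\omega)\xi_\phi=E_\omega((0,\infty))\xi_\phi$ is left multiplication, not right multiplication, so I plan to apply $J$ to convert: $\|E_\omega((0,\infty))\xi_\phi\|_2=\|\xi_\phi E_\omega((0,\infty))\|_2=\|\pi'(E_\omega((0,\infty)))\xi_\phi\|_2$. This equals $\int_{\{f_\omega>0\}}|U\xi_\phi|^2d\mu$. Hence
\[
\int_X f_\phi\,d\nu=\int_{\{f_\omega>0\}}|U\xi_\phi|^2d\mu+\int_{\{f_\phi>0,f_\omega=0\}}|U\xi_\phi|^2d\mu=\int_{\{f_\phi>0\}}|U\xi_\phi|^2d\mu,
\]
using that $U\xi_\phi$ vanishes a.e.\ off $\{f_\phi>0\}$. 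This is then $\|\xi_\phi\|_2^2=1$.

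The main obstacle is this identification of $\|s_\mathscr{M}(\omega)\xi_\phi\|$ with an integral over $\{f_\omega>0\}$, which requires the left/right switch through $J$ and the trace property $\tau(a^*a)=\tau(aa^*)$. I expect to have to check carefully that $\|e\xi\|_2=\|\xi e\|_2$ for self-adjoint $\xi\in L_2$ and projections $e\in\mathscr{M}$, which holds because $(e\xi)^*=\xi e$ and $J$ is isometric. I would also verify that the spectral projections commute with $U$ as claimed, and that the a.e.\ set manipulations respect the convention $w(0)=0$.
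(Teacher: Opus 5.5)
Your proof is correct and is essentially the paper's. Both rest on $J\Delta_{\phi,\omega}^{1/2}\xi_\omega=s_{\mathscr{M}}(\omega)\xi_\phi$, on $s_{\mathscr{M}}(\omega)=E_\omega((0,+\infty))$, and on conjugating spectral projections of $\pi(\xi_\phi)$ and $\pi'(\xi_\omega)$ into multiplication operators by indicator functions. The paper proves that $U(\xi_\omega)$ vanishes a.e.\ on $\{f_\omega=0\}$ by a kernel argument for $\pi'(\xi_\omega)$ together with cyclicity of $\tau$. It then gets $\int_X f_\phi\,d\nu=1$ by computing $\phi(s_{\mathscr{M}}(\omega))$ and $\phi(1-s_{\mathscr{M}}(\omega))$ separately, the latter through $\pi'(E_\omega(\{0\}))$. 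You instead use $\pi'(E_\omega((0,+\infty)))$ and $\pi(E_\phi((0,+\infty)))$ directly, which is an equivalent rearrangement.

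Delete the aside recording $U(\xi_\phi)=\mathscr{X}_{\{f_\omega>0\}}U(\xi_\phi)$. It holds only when $\phi(1-s_{\mathscr{M}}(\omega))=0$, and $\xi_\phi E_\phi((0,+\infty))=\xi_\phi$ involves $\pi'(E_\phi(\cdot))$, which $U$ does not turn into a multiplication operator in general. Your main argument never uses it.
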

\begin{proof} 
Recall by  Corollary~\ref{thm:fDelta} that $f_\phi=g_\phi^2 \text{ and }f_\omega=g_\omega^2$ where $g_\phi,\text{ }g_\omega$ are real-valued functions (in particular nonnegative), by Theorem~\ref{prop: application of the multiplication form of spectral theorem}. Therefore, $f_\phi (x)$, $f_\omega(x) \geq 0$ for all $x\in X$.

Next, we focus on proving that $\int_{X}f_\omega d\nu=1$. We begin by proving the following claim. \newline
{\bf{\underline{Claim 1:}}}
\begin{align}\label{U of x omega zero at f omega zero}
    \int_{\{f_\omega=0\}} |U(\xi_\omega)|^2d\mu=0
    \text{ and } \int_{\{f_\phi=0\}} |U(\xi_\phi)|^2d\mu=0.
\end{align}
\sloppy
 Let $\mathscr{X}_{\{f_\omega=0\}}$ be the characteristic function of the set $\{f_\omega=0\}$. We observe that $U(\xi_\omega)\mathscr{X}_{\{f_\omega=0\}} \in L_2(X,\mu)$. Indeed,
 \[\int_X  |U(\xi_\omega)\mathscr{X}_{\{f_\omega=0\}}|^2 d\mu=\int_{\{f_\omega=0\}}  |U(\xi_\omega)|^2d\mu\leq \int_{X}  |U(\xi_\omega)|^2d\mu <\infty \]
where the last inequality is true since $\xi_\omega \in L_2(\mathscr{M},\tau)$ and $U:L_2(\mathscr{M},\tau) \to L_2(X,\mu)$. In addition, we observe that the product $g_\omega U(\xi_\omega)\mathscr{X}_{\{f_\omega=0\}} $ is also in $ L_2(X,\mu)$ (since it is equal to zero). Therefore, $U(\xi_\omega)\mathscr{X}_{\{f_\omega=0\}}$ belongs to the domain of the multiplication operator $M_{g_\omega} $. Hence, we can apply Equation~\eqref{Eq: multiplication form we use} to obtain
\begin{align*}
U\pi'(\xi_\omega)U^{-1}\left(U(\xi_\omega)\mathscr{X}_{\{f_\omega=0\}}\right)&=M_{g_\omega}\left(U(\xi_\omega)\mathscr{X}_{\{f_\omega=0\}}\right)=g_\omega U(\xi_\omega)\mathscr{X}_{\{f_\omega=0\}}=0.
\end{align*}
  We apply the unitary $U^{-1}$ on both sides of the previous equation and infer
  \begin{align*}
      \pi'(\xi_\omega)U^{-1}(U(\xi_\omega)\mathscr{X}_{\{f_\omega=0\}})=0\Leftrightarrow U^{-1}(U(\xi_\omega)\mathscr{X}_{\{f_\omega=0\}})\xi_\omega =0
  \end{align*}    
  where in the last equivalence we used Proposition~\ref{prop:pi,pi' for self-adjoint}. Therefore,
    \begin{align*}
       0&=\tau\left(U^{-1}\left(U(\xi_\omega)\mathscr{X}_{\{f_\omega=0\}}\right)\xi_\omega\right)=
       \tau\left(\xi_\omega U^{-1}\left(U(\xi_\omega)\mathscr{X}_{\{f_\omega=0\}}\right)\right) \\ &=\braket{\xi_\omega}{U^{-1}\left(U(\xi_\omega)\mathscr{X}_{\{f_\omega=0\}}\right)}_{L_2(\mathscr{M},\tau)}=\braket{U(\xi_\omega)}{U(\xi_\omega)\mathscr{X}_{\{f_\omega=0\}}}_{L_2(X,\mu)}\\
      &= \int_{X} \overline{U(\xi_\omega)} U(\xi_\omega)\mathscr{X}_{\{f_\omega=0\}}d\mu    =\int_{\{f_\omega=0\}} |U(\xi_\omega)|^2d\mu .
  \end{align*}
   The proof that $\int_{\{f_\phi=0\}} |U(\xi_\phi)|^2d\mu =0$ is almost identical and we do not present it. This finishes the proof of Claim 1. 
   
   Now, we observe that $\int_{X} f_\omega d\nu=1$. Indeed,
\begin{align*} 
  \int_{X}f_\omega d\nu&=\int_{\{f_\omega>0\}}f_\omega d\nu=\int_{\{f_\omega>0\}}f_\omega \frac{|U(\xi_\omega)|^2}{f_\omega}d\mu  \quad \text{ (by Equation~\eqref{eq:measure d nu}})\\
  &=\int_{\{f_\omega>0\}}|U(\xi_\omega)|^2d\mu+\int_{\{f_\omega=0\}} |U(\xi_\omega)|^2d\mu \quad \text{ (by Equation~\eqref{U of x omega zero at f omega zero}})\\
  &=\int_{X}|U(\xi_\omega)|^2d\mu=\braket{U(\xi_\omega)}{U(\xi_\omega)}_{L_2(X,\mu)}\\&=\braket{\xi_\omega}{\xi_\omega}_{L_2(\mathscr{M},\tau)}=\omega(1)=1,
  \end{align*}
where in the last line we used Remark~\ref{existence of hphi,homega}.

Next, we focus on proving that  $\int_{X}f_\phi d\nu=1$. We begin by proving a second claim.\newline
{\bf{\underline{Claim 2:}}}
\begin{align}\label{eq: phi of s M of omega}
\phi(s_\mathscr{M}(\omega))=\int_{\{f_\omega>0\}}f_\phi d\nu.
\end{align}
For the left-hand side we have
\begin{align}\label{eq: equation 1 of prop 3.11}
 \phi(s_{M}(\omega))&=\braket{\xi_\phi}{s_\mathscr{M}(\omega)\xi_\phi} \quad \text{ (by Remark~\ref{existence of hphi,homega}})\notag \\
&=\braket{\xi_\phi}{\pi\left(s_\mathscr{M}(\omega)\right) (\xi_\phi)} \quad  \text{(by Equation~\eqref{Eq: Left Multiplication}) } \notag\\
 &=\|\pi\left(s_\mathscr{M}(\omega)\right)(\xi_\phi)\|_{2}^2 \quad \text{ (since } \pi\left(s_\mathscr{M}(\omega)\right) \text{ is a projection on } L_2\left(\mathscr{M},\tau\right) ) \notag\\
&=\|s_\mathscr{M}(\omega)\xi_\phi\|_{2}^2 \quad  \text{(by Equation~\eqref{Eq: Left Multiplication}). } \end{align}
Then, we  use the polar decomposition of $\overline{S_{\phi,\omega}}$ from Equation~\eqref{eq: polar decomp of S} to derive
\begin{align}\label{eq: equation 2 of prop 3.9}\overline{S_{\phi,\omega}}(\xi_\omega)=J\Delta_{\phi,\omega}^{\frac{1}{2}}(\xi_\omega) \Leftrightarrow s_\mathscr{M}(\omega)\xi_\phi=J\Delta_{\phi,\omega}^{\frac{1}{2}}(\xi_\omega)\end{align}
where the equivalence comes from the formula of $S_{\phi,\omega}$, which was introduced in Definition~\ref{relative modular operator}. We combine Equations~\eqref{eq: equation 1 of prop 3.11}, \eqref{eq: equation 2 of prop 3.9} and the fact that $J$ is isometric to obtain
\begin{align}\label{eq: prop 4.6 first equation of the proof}
\phi(s_{\mathscr{M}}(\omega))&=\|J\Delta_{\phi,\omega}^{\frac{1}{2}}(\xi_\omega)\|_{2}^2=\|\Delta_{\phi,\omega}^{\frac{1}{2}}(\xi_\omega)\|_{2}^2 \notag \\
&=\braket{\Delta_{\phi,\omega}^{\frac{1}{2}}( \xi_{\omega})}{\Delta_{\phi,\omega}^{\frac{1}{2}} (\xi_{\omega})}_{L_2(\mathscr{M},\tau)}
=\braket{U\Delta_{\phi,\omega}^{\frac{1}{2}} (\xi_{\omega})}{U\Delta_{\phi,\omega}^{\frac{1}{2}} (\xi_{\omega})}_{L_2(X,\mu)}.
\end{align}
We apply Corollary~\ref{thm:fDelta} for the square root function and infer  $U\Delta_{\phi,\omega}^{\frac{1}{2}}U^{-1}=M_{\sqrt{f_\phi w(f_\omega)}}.$ Equivalently, $U\Delta_{\phi,\omega}^{\frac{1}{2}}=M_{\sqrt{f_\phi w(f_\omega)}}U$ and by Equation~\eqref{eq: prop 4.6 first equation of the proof} we obtain
\begin{align*}
\phi\left(s_{M}(\omega)\right)&= \braket{M_{\sqrt{f_\phi w(f_\omega)}}U(\xi_\omega) }{M_{\sqrt{f_\phi w(f_\omega)}}U(\xi_\omega)}_{L_2(X,\mu)} \\
&=\int_{X} \overline{\sqrt{f_\phi w(f_\omega)}U(\xi_\omega)}\sqrt{f_\phi w(f_\omega)}U(\xi_\omega)d\mu=\int_{X} f_\phi w(f_\omega) |U(\xi_\omega)|^2d\mu \\
&=\int_{\{f_\omega >0\}} \frac{f_\phi}{f_\omega}|U(\xi_\omega)|^2d\mu \quad \text{(since } w(\lambda)=\frac{1}{\lambda} \text{ for } \lambda>0 \text{ and } w(0)=0 \text{)}  \\
&=\int_{\{f_\omega >0\}} f_\phi d\nu \quad \text{(by Equation~\eqref{eq:measure d nu})}.
\end{align*}
This finishes the proof of Claim 2.\newline
{\bf{\underline{Claim 3:}}}
\begin{align}\label{eq: phi of 1 - s m of omega}
\phi(1-s_\mathscr{M}(\omega))=\int_{\{f_\omega=0\}}f_\phi d\nu.
\end{align}
Let $\xi_\omega=\int_{[0,+\infty)} \lambda dE_{\omega}(\lambda)$ be the spectral decomposition of the vector representative $\xi_\omega$. Then,   $E_\omega\left([0,+\infty)\right)=1\Leftrightarrow E_\omega\left((0,+\infty)\right) +E_\omega(\{0\})=1\Leftrightarrow 1-E_\omega\left((0,+\infty)\right)=E_\omega(\{0\}) $ and by Lemma~\ref{lem: s M of omega}  we know that $E_\omega\left((0,+\infty)\right)=s_\mathscr{M}(\omega)$. Therefore,
\begin{align}\label{eq: prop 4.6 second equation of the proof}
    1-s_{\mathscr{M}}(\omega)=E_\omega(\{0\}).
\end{align}
We now focus on the term $\phi\left(1-s_\mathscr{M}(\omega)\right)$.
\begin{align}\label{eq: prop 4.6 third equation of the proof}
    \phi\left(1-s_\mathscr{M}(\omega)\right)&=\phi\left( E_\omega(\{0\})\right)=\tau\left(\xi_\phi^2   E_\omega(\{0\})\right) \quad  \text{(by Equations~\eqref{eq: prop 4.6 second equation of the proof} and~\eqref{eq: equation for vector representative})} \notag\\
    &=\braket{\xi_\phi}{\xi_\phi E_\omega(\{0\})}_{L_2(\mathscr{M},\tau)}=\braket{\xi_\phi}{ \pi' (E_\omega(\{0\}))( \xi_\phi) }_{L_2(\mathscr{M},\tau)}
\end{align}
where in the last equality we used Equation~\eqref{Eq: right multiplication} (since $\xi_\phi\in L_2(\mathscr{M},\tau)$ and   $E_\omega (\{0\})\in \mathscr{M}$).
We observe that $E_\omega(\{0\})=\int_{\{0\}} 1 dE_\omega(\lambda)= \mathscr{X}_{\{0\}} (\xi_\omega) $ and  apply the right multiplication $\pi'$ on both sides  to obtain 
\begin{align}\label{eq: prop 4.6 fourth equation of the proof}
    \pi'\left(E_\omega(\{0\})\right)&=\pi'\left(\mathscr{X}_{\{0\}} (\xi_\omega)\right)=\mathscr{X}_{\{0\}} \left(\pi'(\xi_\omega)\right)\text{ (by Equation~\eqref{Eq: pi of f of x})} \notag\\
    &=\mathscr{X}_{\{0\}} \left(U^{-1}M_{g_\omega}U\right) \quad \text{(since } U\pi'(\xi_\omega)U^{-1}=M_{g_\omega} \text{ by Equation~\eqref{Eq: multiplication form we use})} \notag\\ &= U^{-1}\mathscr{X}_{\{0\}}\left(M_{g_\omega}\right)U \quad \text{ (by Lemma~\ref{lem: applying the Borel functional calculus})} \notag\\
    &=U^{-1}M_{\mathscr{X}_{\{0\}} \left(g_\omega\right)}U   \quad \text{(by \cite[Example 5.3]{Schmudgen})} \notag\\
    &=U^{-1}M_{\mathscr{X}_{\{g_\omega=0\}}}U.
\end{align}

We combine Equations~\eqref{eq: prop 4.6 third equation of the proof}, ~\eqref{eq: prop 4.6 fourth equation of the proof} and infer \begin{align*}
     \phi(1-s_\mathscr{M}(\omega))&=\braket{\xi_\phi}{  U^{-1}M_{\mathscr{X}_{\{g_\omega=0\}}}U (\xi_\phi) }_{L_2(\mathscr{M},\tau)}=\braket{U(\xi_\phi)}{  M_{\mathscr{X}_{\{g_\omega=0\}}}U (\xi_\phi) }_{L_2(X,\mu)}\\
     &=\int_{X} \overline{U(\xi_\phi)} \mathscr{X}_{\{g_\omega=0\}} U(\xi_\phi) d\mu\\
     &=\int_{\{f_\omega=0\}} |U(\xi_\phi)|^2d\mu \quad \text{ (since } \{g_\omega=0\} =\{f_\omega=0\} ) \\
     &= \int_{\{f_\phi>0,f_\omega=0\}} |U(\xi_\phi)|^2d\mu+ \int_{\{f_\phi=0,f_\omega=0\}} |U(\xi_\phi)|^2d\mu\\
     &= \int_{\{f_\phi>0,f_\omega=0\}} |U(\xi_\phi)|^2d\mu \quad \text{ (by Claim 1)} \\
     &= \int_{\{f_\phi>0,f_\omega=0\}} f_\phi \frac{|U(\xi_\phi)|^2}{f_\phi }d\mu =\int_{\{f_\phi >0,f_\omega =0\}} f_{\phi}d\nu \quad \text{(by Equation~\eqref{eq:measure d nu})}\\
     &= \int_{\{f_\omega =0\}} f_{\phi}d\nu.
     \end{align*}
which concludes the proof of Claim 3.

We combine Equations~\eqref{eq: phi of s M of omega}, \eqref{eq: phi of 1 - s m of omega} to finish the proof of the proposition.
\begin{align*}
    \int_{X}f_\phi d\nu&=\int_{\{f_{\omega}>0\}}f_\phi d\nu+ \int_{\{f_\omega =0\}} f_{\phi}d\nu =\phi\left(s_\mathscr{M}(\omega)\right)+\phi\left(1-s_\mathscr{M}(\omega)\right)\\
    &=\phi\left(s_\mathscr{M}(\omega)+1-s_\mathscr{M}(\omega)\right)=\phi(1)=1.
\end{align*}
\end{proof}
We are now ready to  prove our main theorem, Theorem~\ref{maintheorem}.
\begin{proof}[Proof of Theorem~\ref{maintheorem}] 
Let $(X,\Sigma,\mu)$ be the measure space that we obtained in Theorem~\ref{prop: application of the multiplication form of spectral theorem} and $\nu$ be the measure we defined in Equation~\eqref{eq:measure d nu}. First, we prove that $(X,\Sigma,\nu)$ is a $\sigma$-finite measure space (which is required  in the definition of the classical $f$-divergence). It suffices to show that there exists a function $g$ on $X$ such that $g(x)>0$ for all $x\in X$ and $\int_{X} g d\nu<+\infty$. Let $f_\phi$, $f_\omega$ be the functions we obtained in Corollary~\ref{thm:fDelta}. We define the function $g:X\to \mathbb{R}$ by the formula
\[g(x):= \begin{cases} 
f_{\omega}(x) & \text{if }  x\in\{f_{\omega}> 0\}\\
f_{\phi}(x) & \text{if } x\in \{f_{\phi}> 0,f_{\omega}=0\}\\
\quad 1 & \text{if } x\in \{f_{\phi}=f_{\omega}=0\}\ .
\end{cases} \]
Obviously, $g(x)>0$ for all $x\in X$ and 
\begin{align*}
    \int_{X}gd\nu&=\int_{\{f_{\omega}> 0\}}f_\omega d\nu +\int_{\{f_{\phi}> 0,f_\omega=0\}}f_\phi d\nu+\int_{\{f_{\phi}= 0,f_\omega=0\}}1 d\nu\\
    &=\int_{X}f_\omega d\nu +\int_{\{f_{\phi}> 0,f_\omega=0\}}f_\phi d\nu+0 \quad \text{(by Equation~\eqref{eq:measure d nu})}\\ 
    &\leq \int_{X}f_\omega d\nu +\int_{X}f_\phi d\nu=1+1=2<+\infty  \quad \text{(by Proposition~\ref{prop: f phi,omega are distributions}).}
\end{align*}
Therefore, $(X,\Sigma,\nu)$ is a $\sigma$-finite measure space.

Let $\xi_\phi$, $\xi_\omega$ be the vector representatives of $\phi$, $\omega$ respectively and $\Delta_{\phi,\omega}$ the relative modular operator with respect to $\phi$ and $\omega$. We recall that by Equation~\eqref{eq: formula of the quantum f divergence},   the quantum $f$-divergence between $\phi$ and $\omega$ is defined as
\begin{align*}
S_f(\phi\|\omega):=\langle\langle\xi_\omega|f(\Delta_{\phi,\omega})(\xi_\omega) \rangle\rangle+f(0^{+})\omega(1-s_{\mathscr{M}}(\phi))+ f^{'}(+\infty)\phi(1-s_{\mathscr{M}}(\omega))
\end{align*}
following the notation in~\eqref{eq:convex-f-notations-1}. In  Theorem~\ref{lem: first term}, we proved that 
\begin{align}\label{eq: first term finally}
   \langle\langle\xi_\omega|f(\Delta_{\phi,\omega})(\xi_\omega) \rangle\rangle&=\int_{\{f_\phi,f_\omega>0\}} f\left(\frac{f_\phi}{f_\omega}\right) |U(\xi_\omega)|^2d\mu \notag \\
    &=\int_{\{f_\phi,f_\omega>0\}} f_\omega f\left(\frac{f_\phi}{f_\omega}\right) \frac{|U(\xi_\omega)|^2}{f_\omega}d\mu \notag \\
    &=\int_{\{f_\phi,f_\omega>0\}} f_\omega f\left(\frac{f_\phi}{f_\omega}\right)d\nu.
\end{align} 

We claim that 
\begin{align}\label{eq: 1.3 first equation of the proof}
\omega\left(1-s_{\mathscr{M}}(\phi)\right)=\int_{\{f_\phi=0\}} f_\omega d\nu.
\end{align}
In the proof of Proposition~\ref{prop: f phi,omega are distributions}, we showed that $\phi\left(1-s_{\mathscr{M}}(\omega)\right)=\int_{\{f_\omega=0\}} f_\phi d\nu$ (Claim 3 of the proof). The proof of Equation~\eqref{eq: 1.3 first equation of the proof} is almost identical  and we do not present it. One just needs to repeat the exact same steps as in the proof of Claim 3 of the previous proposition but swap $\phi$, $\omega$ and use the left multiplication $\pi$ instead of the right multiplication $\pi'$ (wherever $\pi'$ is used). This finishes the proof of Equation~\eqref{eq: 1.3 first equation of the proof}.

In the next steps, we finish the proof of the theorem. 
 \begin{align*}
S_f(\phi\|\omega)&=\langle\langle\xi_\omega|f(\Delta_{\phi,\omega})(\xi_\omega) \rangle\rangle+f(0^{+})\omega\left(1-s_{\mathscr{M}}(\phi)\right)+ f^{'}(+\infty)\phi\left(1-s_{\mathscr{M}}(\omega)\right) \\
  &= \int_{\{f_\phi,f_\omega>0\}} f_\omega f\left(\frac{f_\phi}{f_\omega}\right)d\nu+ f(0^{+})\int_{\{f_\phi =0\}} f_{\omega}d\nu+ f^{'}(+\infty) \int_{\{f_\omega =0\}} f_{\phi}d\nu\\
  &\text{ (by Equations~\eqref{eq: first term finally}, ~\eqref{eq: 1.3 first equation of the proof} and ~\eqref{eq: phi of 1 - s m of omega})}\\
  &=\int_{\{f_\phi,f_\omega>0\}} f_\omega f\left(\frac{f_\phi}{f_\omega}\right)d\nu+ f(0^{+})\int_{\{f_\phi =0,f_\omega >0\}} f_{\omega}d\nu+ f^{'}(+\infty) \int_{\{f_\phi >0, f_\omega =0\}} f_{\phi}d\nu\\
  &=  D_f(P_\phi \|P_\omega ) \quad \text{ (by Equation~\eqref{eq: formula of the classical divergence}).}
\end{align*}
 Therefore, the quantum  \texorpdfstring{\ensuremath{f}}{}-divergence between $\phi$ and $\omega$ is equal to the classical  \texorpdfstring{\ensuremath{f}}{}-divergence between $P_\phi$ and $P_\omega$, which are classical states by Proposition~\ref{prop: f phi,omega are distributions}.
\end{proof} 

\begin{remark}\label{rmk: no uniqueness of NS distributions}
    The classical states $P_\phi $ and $P_\omega $ 
that appear in the proof of Theorem~\ref{maintheorem} 
are called  {\bf{Nussbaum-Szko{\l}a distributions}} associated with the states $\phi$ and $\omega$. We would like to mention that the distributions we obtained are by no means unique. 

Indeed, in Theorem 3.7, we proved the existence of a measure space $(X, \Sigma,\mu)$, a unitary map  $U:L_2\left(\mathscr{M},\tau\right)\to L_2\left(X,\mu \right)$ and two Borel measurable functions $g_\phi$, $g_\omega: X\to [0,+\infty)$ that satisfy Equations~\eqref{Eq: multiplication form we use}, \eqref{eq: square root of Delta mult oper} where $\phi$ and $\omega$ are normal states on a semifinite von~Neumann algebra $\mathscr{M}$ with a faithful semifinite normal trace $\tau$.  It is clear that one could possibly construct different 
\begin{align*}
   &\text{ measure space  } (X',\Sigma',\mu'),\\
   &\text{ unitary map } V:L_2(\mathscr{M},\tau)\to L_2(X',\mu') \text{ and }\\
    &\text{ functions } g_\phi', g_\omega':X'\to [0,+\infty)
\end{align*}
  such that Equations~\eqref{Eq: multiplication form we use}, \eqref{eq: square root of Delta mult oper} hold. Then, the reasoning performed in Sections 3 and 4 for the functions $g_\phi$ and $g_\omega$ can be repeated word for word for the functions $g_\phi'$ and $g_\omega'$. That would create some other Nussbaum-Szko{\l}a distributions associated with the states $\phi$, $\omega$.

Nevertheless, there is some sort of uniqueness in our construction. The multiplication operators $M_{g_\phi}$ and $M_{g_\omega}$ that we obtained in Theorem~\ref{prop: application of the multiplication form of spectral theorem} are unique up to a unitary map. Indeed, following the notation of the previous paragraphs, observe that 
\begin{align*}
    VU^{-1}: L_2(X,\mu)\to L_2(X',\mu')
\end{align*}
is a unitary map and we prove that $M_{g_\phi'}=VU^{-1}M_{g_\phi}\left(VU^{-1}\right)^{-1}$. Indeed, 
\begin{align*}
VU^{-1}M_{g_\phi}\left(VU^{-1}\right)^{-1}&=VU^{-1}M_{g_\phi}UV^{-1}\\
&=V \pi\left(\xi_\phi\right) V^{-1} \quad \text{ (by Equation~\eqref{Eq: multiplication form we use})}\\
&=M_{g_\phi'} \quad \text{ (by Equation~\eqref{Eq: multiplication form we use} for the unitary } V).
\end{align*} 
Using exactly the same reasoning and the second part of Equation~\eqref{Eq: multiplication form we use}, one can easily prove that $M_{g_\omega'}=VU^{-1}M_{g_\omega}\left(VU^{-1}\right)^{-1}$. Thus, the multiplication operators $M_{g_\phi}$ and $M_{g_\omega}$ that were obtained in Theorem~\ref{prop: application of the multiplication form of spectral theorem} and which are the key to our construction, are unique up to a unitary map. 
\end{remark}

\section{Applications}
The usefulness of  Nussbaum-Szko{\l}a distributions  is providing a tool for directly obtaining quantum versions of several results which are already known for classical $f$-divergences. Our main Theorem~\ref{maintheorem}, extends the concept of Nussbaum-Szko{\l}a distributions for normal states on any semifinite von~Neumann algebra, while it was previously introduced only for normal states on the von~Neumann algebra $\mathbb{B}(\mathscr{H})$. 
We provide a few applications by demonstrating how one can use our Theorem~\ref{maintheorem} to derive inequalities for quantum $f$-divergences between states on a semifinite von~Neumann algebra starting from classical $f$-divergence inequalities  between classical probability distributions. 

Thoughout this section, we use the letters $\phi$,  $\omega$ to denote  normal quantum states on a semifinite von~Neumann algebra and the letters $P,\text{ }Q$ for classical probability distributions. Let $P_\phi$, $P_\omega$ be  Nussbaum-Szko{\l}a distributions associated with  the states $\phi$, $\omega$.  

First, we review some important examples of quantum $f$-divergences. 
 \begin{itemize}
     \item By \cite[Equation (1.2), also end of Remark 2.7]{Hiai-2018} (see also \cite[Equation (41)]{sason-verdu-2016} and \cite[Example 3]{Liese-Vajda-2006} for the classical case), the relative entropy $D(\phi \| \omega)$ is the quantum $f$-divergence $S_{f}(\phi\|\omega)$ where $f(t)=t\log(t)$ for $t\in (0,+\infty)$.

     We would like to mention that there are articles in the bibliography (for example \cite{araki1977relative}) which use the convex function $g(t)=-\log(t)$, for $t\in (0,+\infty)$, to define the relative entropy. This can be done  by reversing the order of the states $\phi$ and $\omega$ in the definition of the quantum $f$-divergence. In other words, 
     \begin{align}\label{eq: reversing the order}
         S_{t\log(t)} \left(\phi\| \omega \right)= S_{-log(t)} \left(\omega \| \phi\right).
     \end{align}
     Indeed, if $f(t)=t\log(t)$, for $t\in (0,+\infty)$, then we observe that 
     \begin{align*}\widetilde{f}(t):=tf\left(\frac{1}{t}\right)=-\log(t), \quad \text{ for } t\in (0,+\infty)
     \end{align*}
     and by \cite[Proposition 2.4]{Hiai-2018} we have $S_{f}(\phi\|\omega)=S_{\widetilde{f}}(\omega\|\phi)$. This proves Equation~\eqref{eq: reversing the order}.
     \item The $\chi^2$-divergence $\chi^2(\phi \| \omega)$ is the quantum $f$-divergence $S_{f}(\phi\|\omega)$ where $f(t)=(t-1)^2$.
     \item The total variation distance $|\phi-\omega|$ is the quantum $f$-divergence $S_{f}(\phi\|\omega)$ where $f(t)=|t-1|$.
 \end{itemize}

 By \cite[Inequality (5)]{sason-verdu-2016} we know that 
 \[ D(P\|Q)\leq \log\left(1+\chi^2(P\|Q)\right).\]
Therefore, 
\begin{align}\label{eq: 1st application equation}
    D(\phi\|\omega)&=  D(P_\phi \|P_\omega ) \quad \text{ (by Theorem~\ref{maintheorem})} \notag\\
                   &\leq \log\left(1+\chi^2(P_\phi \|P_\omega)\right) \notag\\
                   &= \log\left(1+\chi^2(\phi \|\omega)\right) \quad \text{ (by Theorem~\ref{maintheorem})}.
\end{align}

By \cite[Inequality (13)]{sason-verdu-2016} we know that 
 \[ D(P\|Q)\leq \frac{1}{2}\left(|P-Q|+\chi^2(P\|Q)\right)log(e).\]
Therefore, 
\begin{align}\label{eq: 2nd application equation}
    D(\phi\|\omega)&=D(P_\phi\|P_\omega )  \quad \text{ (by Theorem~\ref{maintheorem})} \notag\\
                   &\leq \frac{1}{2}\left(|P_\phi-P_\omega|+\chi^2(P_\phi\|P_\omega)\right)log(e) \notag \\
                    &= \frac{1}{2}\left(|\phi-\omega|+\chi^2(\phi\|\omega)\right)log(e) \quad \text{ (by Theorem~\ref{maintheorem})}.
\end{align}

In addition to Inequalities~\eqref{eq: 1st application equation} and \eqref{eq: 2nd application equation}, one can derive many more such inequalities. For example, one can find plenty of classical $f$-divergence inequalities in \cite{sason-verdu-2016} and use our Theorem~\ref{maintheorem} to generalize the inequalities to the quantum case for normal states on any semifinite von~Neumann algebra. 

Furthermore, one could also prove existing properties of the quantum $f$-divergence by applying Theorem~\ref{maintheorem} 
to their classical counterparts. For example, now we give a very simple proof of \cite[Proposition 2.4]{Hiai-2018} in the case of semifinite von~Neumann algebras. According to this proposition, $ S_f (\phi\|\omega)=S_{\widetilde{f}} (\omega\|\phi)$ for every convex function $f:(0,+\infty)\to \mathbb{R}$ where $\widetilde{f}(t):=tf\left(\frac{1}{t}\right)$, for $t\in (0,+\infty)$. Indeed, by \cite[Theorem 4]{Liese-Vajda-2006} we know that 
 \[ D_f (P\|Q)=D_{\widetilde{f}} (Q\|P).\]
Therefore, 
\begin{align*}
   S_f (\phi\|\omega)&=  D_f(P_\phi \|P_\omega ) \quad \text{ (by Theorem~\ref{maintheorem})}\\
                   &=D_{\widetilde{f}} (P_{\omega}\|P_{\phi})\\
                  &= S_{\widetilde{f}} (\omega\|\phi)  \quad \text{ (by Theorem~\ref{maintheorem})}.
\end{align*}

\section{Closing Remarks}
In this article, we prove that the quantum \texorpdfstring{\ensuremath{f}}{}-divergence between two normal states on a semifinite von~Neumann algebra is equal to the classical \texorpdfstring{\ensuremath{f}}{}-divergence between two corresponding classical states. The divergences of states on these algebras have not been explored in depth, but there is some physical motivation to study them.  For example, the  Type~$II_1$  factor appears in super JT-gravity \cite{Penington2024}. Furthermore, Type~$II$ algebras can appear in  some random matrix models \cite{BibEntry2005Feb}, in quantum field theory \cite{Longo2023} and in black hole physics \cite{Witten2022Oct}.

The main open problem that is related to our work is whether Theorem~\ref{maintheorem} still holds if we drop the assumption that the von~Neumann algebra is semifinite. Does Theorem~\ref{maintheorem} hold for normal states $\phi$, $\omega$ on a general von~Neumann algebra? It would not be surprising if the answer to this question was affirmative. The main obstacle that has not allowed us to prove the main theorem for any von~Neumann algebra is the lack of a convenient formula for the relative modular operator in the general case.

\subsection*{Acknowledgment}
The authors thank the anonymous referee for the valuable comments and suggestions that significantly improved this article.


\end{document}